\providecommand{\tabularnewline}{\\}
\newcounter{Acnt}
\newcounter{Acntp}
\newtheorem{Ap-item}[Acntp]{ }
\newtheorem{assumption}[Acnt]{Assumption}
\newtheorem{theorem}{Theorem}[section]
\newtheorem{proposition}{Proposition}[section]
\newtheorem{definition}[theorem]{Definition}
\newtheorem{lemma}{Lemma}[section]
\newtheorem*{as1-2}{Assumption 1$^{\prime}$}
\numberwithin{equation}{section}
\numberwithin{theorem}{section}
\begin{document}
\title{Estimation in a Generalization of Bivariate Probit Models with Dummy
Endogenous Regressors\thanks{The authors thank Jason Abrevaya, Xiaohong Chen, Stephen Donald, Brendan
Kline, Ed Vytlacil, and Haiqing Xu for valuable discussions. An earlier
version of this paper has been circulated under the title ``Sensitivity
Analysis in Triangular Systems of Equations with Binary Endogenous
Variables.''}}
\author{Sukjin Han\\
 Department of Economics\\
 University of Texas at Austin \\
 \href{mailto:sukjin.han@austin.utexas.edu}{sukjin.han@austin.utexas.edu}\and
Sungwon Lee\\
 Global Asia Institute \\
National University of Singapore \\
 \href{mailto:gails@nus.edu.sg}{gails@nus.edu.sg}}
\date{First Draft: March 19, 2015 \\
 This Draft: March 4, 2019}
\maketitle
\begin{abstract}
The purpose of this paper is to provide guidelines for empirical researchers
who use a class of bivariate threshold crossing models with dummy
endogenous variables. A common practice employed by the researchers
is the specification of the joint distribution of the unobservables
as a bivariate normal distribution, which results in a \textit{bivariate
probit model}. To address the problem of misspecification in this
practice, we propose an easy-to-implement semiparametric estimation
framework with parametric copula and nonparametric marginal distributions.
We establish asymptotic theory, including root-$n$ normality, for
the sieve maximum likelihood estimators that can be used to conduct
inference on the individual structural parameters and the average
treatment effect (ATE). In order to show the practical relevance of
the proposed framework, we conduct a sensitivity analysis via extensive
Monte Carlo simulation exercises. The results suggest that the estimates
of the parameters, especially the ATE, are sensitive to parametric
specification, while semiparametric estimation exhibits robustness
to underlying data generating processes. We then provide an empirical
illustration where we estimate the effect of health insurance on doctor
visits. In this paper, we also show that the absence of excluded instruments
may result in identification failure, in contrast to what some practitioners
believe. \\

\noindent \textit{Keywords:} Triangular threshold crossing model,
bivariate probit model, dummy endogenous regressors, binary response,
copula, exclusion restriction, sensitivity analysis.

\noindent \textit{JEL Classification Numbers:} C14, C35, C36.
\end{abstract}

\section{\label{sec:Introduction}Introduction}

The purpose of this paper is to provide guidelines for empirical researchers
who use a class of bivariate threshold crossing models with dummy
endogenous variables. This class of models is typically written as
follows. With the binary outcome $Y$ and the observed binary endogenous
treatment $D$, we consider
\begin{equation}
\begin{array}{r}
Y=\mathbf{1}[X^{\prime}\beta+\delta_{1}D-\varepsilon\geq0],\\
D=\mathbf{1}[X^{\prime}\alpha+Z^{\prime}\gamma-\nu\geq0],
\end{array}\label{eq:model}
\end{equation}
where $X$ denotes a vector of exogenous regressors that determine
both $Y$ and $D$, and $Z$ denotes a vector of exogenous regressors
that directly affect $D$, but not $Y$ (i.e., instruments for $D$).
Since $Y$ does not appear in the equation for $D$, this model forms
a triangular model, as a special case of a simultaneous equations
model, with the binary endogenous variables. In this paper, we investigate
the consequences of the common practices employed by empirical researchers
who use this class of models. As an important part of this investigation,
we conduct a sensitivity analysis on the specification of the joint
distribution of the unobservables $(\varepsilon,\nu)$. This is the
component of the model that practitioners have the least knowledge
about, and thus typically impose a parametric assumption. To address
the problem of misspecification, we propose a semiparametric estimation
framework with parametric copula and nonparametric marginal distributions.
The semiparametric specification is an attempt to ensure robustness
while achieving point identification and efficient estimation.

The parametric class of models \eqref{eq:model} includes the \textit{bivariate
probit model,} in which the joint distribution of $(\varepsilon,\nu)$
is assumed to be a bivariate normal distribution. This model has been
widely used in empirical research, including the works of \citet{Evans_Schwab_1995},
\citet{Neal_1997_catholic}, \citet{goldman2001eim}, \citet{altonji2005evaluation},
\citet{bhattacharya2006estimating}, \citet{rhine2006importance}
and \citet{marra2011estimation} to name a just few. The distributional
assumption in this model, however, is made out of convenience or convention,
and is hardly justified by underlying economic theory and thus susceptible
to misspecification. With binary endogenous regressors, the objects
of interest in model \eqref{eq:model} are the mean treatment parameters,
in addition to the individual structural parameters. Because the outcome
variable is also binary, the mean treatment parameters such as the
average treatment effect (ATE) are expressed as the differential between
the marginal distributions of $\varepsilon$. Therefore, the problem
of misspecification when estimating these treatment parameters can
be even more severe than that when estimating individual parameters.

To one extreme, a nonparametric joint distribution of $(\varepsilon,\nu)$
can be used in a bivariate threshold crossing model, as in \citet{shaikh2011partial}.
Their results, however, suggest that the ATE is only partially identified
in this fully flexible setting. Instead of sacrificing point identification,
we impose a parametric assumption on the dependence structure between
the unobservables using copula functions that are known up to a scalar
parameter. At the same time, in order to ensure robustness, we allow
the marginal distribution of $\varepsilon$ (and $\nu$), which is
involved in the calculation of the ATE, to be unspecified. Our class
of models encompasses both parametric and semiparametric models with
parametric copula and either parametric or nonparametric marginal
distributions. This broad range of models allows us to conduct a sensitivity
analysis on the specification of the joint distribution of $(\varepsilon,\nu)$.

The identification of the individual parameters and the ATE in this
class of models is established in \citet[hereafter, HV17]{HV17}.
They show that when the copula function for $(\varepsilon,\nu)$ satisfies
a certain stochastic ordering, identification is achieved in both
parametric and semiparametric models under an exclusion restriction
and mild support conditions. Building on these results, we consider
estimation and inference in the same setting. For the semiparametric
class of models \eqref{eq:model} with parametric copula and nonparametric
marginal distributions, the likelihood contains infinite-dimensional
parameters (i.e., the unknown marginal distributions). To estimate
this model, we consider the sieve maximum likelihood (ML) estimation
method for the finite- and infinite-dimensional parameters of the
model, as well as their functionals. The estimation of the parametric
model, on the other hand, is within the standard ML framework.

The contributions of this paper can be summarized as follows. Through
these contributions, this paper is intended to provide a guideline
to empirical researchers. First, we establish the asymptotic theory
for the sieve ML estimators in a class of semiparametric copula-based
models. This result can be used to conduct inference on the functionals
of the finite- and infinite-dimensional parameters, such as inference
on the individual structural parameters and the ATE. We show that
the sieve ML estimators are consistent and that their smooth functionals
are root-$n$ asymptotically normal.

Second, in order to show the practical relevance of the theoretical
results for empirical researchers, we conduct a sensitivity analysis
via extensive Monte Carlo simulation exercises. We find that the parametric
ML estimates, especially those for the ATE, can be highly sensitive
to the misspecification of the marginal distributions of the unobservables.
On the other hand, the sieve ML estimates perform well in terms of
the mean squared error (MSE) as they are robust to the underlying
data generating process. Moreover, their performance is comparable
to that of the parametric estimates under a correct specification.
We also show that copula misspecification does not have a substantial
effect in estimation, as long as the true copula is within the stochastic
ordering class of the identification. As copula misspecification is
a problem common to both parametric and semiparametric models considered
in this paper, our sensitivity analysis suggests that a semiparametric
consideration may be more preferable in estimation and inference.

Third, we provide an empirical illustration of the sieve estimation
and  the sensitivity analysis of this paper. We estimate the effect
of health insurance on decisions to visit doctors using the the Medical
Expenditure Panel Survey data combined with the National Compensation
Survey data by matching industry types. We compare the estimates of
parametric and semiparametric bivariate threshold crossing models
with the Gaussian copula. We show that the estimates differ, especially
so for the estimated ATE's, which suggest the misspecification of
the marginal distribution of the unobservables, consistent with the
simulation results. In other words, the estimates of the bivariate
probit model can be misleading in this example.

Fourth, we formally show that identification may fail without the
exclusion restriction, in contrast to the findings of \citet{wilde2000identification}.
The bivariate probit model is sometimes used in applied work without
instruments (e.g., \citet{white_wolaver2003} and \citet{rhine2006importance}).
We show, however, that this restriction is not only sufficient but
also necessary for identification in parametric and semiparametric
models when there is a single binary exogenous variable common to
both equations. We also show that under joint normality of the unobservables,
the parameters are, at best, weakly identified when there are common
(and possibly continuous) exogenous variables. \footnote{HV17 only show the sufficiency of this restriction for identification.
\citet{mourifie2014note} show the necessity of the restriction, but
their argument does not exploit all information available in the model;
see Section 2.2 of the present paper for further details.} We also note that another source of identification failure is the
absence of restrictions on the dependence structure of the unobservables,
as mentioned above.

The sieve estimation method is a useful nonparametric estimation framework
that allows for a flexible specification, while guaranteeing the tractability
of the estimation problem; see \citet{chen2007large} for a survey
of sieve estimation in semi-nonparametric models. The estimation method
is also easy to implement in practice. The sieve ML estimation has
been used in various contexts: \citet[hereafter, CFT06]{cft06} consider
the sieve estimation of semiparametric multivariate distributions
that are modeled using parametric copulas; \citet{bierens2008semi}
applies the estimation method to the mixed proportional hazard model;
and \citet{hu_schennach2008} and \citet{chen2009nonparametric} use
the method to estimate nonparametric models with non-classical measurement
errors. The asymptotic theory developed in this paper is based on
the results established in the sieve extremum estimation literature
(e.g., CFT06; \citet{chen2007large}; \citet{bierens2014consistency}).
A semiparametric version of bivariate threshold crossing models is
also considered in \citet{marra2011estimation} and \citet{ieva2014semiparametric}.
In contrast to our setting, however, they introduce flexibility for
the index function of the threshold, and not for the distribution
of the unobservables.

The remainder of this paper is organized as follows. The next section
reviews the identification results of HV17, and then discusses the
lack of identification in the absence of exclusion restrictions and
in the absence of restrictions on the dependence structure of the
unobservables. Section \ref{sec:Estimation} introduces the sieve
ML estimation framework for the semiparametric class of models defined
in \eqref{eq:model}, and Section \ref{sec:Asymptotic} establishes
the large sample theory for the sieve ML estimators. The sensitivity
analysis is conducted in Section \ref{sec:Monte_Carlo} by investigating
the finite sample performance of the parametric ML and sieve ML estimates
under various specifications. Section \ref{sec:Empirical-Example}
presents the empirical example, and Section \ref{sec:Conclusions}
concludes.

\section{\label{sec:Identification}Identification and Failure of Identification}

\subsection{Identification Results in Han and Vytlacil (2017)}

We first summarize the identification results in HV17. In model \eqref{eq:model},
let $\underset{(k+1)\times1}{X}\equiv(1,X_{1},...,X_{k})^{\prime}$
and $\underset{l\times1}{Z}\equiv(Z_{1},...,Z_{l})^{\prime}$, and
conformably, let $\alpha\equiv(\alpha_{0},\alpha_{1},...,\alpha_{k})'$,
$\beta\equiv(\beta_{0},\beta_{1},...,\beta_{k})'$, and $\gamma\equiv(\gamma_{1},\gamma_{2},...,\gamma_{l})'$.

\begin{assumption}\label{as:independence} $X$ and $Z$ satisfy
that $(X,Z)\perp(\varepsilon,\nu)$, where ``$\perp$'' denotes
statistical independence. \end{assumption}

\begin{assumption}\label{assumption:xz}$(X^{\prime},Z^{\prime})$
does not lie in a proper linear subspace of $\mathbb{R}^{k+l}$ a.s.\footnote{A proper linear subspace of $\mathbb{R}^{k+l}$ is a linear subspace
with a dimension strictly less than $k+l$. The assumption is that
if $M$ is a proper linear subspace of $\mathbb{R}^{k+l}$, then $\Pr[(X^{\prime},Z^{\prime})\in M]<1$.}\end{assumption}

\begin{assumption}\label{as:joint_dist} There exists a copula function
$C:(0,1)^{2}\rightarrow(0,1)$ such that the joint distribution $F_{\varepsilon\nu}$
of $(\varepsilon,\nu)$ satisfies $F_{\varepsilon\nu}(\varepsilon,\nu)=C(F_{\varepsilon}(\varepsilon),F_{\nu}(\nu))$,
where $F_{\varepsilon}$ and $F_{\nu}$ are the marginal distributions
of $\varepsilon$ and $\nu$, respectively, that are strictly increasing
and absolutely continuous with respect to Lebesgue measure.\footnote{ Sklar's theorem (e.g., \citet{nelsen1999introduction}) guarantees
the existence of such a copula, which is, in fact, unique because
$F_{\varepsilon}$ and $F_{\nu}$ are continuous.}\end{assumption}

\begin{assumption}\label{as:normalize} As scale and location normalizations,
$\alpha_{1}=\beta_{1}=1$ and $\alpha_{0}=\beta_{0}=0$. \end{assumption}

A model with alternative scale and location normalizations, $Var(\varepsilon)=Var(\nu)=1$
and $E[\varepsilon]=E[\nu]=0$, can be viewed as a reparametrized
version of the model with the normalizations given in Assumption \ref{as:normalize};
see, for example, the reparametrization \eqref{eq:repara} below.
For $x\in\text{supp}(X)$ and $z\in\text{supp}(Z)$, write a one-to-one
map (by Assumption \ref{as:joint_dist}) as
\begin{align}
s_{xz} & \equiv F_{\nu}(x^{\prime}\alpha+z^{\prime}\gamma),\quad r_{0,x}\equiv F_{\varepsilon}(x^{\prime}\beta),\quad r_{1,x}\equiv F_{\varepsilon}(x^{\prime}\beta+\delta_{1}).\label{eq:repara}
\end{align}
Take $(x,z)$ and $(x,\tilde{z})$, for some $x\in\text{supp}(X|Z=z)\cap\text{supp}(X|Z=\tilde{z})$,
where $\mbox{supp}(X|Z)$ is the conditional support of $X$, given
$Z$. Then, by Assumption \ref{as:independence}, model \eqref{eq:model}
implies that the fitted probabilities are written as 
\begin{equation}
\begin{array}{cc}
p_{11,xz}=C(r_{1,x},s_{xz}), & \quad p_{11,x\tilde{z}}=C(r_{1,x},s_{x\tilde{z}}),\\
p_{10,xz}=r_{0,x}-C(r_{0,x},s_{xz}), & \quad p_{10,x\tilde{z}}=r_{0,x}-C(r_{0,x},s_{x\tilde{z}}),\\
p_{01,xz}=s_{xz}-C(r_{1,x},s_{xz}), & \quad p_{01,x\tilde{z}}=s_{x\tilde{z}}-C(r_{1,x},s_{x\tilde{z}}),
\end{array}\label{eq:fitted}
\end{equation}
where $p_{yd,xz}\equiv\Pr[Y=y,D=d|X=x,Z=z]$ for $(y,d)\in\{0,1\}^{2}$.
The equation \eqref{eq:fitted} serves as the basis for the identification
and estimation of the model. Depending upon whether one is willing
to impose an additional assumption on the dependence structure of
the unobservables $(\varepsilon,\nu)$ via $C(\cdot,\cdot)$, the
underlying parameters of the model are either point identified or
partially identified.

We first consider point identification. The results for point identification
can be found in HV17, which we adapt here given Assumption \ref{as:normalize}.
The additional dependence structure can be characterized in terms
of the stochastic ordering of the copula parametrized with a scalar
parameter.

\begin{definition}[Strictly More SI or Less SD]\label{def_moreSI}Let
$C(u_{2}|u_{1})$ and $\tilde{C}(u_{2}|u_{1})$ be conditional copulas,
for which $1-C(u_{2}|u_{1})$ and $1-\tilde{C}(u_{2}|u_{1})$ are
either increasing or decreasing in $u_{1}$ for all $u_{2}$. Such
copulas are referred to as stochastically increasing (SI) or stochastically
decreasing (SD), respectively. Then, $\tilde{C}$ is strictly more
SI (or less SD) than $C$ if $\psi(u_{1},u_{2})\equiv\tilde{C}^{-1}(C(u_{2}|u_{1})|u_{1})$
is strictly increasing in $u_{1}$,\footnote{Note that $\psi(u_{1},u_{2})$ is increasing in $u_{2}$ by definition.}
which is denoted as $C\prec_{S}\tilde{C}$.\end{definition}

This ordering is equivalent to having a ranking in terms of the first
order stochastic dominance. Let $(U_{1},U_{2})\sim C$ and $(\tilde{U}_{1},\tilde{U}_{2})\sim\tilde{C}$.
When $\tilde{C}$ is strictly more SI (less SD) than $C$ is, then
$\Pr[\tilde{U}_{2}>u_{2}|\tilde{U}_{1}=u_{1}]$ increases even more
than $\Pr[U_{2}>u_{2}|U_{1}=u_{1}]$ does as $u_{1}$ increases.\footnote{In the statistics literature, the SI dependence ordering is also referred
to as the (strictly) ``more regression dependent'' or ``more monotone
regression dependent'' ordering; see \citet{Joe1997} for details.}

\begin{assumption}\label{as_copula}The copula in Assumption \ref{as:joint_dist}
satisfies $C(\cdot,\cdot)=C(\cdot,\cdot;\rho)$ with a scalar dependence
parameter $\rho\in\Omega$, is twice differentiable in $u_{1}$, $u_{2}$
and $\rho$, and satisfies 
\begin{equation}
C(u_{1}|u_{2};\rho_{1})\prec_{S}C(u_{1}|u_{2};\rho_{2})\mbox{ for any }\rho_{1}<\rho_{2}.\label{ssi}
\end{equation}
\end{assumption}

The meaning of the last part of this assumption is that the copula
is ordered in $\rho$ in the sense of the stochastic ordering defined
above. This requirement defines the class of copulas that we allow
for identification. Many well-known copulas satisfy \eqref{ssi}:
the normal copula, Plackett copula, Frank copula, Clayton copula,
among many others; see HV17 for the full list of copulas and their
expressions. Under these assumptions, we first discuss the identification
in a fully parametric model. 

\begin{assumption}\label{as:para_dist}$F_{\varepsilon}$ and $F_{\nu}$
are \textit{known} up to means $\mu\equiv(\mu_{\varepsilon},\mu_{\nu})$
and variances $\sigma^{2}\equiv(\sigma_{\varepsilon}^{2},\sigma_{\nu}^{2})$.\end{assumption}

Given this assumption, $F_{\nu}(\nu)=F_{\tilde{\nu}}(\tilde{\nu})$
and $F_{\varepsilon}(\varepsilon)=F_{\tilde{\varepsilon}}(\tilde{\varepsilon})$,
where $F_{\tilde{\nu}}$ and $F_{\tilde{\varepsilon}}$ are the distributions
of $\tilde{\nu}\equiv(\nu-\mu_{\nu})/\sigma_{\nu}$ and $\tilde{\varepsilon}\equiv(\varepsilon-\mu_{\varepsilon})/\sigma_{\varepsilon}$,
respectively. Define
\[
\mathcal{X}\equiv\bigcup_{\substack{z^{\prime}\gamma\neq\tilde{z}^{\prime}\gamma\\
z,\tilde{z}\in\text{supp}(Z)
}
}\text{supp}(X|Z=z)\cap\text{supp}(X|Z=\tilde{z}).
\]

\begin{theorem}\label{thm_full} In model \eqref{eq:model}, suppose
Assumptions \ref{as:independence}\textendash \ref{as:para_dist}
hold. Then, $(\alpha{}^{\prime},\beta{}^{\prime},\delta_{1},\gamma,\rho,\mu,\sigma)$
are point identified in an open and convex parameter space if (i)
$\gamma$ is a nonzero vector, and (ii) $\mathcal{X}$ does not lie
in a proper linear subspace of $\mathbb{R}^{k}$ a.s. \end{theorem}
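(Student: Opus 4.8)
The plan is to work from the fitted-probability system \eqref{eq:fitted}, first identifying the selection ($D$) equation, then the copula dependence parameter together with the intermediate objects $r_{0,x},r_{1,x}$, and finally the outcome ($Y$) equation. I would begin by observing that $\Pr[D=1\mid X=x,Z=z]=p_{11,xz}+p_{01,xz}=s_{xz}$, since the two copula terms cancel; hence $s_{xz}=F_{\nu}(x^{\prime}\alpha+z^{\prime}\gamma)$ is identified for every $(x,z)$ in the support. Because Assumption \ref{as:para_dist} gives $F_{\nu}(\cdot)=F_{\tilde{\nu}}((\cdot-\mu_{\nu})/\sigma_{\nu})$ with $F_{\tilde{\nu}}$ known, applying the known inverse yields $(x^{\prime}\alpha+z^{\prime}\gamma-\mu_{\nu})/\sigma_{\nu}$ as an identified linear function of $(x,z)$. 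Assumption \ref{assumption:xz} supplies enough variation in $(X^{\prime},Z^{\prime})$ to identify every coefficient of this linear function, and the scale normalization $\alpha_{1}=1$ from Assumption \ref{as:normalize} then pins down $\sigma_{\nu}$, and hence $\alpha$, $\gamma$, and $\mu_{\nu}$.

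The central and hardest step is to identify the copula parameter $\rho$ together with $r_{0,x}$ and $r_{1,x}$ at each $x\in\mathcal{X}$. Because condition (i) imposes $\gamma\neq0$, for $x\in\mathcal{X}$ I can choose $z,\tilde{z}$ with $z^{\prime}\gamma\neq\tilde{z}^{\prime}\gamma$, which produces distinct identified values $s_{xz}\neq s_{x\tilde{z}}$ while holding $r_{0,x}$ and $r_{1,x}$ fixed. The remaining equations of \eqref{eq:fitted}, namely $p_{11,xz}=C(r_{1,x},s_{xz};\rho)$ and $p_{10,xz}=r_{0,x}-C(r_{0,x},s_{xz};\rho)$, must then be solved jointly for $(\rho,r_{0,x},r_{1,x})$. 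Here I would invoke the identification argument of HV17: the strict stochastic ordering \eqref{ssi} of Assumption \ref{as_copula} makes the conditional copula strictly monotone in $\rho$ in the sense of Definition \ref{def_moreSI}, so two distinct triples cannot reproduce the observed values of $C(r_{1},\cdot\,;\rho)$ and $r_{0}-C(r_{0},\cdot\,;\rho)$ at the (at least two) identified points $s_{xz}$. I expect this to be the main obstacle, since it is precisely where the exclusion restriction—supplying variation in $s$ at fixed $x$—and the single-crossing consequences of the copula ordering must combine; openness and convexity of the parameter space are what let me upgrade the local injectivity implied by strict monotonicity in $\rho$ to global point identification.

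Finally, with $r_{0,x}$ and $r_{1,x}$ identified on $\mathcal{X}$, I would invert the parametric marginal to recover the outcome equation. Writing $r_{0,x}=F_{\tilde{\varepsilon}}((x^{\prime}\beta-\mu_{\varepsilon})/\sigma_{\varepsilon})$ and $r_{1,x}=F_{\tilde{\varepsilon}}((x^{\prime}\beta+\delta_{1}-\mu_{\varepsilon})/\sigma_{\varepsilon})$ and applying the known $F_{\tilde{\varepsilon}}^{-1}$ gives the identified linear index $(x^{\prime}\beta-\mu_{\varepsilon})/\sigma_{\varepsilon}$ and its shift by $\delta_{1}/\sigma_{\varepsilon}$. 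Condition (ii), that $\mathcal{X}$ does not lie in a proper linear subspace, identifies $\beta/\sigma_{\varepsilon}$ and $\mu_{\varepsilon}/\sigma_{\varepsilon}$; the normalization $\beta_{1}=1$ then fixes $\sigma_{\varepsilon}$, and therefore $\beta$, $\mu_{\varepsilon}$, and $\delta_{1}$. Collecting the pieces from the three steps yields point identification of $(\alpha^{\prime},\beta^{\prime},\delta_{1},\gamma,\rho,\mu,\sigma)$, completing the argument.
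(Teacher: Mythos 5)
Your proposal is correct and follows essentially the same route as the paper: the paper's own proof of Theorem \ref{thm_full} is stated to be a minor modification of the proof of Theorem 5.1 in HV17, whose structure is exactly your three steps (recover $s_{xz}$ from the $D$-equation and invert the known standardized marginal to get $(\alpha,\gamma,\mu_{\nu},\sigma_{\nu})$; use the exclusion restriction to generate two values $s_{xz}\neq s_{x\tilde{z}}$ at fixed $x$ and invoke the stochastic-ordering argument to pin down $(\rho,r_{0,x},r_{1,x})$; then invert $F_{\tilde{\varepsilon}}$ and apply the normalizations and condition (ii) to recover $(\beta,\delta_{1},\mu_{\varepsilon},\sigma_{\varepsilon})$). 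One small caveat: the injectivity in $\rho$ is global from the outset, since Assumption \ref{as_copula} strictly orders the copulas at \emph{any} two distinct values of $\rho$ (and, via Lemma \ref{lem:si_to_c}, implies $C_{\rho}>0$), so your remark that openness and convexity of the parameter space serve to upgrade local injectivity to global identification mischaracterizes their role — they are maintained properties of the parameter space rather than the mechanism of the argument.
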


The proof of this theorem is a minor modification of the proof of
Theorem 5.1 in HV17.

Although the parametric structure on the copula is necessary for the
point identification of the parameters, HV17 show that the parametric
assumption for $F_{\varepsilon}$ and $F_{\nu}$ are not necessary.
In addition, if we make a large support assumption, we can also identify
the nonparametric marginal distributions $F_{\varepsilon}$ and $F_{\nu}$.

\begin{assumption}\label{as:add_support}(i) The distributions of
$X_{j}$ (for $1\leq j\leq k$) and $Z_{j}$ (for $1\leq j\leq l$)
are absolutely continuous with respect to Lebesgue measure; (ii) There
exists at least one element $X_{j}$ in $X$ such that its support
conditional on $(X_{1},...,X_{j-1},X_{j+1},...,X_{k})$ is $\mathbb{R}$
and $\alpha_{j}\neq0$ and $\beta_{j}\neq0$, where, without loss
of generality, we let $j=1$. \end{assumption}

\begin{theorem}\label{thm_full_unknown} In model \eqref{eq:model},
suppose Assumptions \ref{as:independence}\textendash \ref{as_copula},
and \ref{as:add_support}(i) hold. Then $(\alpha{}^{\prime},\beta{}^{\prime},\delta_{1},\gamma,\rho)$
are point identified in an open and convex parameter space if (i)
$\gamma$ is a nonzero vector; and (ii) $\mathcal{X}$ does not lie
in a proper linear subspace of $\mathbb{R}^{k}$ a.s. In addition,
if Assumption \ref{as:add_support}(ii) holds, $F_{\varepsilon}(\cdot)$
and $F_{\nu}(\cdot)$ are identified up to additive constants.\end{theorem}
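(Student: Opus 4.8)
The plan is to mirror the architecture of the proof of Theorem \ref{thm_full}, keeping the steps that never invoke the parametric form of $F_{\varepsilon}$ and $F_{\nu}$ and appending an argument that uses the large-support condition to recover the marginals themselves; throughout, every step reads off the system \eqref{eq:fitted}, whose left-hand sides are identified from the conditional law of $(Y,D)$ given $(X,Z)$. First I would identify the dependence parameter together with the reparametrized quantities in \eqref{eq:repara}, pointwise in $x$. Since $p_{11,xz}+p_{01,xz}=\Pr[D=1\mid X=x,Z=z]=s_{xz}$, the value $s_{xz}=F_{\nu}(x'\alpha+z'\gamma)$ is immediately identified for every $(x,z)$. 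Fixing $x\in\mathcal{X}$ and choosing $z,\tilde{z}$ with $z'\gamma\neq\tilde{z}'\gamma$ (available because $\gamma\neq0$, and yielding $s_{xz}\neq s_{x\tilde{z}}$ by the strict monotonicity of $F_{\nu}$ in Assumption \ref{as:joint_dist}), the two $(Y,D)=(1,1)$ equations read $p_{11,xz}=C(r_{1,x},s_{xz};\rho)$ and $p_{11,x\tilde{z}}=C(r_{1,x},s_{x\tilde{z}};\rho)$. The key claim, and the only place where the dependence restriction does essential work, is that the stochastic ordering of Assumption \ref{as_copula} and Definition \ref{def_moreSI} makes the map $(r,\rho)\mapsto\bigl(C(r,s_{xz};\rho),C(r,s_{x\tilde{z}};\rho)\bigr)$ injective, so that the observed pair $(p_{11,xz},p_{11,x\tilde{z}})$ pins down $(r_{1,x},\rho)$ uniquely. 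This identifies the global parameter $\rho$ and the value $r_{1,x}$; with $\rho$ in hand, $r_{0,x}$ follows from $p_{10,xz}=r_{0,x}-C(r_{0,x},s_{xz};\rho)$, whose right-hand side is strictly increasing, hence invertible, in $r_{0,x}$.

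With $r_{0,x}=F_{\varepsilon}(x'\beta)$, $r_{1,x}=F_{\varepsilon}(x'\beta+\delta_{1})$, and $s_{xz}=F_{\nu}(x'\alpha+z'\gamma)$ identified pointwise, I would recover the finite-dimensional coefficients by a single-index argument that relies on monotonicity and rank rather than on any distributional form. Strict monotonicity of $F_{\nu}$ makes the level sets of $s_{xz}$ coincide with those of the index $x'\alpha+z'\gamma$, so Assumption \ref{assumption:xz} identifies $(\alpha',\gamma')$ up to scale, which is fixed by $\alpha_{1}=1$ and $\alpha_{0}=0$ in Assumption \ref{as:normalize}; the same reasoning applied to $r_{0,x}$, now using condition (ii) that $\mathcal{X}$ does not lie in a proper linear subspace of $\mathbb{R}^{k}$, identifies $\beta$. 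Finally $\delta_{1}$ is recovered by level matching: choosing $x,\tilde{x}$ with $r_{1,x}=r_{0,\tilde{x}}$ forces $x'\beta+\delta_{1}=\tilde{x}'\beta$ by strict monotonicity, so $\delta_{1}=(\tilde{x}-x)'\beta$ is pinned down from the already-identified $\beta$ without any knowledge of the shape of $F_{\varepsilon}$.

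For the second assertion, the steps above already identify $F_{\varepsilon}$ on the attainable index set $\{x'\beta:x\in\mathcal{X}\}$ and $F_{\nu}$ on the corresponding set of values of $x'\alpha+z'\gamma$. Assumption \ref{as:add_support}(ii) supplies a coordinate $X_{1}$ whose conditional support is $\mathbb{R}$ with $\beta_{1},\alpha_{1}\neq0$; varying $X_{1}$ over $\mathbb{R}$ with the remaining covariates held fixed drives each index over the entire line, so $F_{\varepsilon}(\cdot)$ and $F_{\nu}(\cdot)$ are recovered as functions on all of $\mathbb{R}$. The only residual indeterminacy is in location, since replacing $(\beta_{0},F_{\varepsilon}(\cdot))$ by $(\beta_{0}+c,F_{\varepsilon}(\cdot-c))$ leaves \eqref{eq:fitted} unchanged; the data therefore fix each marginal only up to an additive constant in its argument, which is exactly the sense claimed and which the normalization $\alpha_{0}=\beta_{0}=0$ merely selects.

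I expect the injectivity step in the first paragraph to be the main obstacle, as it is the only part that genuinely departs from a textbook single-index argument and the only part where the restriction on the dependence structure is indispensable. The delicate point is to translate the ordering $\prec_{S}$ of Definition \ref{def_moreSI} into the required injectivity (equivalently, a single-crossing property of $s\mapsto C(r,s;\rho)$ across distinct $(r,\rho)$); this can be imported from HV17, and the only adaptation relative to Theorem \ref{thm_full} is that $r_{0,x}$ and $r_{1,x}$ are no longer tethered to a known parametric marginal, so the subsequent coefficient- and marginal-recovery steps must rest entirely on strict monotonicity and the support conditions.
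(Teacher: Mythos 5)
Your reconstruction follows the same route as the paper's source argument: the paper gives no self-contained proof of Theorem \ref{thm_full_unknown} (as with Theorem \ref{thm_full}, it defers to HV17), and your steps---reading $s_{xz}$ off the $D$-marginal, pinning down $(r_{1,x},\rho)$ jointly from the two $p_{11}$ equations in \eqref{eq:fitted} via the stochastic-ordering injectivity lemma, inverting the strictly increasing map $r\mapsto r-C(r,s_{xz};\rho)$ to get $r_{0,x}$, then single-index-plus-normalization arguments for $(\alpha,\gamma,\beta)$ and a large-support argument for the marginals---are exactly the HV17 architecture. Importing the injectivity lemma rather than re-proving it is consistent with how the paper itself handles this result.

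There is, however, a genuine gap in your recovery of $\delta_{1}$. You ``choose $x,\tilde{x}$ with $r_{1,x}=r_{0,\tilde{x}}$,'' but such a pair exists only if the value sets $S_{0}+\delta_{1}$ and $S_{0}$ overlap, where $S_{0}\equiv\{x^{\prime}\beta:x\in\text{supp}(X)\}$. The first part of the theorem assumes only Assumption \ref{as:add_support}(i) (absolute continuity), which permits bounded regressor support (e.g., uniform on $[0,1]^{k}$), so $S_{0}$ can be bounded and $S_{0}\cap(S_{0}+\delta_{1})=\emptyset$ whenever $|\delta_{1}|$ exceeds the diameter of $S_{0}$. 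In that case level matching has nothing to match, and no alternative argument can rescue point identification: the data reveal only $F_{\varepsilon}$ on $S_{0}$ and the composite $x\mapsto F_{\varepsilon}(x^{\prime}\beta+\delta_{1})$, so one can take $\delta_{1}^{*}\neq\delta_{1}$ and a strictly increasing, continuous $F_{\varepsilon}^{*}$ that agrees with $F_{\varepsilon}$ on $S_{0}$, equals $F_{\varepsilon}(\cdot-(\delta_{1}^{*}-\delta_{1}))$ on $S_{0}+\delta_{1}^{*}$, and interpolates monotonically in the gap; this structure generates identical fitted probabilities, so $\delta_{1}$ is not pinned down. You should therefore either state the overlap requirement explicitly as a condition your argument uses, or invoke Assumption \ref{as:add_support}(ii) already at this step (under large support $S_{0}=\mathbb{R}$ and a matching pair always exists). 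Note that this tension sits in the theorem statement itself, which attributes $\delta_{1}$-identification to \ref{as:add_support}(i) alone, so it deserves to be flagged rather than asserted away. The remaining steps---coefficients from level sets plus Assumption \ref{as:normalize} (which also quietly uses the continuity in \ref{as:add_support}(i)), and the marginals identified up to an additive shift of their argument under \ref{as:add_support}(ii)---are correct, including your reading of what ``up to additive constants'' means relative to the location normalization.
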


An interesting function of the underlying parameters that are point
identified under the parametric and semiparametric distributional
assumptions is the conditional ATE:
\begin{align}
ATE(x) & =E[Y_{1}-Y_{0}|X=x]=F_{\varepsilon}(x^{\prime}\beta+\delta_{1})-F_{\varepsilon}(x^{\prime}\beta).\label{eq:CATE}
\end{align}

\subsection{Extension of Han and Vytlacil (2017): Identification under Conditional
Independence}

The identification analysis of \citet{HV17} relies on the full independence
assumption (Assumption \ref{as:independence}) for $(X,Z)$. The analysis,
however, can be easily extended to a case where conditional independence
is alternatively assumed. Since this is a more empirically relevant
situation, we explore this case in detail here. In the empirical section
below, we impose the conditional independence. Let $W$ be a vector
of (potentially endogenous) covariates in $\text{supp}(W)$.

\begin{as1-2}\label{indep2}$X$ and $Z$ satisfy that $(X,Z)\perp(\varepsilon,\nu)|W$.
\end{as1-2}

Similarly, we modify Assumptions \ref{assumption:xz}\textendash \ref{as:joint_dist},
\ref{as_copula}\textendash \ref{as:add_support} accordingly. Then
the following theorems immediately hold by applying the same proof
strategies as in Theorems \ref{thm_full} and \ref{thm_full_unknown}.
Let $C_{w}(u_{1},u_{2})\equiv C(u_{1},u_{2}|W=w)$ be the conditional
copula, and $F_{\varepsilon\nu|w}(\varepsilon,\nu)\equiv F_{\varepsilon\nu|W=w}(\varepsilon,\nu)$,
$F_{\varepsilon|w}(\varepsilon)\equiv F_{\varepsilon|W=w}(\varepsilon)$
and $F_{\nu|w}(\nu)\equiv F_{\nu|W=w}(\nu)$ be the conditional distributions.
\begin{theorem}\label{thm_full-1} In model \eqref{eq:model}, suppose
Assumptions \ref{as:independence}$^{\prime}$ and \ref{as:normalize}
hold. Also, suppose Assumption \ref{assumption:xz} holds conditional
on $W$, and Assumptions \ref{as:joint_dist}, \ref{as_copula}\textendash \ref{as:para_dist}
hold with $C_{w}(u_{1},u_{2})$, $F_{\varepsilon\nu|w}(\varepsilon,\nu)$,
$F_{\varepsilon|w}(\varepsilon)$ and $F_{\nu|w}(\nu)$ instead, for
all $w\in\text{supp}(W)$. Then, $(\alpha{}^{\prime},\beta{}^{\prime},\delta_{1},\gamma,\rho,\mu,\sigma)$
are point identified in an open and convex parameter space if (i)
$\gamma$ is a nonzero vector, and (ii) $\mathcal{X}$ does not lie
in a proper linear subspace of $\mathbb{R}^{k}$ a.s. conditional
on $W$.\end{theorem}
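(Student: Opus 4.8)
The plan is to reduce Theorem \ref{thm_full-1} to Theorem \ref{thm_full} by conditioning on $W$, exploiting the fact that conditional independence restricted to any slice $\{W=w\}$ is exactly the full independence hypothesis of Theorem \ref{thm_full} applied to the conditional law of the unobservables. First I would fix an arbitrary $w\in\text{supp}(W)$ and condition the entire model on $W=w$. Because Assumption \ref{as:independence}$^{\prime}$ states $(X,Z)\perp(\varepsilon,\nu)\mid W$, within this slice $(X,Z)$ is independent of $(\varepsilon,\nu)$ in the ordinary unconditional sense, now with respect to the conditional distribution $F_{\varepsilon\nu|w}$. Thus the roles played by $F_{\varepsilon\nu}$, $C$, $F_{\varepsilon}$, and $F_{\nu}$ in the original argument are taken over by $F_{\varepsilon\nu|w}$, $C_{w}$, $F_{\varepsilon|w}$, and $F_{\nu|w}$. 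Since the equations \eqref{eq:model} do not contain $W$, the index coefficients $(\alpha,\beta,\delta_{1},\gamma)$ are common across all slices; only the distributional objects carry a $w$-subscript, so the dependence and moment parameters should be read as possibly $w$-dependent, $\rho_{w},\mu_{w},\sigma_{w}$.

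Second, I would re-derive the fitted-probability system \eqref{eq:fitted} in its conditional form. Setting $s_{xz|w}\equiv F_{\nu|w}(x^{\prime}\alpha+z^{\prime}\gamma)$, $r_{0,x|w}\equiv F_{\varepsilon|w}(x^{\prime}\beta)$, and $r_{1,x|w}\equiv F_{\varepsilon|w}(x^{\prime}\beta+\delta_{1})$, the conditional choice probabilities $p_{yd,xz|w}\equiv\Pr[Y=y,D=d\mid X=x,Z=z,W=w]$ satisfy the same six relations as in \eqref{eq:fitted}, with every quantity carrying the $|w$ subscript and $C$ replaced by $C_{w}$. This step uses only that conditioning on $W=w$ preserves both the threshold-crossing structure and the within-slice independence, so the derivation is the verbatim conditional analogue of the one leading to \eqref{eq:fitted}.

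Third, I would invoke Theorem \ref{thm_full} on this conditional system. The hypotheses that theorem requires---the conditional version of Assumption \ref{assumption:xz}, Assumptions \ref{as:joint_dist} and \ref{as_copula}--\ref{as:para_dist} stated with $C_{w}$, $F_{\varepsilon\nu|w}$, $F_{\varepsilon|w}$, $F_{\nu|w}$, the nonzero-$\gamma$ condition, and the full-rank condition that $\mathcal{X}$ does not lie in a proper linear subspace of $\mathbb{R}^{k}$ a.s.\ conditional on $W$---are precisely those imposed in the present statement. Applying Theorem \ref{thm_full} within the slice therefore identifies $(\alpha,\beta,\delta_{1},\gamma,\rho_{w},\mu_{w},\sigma_{w})$ from $\{p_{yd,xz|w}\}$. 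Because $(\alpha,\beta,\delta_{1},\gamma)$ do not depend on $w$, identifying them in any single slice already pins them down, while the distributional parameters are recovered slice by slice; collecting the conclusions over $w\in\text{supp}(W)$ yields the claim.

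The main obstacle---indeed the only point genuinely needing care---is the bookkeeping that separates the $w$-invariant structural coefficients from the potentially $w$-dependent distributional parameters, together with confirming that the conditional support condition on $\mathcal{X}$ (defined through the conditional supports $\text{supp}(X\mid Z=z,W=w)$) is the right hypothesis to drive the single-slice full-rank argument. Since conditional independence localizes cleanly within each slice and the index structure is shared across slices, I expect no new analytic difficulty: every ingredient of the proof of Theorem \ref{thm_full}, including the role of the stochastic ordering in Assumption \ref{as_copula} in pinning down $\rho_{w}$, transfers verbatim under conditioning, and the substance is inherited from HV17.
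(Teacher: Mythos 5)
Your proposal is correct and coincides with the paper's own treatment: the paper offers no separate proof of Theorem \ref{thm_full-1}, asserting instead that it ``immediately holds by applying the same proof strategies as in Theorem \ref{thm_full},'' which is exactly your slice-by-slice conditioning on $W=w$ reducing the conditional-independence model to the unconditional one with $C_{w}$, $F_{\varepsilon|w}$, $F_{\nu|w}$ in place of $C$, $F_{\varepsilon}$, $F_{\nu}$. Your added bookkeeping---that the index coefficients $(\alpha,\beta,\delta_{1},\gamma)$ are common across slices while $(\rho,\mu,\sigma)$ are recovered per slice as $(\rho_{w},\mu_{w},\sigma_{w})$---is the right reading of the theorem and introduces no gap.
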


\begin{theorem}\label{thm_full_unknown-1} In model \eqref{eq:model},
suppose Assumptions \ref{as:independence}$^{\prime}$ and \ref{as:normalize}
hold. Also, suppose Assumptions \ref{assumption:xz} and \ref{as:add_support}(i)
hold conditional on $W$, and Assumptions \ref{as:joint_dist} and
\ref{as_copula} hold with $C_{w}(u_{1},u_{2})$, $F_{\varepsilon\nu|w}(\varepsilon,\nu)$,
$F_{\varepsilon|w}(\varepsilon)$ and $F_{\nu|w}(\nu)$ instead, for
all $w\in\text{supp}(W)$. Then $(\alpha{}^{\prime},\beta{}^{\prime},\delta_{1},\gamma,\rho)$
are point identified in an open and convex parameter space if (i)
$\gamma$ is a nonzero vector; and (ii) $\mathcal{X}$ does not lie
in a proper linear subspace of $\mathbb{R}^{k}$ a.s. In addition,
if Assumption \ref{as:add_support}(ii) holds conditional on $W$,
$F_{\varepsilon|w}(\cdot)$ and $F_{\nu|w}(\cdot)$ are identified
up to additive constants for all $w\in\text{supp}(W)$.\end{theorem}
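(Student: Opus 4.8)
The plan is to prove the statement by a direct conditioning argument that reduces it to the already-established unconditional result, Theorem \ref{thm_full_unknown}. The key observation is that once we condition on $W=w$, the model retains exactly the structure analyzed there: Assumption \ref{as:independence}$^{\prime}$ supplies $(X,Z)\perp(\varepsilon,\nu)\mid W=w$, which plays the role of full independence, while the conditional objects $C_{w}$, $F_{\varepsilon\nu|w}$, $F_{\varepsilon|w}$ and $F_{\nu|w}$ take the place of their unconditional counterparts. First I would fix an arbitrary $w\in\text{supp}(W)$ and define the conditional fitted probabilities $p_{yd,xz|w}\equiv\Pr[Y=y,D=d\mid X=x,Z=z,W=w]$. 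Using model \eqref{eq:model} together with Assumption \ref{as:independence}$^{\prime}$, these satisfy the same system as \eqref{eq:fitted}, with the reparametrization \eqref{eq:repara} replaced by its $w$-conditional version $s_{xz|w}\equiv F_{\nu|w}(x^{\prime}\alpha+z^{\prime}\gamma)$, $r_{0,x|w}\equiv F_{\varepsilon|w}(x^{\prime}\beta)$, and $r_{1,x|w}\equiv F_{\varepsilon|w}(x^{\prime}\beta+\delta_{1})$.

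Next I would verify that the hypotheses of Theorem \ref{thm_full_unknown} hold for this $w$-conditional model, which is precisely what the modified assumptions deliver. The conditional form of Assumption \ref{as:joint_dist} guarantees that $F_{\varepsilon|w}$ and $F_{\nu|w}$ are strictly increasing and absolutely continuous, so the maps in the conditional reparametrization are one-to-one; Assumption \ref{as_copula} applied to $C_{w}$ preserves the stochastic-ordering property \eqref{ssi} needed to invert the copula system; and the conditional versions of Assumption \ref{assumption:xz} and the support conditions (i)--(ii) ensure that $\mathcal{X}$, now defined conditional on $W=w$, does not lie in a proper linear subspace of $\mathbb{R}^{k}$. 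With these in place, Theorem \ref{thm_full_unknown} applies verbatim to the distribution of $(Y,D,X,Z)$ given $W=w$, yielding point identification of $(\alpha^{\prime},\beta^{\prime},\delta_{1},\gamma,\rho)$ and, whenever the conditional large-support Assumption \ref{as:add_support}(ii) holds, identification of $F_{\varepsilon|w}$ and $F_{\nu|w}$ up to additive constants.

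Finally I would patch the pointwise conclusions together. Because the structural parameters $(\alpha^{\prime},\beta^{\prime},\delta_{1},\gamma,\rho)$ are common across $w$, identifying them from the conditional distribution at any single admissible $w$ already identifies them globally, and the conclusions obtained at different values of $w$ are automatically mutually consistent since they arise from the same data-generating process. The conditional marginals $F_{\varepsilon|w}$ and $F_{\nu|w}$, by contrast, genuinely depend on $w$, so the conditional version of the large-support argument identifies each of them (up to an additive constant) separately for every $w\in\text{supp}(W)$.

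I do not anticipate a substantive obstacle, since the argument is a faithful transcription of the proof of Theorem \ref{thm_full_unknown} with $W=w$ carried throughout. The only point requiring care is bookkeeping on the conditional supports: one must confirm that for the relevant values of $w$ there still exist pairs $(x,z)$ and $(x,\tilde{z})$ with $z^{\prime}\gamma\neq\tilde{z}^{\prime}\gamma$ and $x$ in the conditional overlap support $\text{supp}(X\mid Z=z,W=w)\cap\text{supp}(X\mid Z=\tilde{z},W=w)$, so that the copula system \eqref{eq:fitted} can be formed and inverted conditional on $W=w$. This is exactly what the conditional reformulations of Assumption \ref{assumption:xz} and the support conditions guarantee, so the reduction goes through.
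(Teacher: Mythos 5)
Your proposal is correct and follows essentially the same route as the paper: the paper states that Theorem \ref{thm_full_unknown-1} ``immediately holds by applying the same proof strategies as in Theorems \ref{thm_full} and \ref{thm_full_unknown},'' i.e., one conditions on $W=w$, verifies that the $w$-conditional model satisfies the hypotheses of Theorem \ref{thm_full_unknown} with $C_{w}$, $F_{\varepsilon|w}$, $F_{\nu|w}$ in place of their unconditional counterparts, and invokes that theorem pointwise in $w$. Your additional remarks---that the common structural parameters $(\alpha^{\prime},\beta^{\prime},\delta_{1},\gamma,\rho)$ are identified from any single admissible $w$ while the conditional marginals must be identified separately for each $w$, and that the conditional overlap-support condition is what allows the system \eqref{eq:fitted} to be formed given $W=w$---are exactly the bookkeeping the paper leaves implicit.
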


\subsection{The Failures of Identification}

In this section, we discuss two sources of identification failure
in the class of models \eqref{eq:model}: the absence of exclusion
restrictions and the absence of restrictions on the dependence structure
of the unobservables $(\varepsilon,\nu)$.

\subsubsection{No Exclusion Restrictions}

There are empirical works where \eqref{eq:model} is used without
excluded instruments; see, e.g., \citet{white_wolaver2003} and \citet{rhine2006importance}.
Identification in these papers relies on the results of \citet{wilde2000identification},
who provides an identification argument by counting the number of
equations and unknowns in the system. Here, we show that this argument
is insufficient for identification. We show that without the excluded
instruments (i.e., when $\gamma=0$), the structural parameters are
not identified, even with a full parametric specification of the joint
distribution (Assumptions \ref{as_copula} and \ref{as:para_dist}).
The existence of common exogenous covariates $X$ in both equations
is not very helpful for identification in a sense that becomes clear
below.

Before considering the lack of identification in a general case with
possibly continuous $X_{1}$ in $X=(1,X_{1})$, we start the analysis
with binary $X_{1}$. \citet{mourifie2014note} show the lack of identification
when there is no excluded instrument in a bivariate probit model with
binary $X_{1}$. They, however, only provide a numerical counter-example.
Moreover, their analysis does not consider the full set of observed
fitted probabilities, and hence possibly neglects information that
could have contributed to the identification. Here, we provide an
analytical counter-example in a more general parametric class of model
\eqref{eq:model} that nests the bivariate probit model. We show that
$(\delta_{1},\rho,\mu_{\varepsilon},\sigma_{\varepsilon})$ are not
identified, even if the full set of probabilities are used. Note that
the reduced-form parameters $(\mu_{\nu},\sigma_{\nu})$ are always
identified from the equation for $D$, and $\alpha=\beta=(0,1)^{\prime}$
as a normalization using scalar $X_{1}$.

\begin{theorem}\label{thm:lack}In model \eqref{eq:model} with $X=(1,X_{1})$
where $X_{1}\in\mbox{supp}(X_{1})=\{0,1\}$, suppose that the assumptions
in Theorem \ref{thm_full} hold, except that $\gamma=0$. Then, there
exist two element-wise distinct sets of $(\delta_{1},\rho,\mu_{\varepsilon},\sigma_{\varepsilon})$
that generate the same observed data.\end{theorem}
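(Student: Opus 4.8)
The plan is to reduce the non-identification question to the analysis of a single scalar equation in the dependence parameter $\rho$, and then to exhibit a data-generating process (DGP) for which that equation has more than one root. Since the claim is existential, I am free to choose the true DGP conveniently.

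First I would specialize the reparametrization \eqref{eq:repara} and the fitted-probability system \eqref{eq:fitted} to the present setting: $X=(1,X_{1})$ with $X_{1}\in\{0,1\}$, the normalization $\alpha=\beta=(0,1)^{\prime}$, and $\gamma=0$. Because $\gamma=0$, the index in the $D$-equation is just $X_{1}$, so $\Pr[D=1\mid X_{1}=x]=F_{\nu}(x)$ for $x\in\{0,1\}$; these two equations recover $(\mu_{\nu},\sigma_{\nu})$ under Assumption \ref{as:para_dist}, so I treat $s_{0}\equiv F_{\nu}(0)$ and $s_{1}\equiv F_{\nu}(1)$ as known. The only remaining information in the data is then carried by the four probabilities $p_{11,x}=C(v_{x},s_{x};\rho)$ and $p_{10,x}=u_{x}-C(u_{x},s_{x};\rho)$ for $x\in\{0,1\}$, where I write $u_{0}=F_{\varepsilon}(0)$, $u_{1}=F_{\varepsilon}(1)$, $v_{0}=F_{\varepsilon}(\delta_{1})$, $v_{1}=F_{\varepsilon}(1+\delta_{1})$. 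This makes precise that ``counting'' gives four equations in the four unknowns $(\delta_{1},\rho,\mu_{\varepsilon},\sigma_{\varepsilon})$, as in the argument I wish to refute.

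Second I would note that for any fixed candidate $\rho$ the four probability equations determine $(u_{0},u_{1},v_{0},v_{1})$ uniquely: $C(\cdot,s_{x};\rho)$ is strictly increasing in its first argument, and $u\mapsto u-C(u,s_{x};\rho)$ is strictly increasing because its derivative $1-\partial_{1}C=\Pr[U_{2}>s_{x}\mid U_{1}=u]$ is positive. This yields continuous inverse maps $\rho\mapsto(u_{0}(\rho),u_{1}(\rho),v_{0}(\rho),v_{1}(\rho))$; moreover, by the stochastic ordering of Assumption \ref{as_copula} ($C$ increasing in $\rho$), the maps $v_{x}(\rho)$ are strictly decreasing and $u_{x}(\rho)$ strictly increasing. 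The parametric restriction of Assumption \ref{as:para_dist} imposes exactly one further constraint: writing $G$ for the standardized marginal, $G^{-1}\circ F_{\varepsilon}$ is affine in the threshold, so the four standardized thresholds must be equally spaced, i.e. $\psi(\rho)\equiv[G^{-1}(v_{1}(\rho))-G^{-1}(v_{0}(\rho))]-[G^{-1}(u_{1}(\rho))-G^{-1}(u_{0}(\rho))]=0$. Whenever $\psi(\rho)=0$ one recovers $(\sigma_{\varepsilon},\mu_{\varepsilon},\delta_{1})$ uniquely from $(u_{0},u_{1},v_{0},v_{1})$. Hence the structural parameters are identified if and only if $\psi$ has a unique zero, and the theorem reduces to producing a DGP for which $\psi$ vanishes at two distinct points.

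Third, and this is where the work lies, I would establish that $\psi$ need not be monotone, so that a second root can be forced. Within each bracket of $\psi$ the two terms respond to $\rho$ in the same direction, but the $v$-bracket and the $u$-bracket respond in offsetting ways, so $\psi$ has no a priori sign of derivative; this is exactly the slack that the naive counting argument overlooks. I would then pin down the second zero by one of two routes. The first is the intermediate value theorem after computing the limits of $\psi$ as $\rho$ approaches the comonotone and countermonotone endpoints of $\Omega$ (where $C\to\min(u_{1},u_{2})$ and $C\to\max(u_{1}+u_{2}-1,0)$, giving closed-form recovered percentiles), choosing the DGP so that $\psi$ shares a common sign at both endpoints while crossing zero transversally at the true $\rho^{\ast}$, which forces a second crossing. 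The second is to pick a radially symmetric copula (Gaussian or Frank) and a symmetric marginal $G$, with $(s_{0},s_{1})$, $\mu_{\varepsilon}$, and $\delta_{1}$ arranged so that the true thresholds are symmetric about the center of $F_{\varepsilon}$, and to exploit that symmetry to produce a partner root $\rho^{\ast\ast}\neq\rho^{\ast}$. Finally I would substitute $\rho^{\ast\ast}$ back through the inverse maps of the second step into a second vector $(\delta_{1}^{\ast\ast},\rho^{\ast\ast},\mu_{\varepsilon}^{\ast\ast},\sigma_{\varepsilon}^{\ast\ast})$ that generates identical probabilities, and verify that it lies in the admissible open convex parameter space and differs from the truth in every coordinate (which follows since $u_{x}(\rho)$ and $v_{x}(\rho)$ are strictly monotone in $\rho$, forcing $\sigma_{\varepsilon}$, $\mu_{\varepsilon}$, and $\delta_{1}$ all to move). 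The main obstacle is precisely the non-monotonicity argument: I must verify that the endpoint signs (or the symmetry) genuinely yield a transversal second crossing rather than a non-generic tangential zero, since a strictly monotone $\psi$ would instead deliver identification. The remaining ingredients—the monotone inversion, the affine spacing constraint, and the back-substitution—are routine.
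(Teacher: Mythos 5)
Your reduction is correct, and it is in fact a cleaner restatement of what the paper's proof does implicitly: for each fixed $\rho$ the four probabilities $p_{11,x},p_{10,x}$ invert uniquely to thresholds $(u_{0}(\rho),u_{1}(\rho),v_{0}(\rho),v_{1}(\rho))$, and Assumption \ref{as:para_dist} is equivalent to your equal-spacing condition $\psi(\rho)=0$, so identification fails exactly when $\psi$ has a second zero. (One small caveat: your monotonicity claims rest on $C_{\rho}>0$, which is not the literal content of Assumption \ref{as_copula} but is Lemma \ref{lem:si_to_c}, itself requiring a nontrivial coupling proof.) Your route (b) is precisely the paper's argument. The genuine gap is that you stop at the decisive step: no configuration is ever written down and no second zero is ever verified; you yourself defer this as ``the main obstacle.'' The paper does the work explicitly: with $\delta_{1}=0$, $\rho=0$, $t_{0}=q_{0}=1/3$, $t_{1}=q_{1}=2/3$ and the comonotone alternative $\rho^{*}=1$, the re-inverted thresholds are $t_{0}^{*}=5/9$, $t_{0}^{\dagger}=1/9$, $t_{1}^{*}=8/9$, $t_{1}^{\dagger}=4/9$, and the single cross-equation constraint \eqref{eq:t_dagger} holds because $t_{1}^{\dagger}-t_{1}^{*}=t_{0}^{\dagger}-t_{0}^{*}=-4/9$; symmetry of the copula density about both diagonals together with symmetry of the marginal density (equation \eqref{eq:delta_star2}) then extends this to every $\rho^{*}\in(-1,1)$. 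Note that the conclusion is stronger than the ``partner root'' your route (b) aims for: for this DGP your $\psi$ vanishes \emph{identically} on $\Omega$, so no transversality or endpoint-sign analysis (your route (a)) is needed, and route (a) as described is in any case unexecuted --- no DGP is exhibited for which the Fr\'{e}chet-bound limits of $\psi$ share a sign.

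There is also a concrete error in your final step. You assert that element-wise distinctness of the two parameter vectors ``follows since $u_{x}(\rho)$ and $v_{x}(\rho)$ are strictly monotone in $\rho$, forcing $\sigma_{\varepsilon}$, $\mu_{\varepsilon}$, and $\delta_{1}$ all to move.'' That inference is invalid: the recovered parameters depend on \emph{differences} of $G^{-1}$-transformed thresholds, and such differences can be preserved even when every threshold moves strictly. The paper's own uniform-marginal computation is a counterexample: there $1/\sigma_{\varepsilon}=t_{1}-t_{0}=2/3-1/3=1/3$ and $1/\sigma_{\varepsilon}^{*}=t_{1}^{*}-t_{0}^{*}=8/9-5/9=1/3$, so $\sigma_{\varepsilon}=\sigma_{\varepsilon}^{*}=3$ even though all four thresholds changed. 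Element-wise distinctness must therefore be checked on the specific construction --- it holds once the marginal is a non-uniform symmetric distribution, since then $G^{-1}(8/9)-G^{-1}(5/9)\neq G^{-1}(2/3)-G^{-1}(1/3)$ --- rather than deduced from monotonicity of the threshold maps.
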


In showing this lack-of-identification result, we find a counter-example
where the copula density induced by $C(u_{1},u_{2})$ is symmetric
around $u_{2}=u_{1}$ and $u_{2}=1-u_{1}$, and the density induced
by $F_{\varepsilon}$ is symmetric. Note that the bivariate normal
distribution, namely, the normal copula with normal marginals, satisfies
these symmetry properties. That is, \textit{in the bivariate probit
model with a common binary exogenous covariate and no excluded instruments,
the structural parameters are }not\textit{ identified. }

The proof of Theorem \ref{thm:lack} proceeds as follows. Under Assumption
\ref{as:normalize}, let 
\[
\begin{array}{cc}
q_{0}\equiv F_{\tilde{\nu}}(-\mu_{\nu}/\sigma_{\nu}), & \quad q_{1}\equiv F_{\tilde{\nu}}((1-\mu_{\nu})/\sigma_{\nu}),\\
t_{0}\equiv F_{\tilde{\varepsilon}}(-\mu_{\varepsilon}/\sigma_{\varepsilon}), & \quad t_{1}\equiv F_{\tilde{\varepsilon}}((1-\mu_{\varepsilon})/\sigma_{\varepsilon}).
\end{array}
\]
 Then, we have
\[
\begin{array}{cc}
\tilde{p}_{11,0}=C(F_{\tilde{\varepsilon}}(F_{\tilde{\varepsilon}}^{-1}(t_{0})+\delta_{1}),q_{0};\rho), & \quad\tilde{p}_{11,1}=C(F_{\tilde{\varepsilon}}(F_{\tilde{\varepsilon}}^{-1}(t_{1})+\delta_{1}),q_{1};\rho),\\
\tilde{p}_{10,0}=t_{0}-C(t_{0},q_{0};\rho), & \quad\tilde{p}_{10,1}=t_{1}-C(t_{1},q_{1};\rho),\\
\tilde{p}_{00,0}=1-t_{0}-q_{0}+C(t_{0},q_{0};\rho), & \quad\tilde{p}_{00,1}=1-t_{1}-q_{1}+C(t_{1},q_{1};\rho),
\end{array}
\]
where $\tilde{p}_{yd,x}\equiv\Pr[Y=y,D=d|X_{1}=x]$. We want to show
that, given $(q_{0},q_{1})$ which are identified from the reduced-form
equation, there are two distinct sets of parameter values $(t_{0},t_{1},\delta_{1},\rho)$
and $(t_{0}^{*},t_{1}^{*},\delta_{1}^{*},\rho^{*})$ (with $(t_{0},t_{1},\delta_{1},\rho)$
$\ne$ $(t_{0}^{*},t_{1}^{*},\delta_{1}^{*},\rho^{*})$) that generate
the same observed fitted probabilities $\tilde{p}_{yd,0}$ and $\tilde{p}_{yd,1}$
for all $(y,d)\in\{0,1\}^{2}$ under some choices of $C(u_{1},u_{2})$
and $F_{\varepsilon}$. The detailed proof can be found in the online
appendix.

One might argue that the lack of identification in Theorem \ref{thm:lack}
is due to the limited variation of $X$. Although this is a plausible
conjecture, this does not seem to be the case in the model considered
here.\footnote{In fact, in \citet{Heckman_1979_Econometrica}'s sample selection
model under normality, although identification fails with binary exogenous
covariates in the absence of the exclusion restriction, it is well
known that identification is achieved with continuous covariates by
exploiting the nonlinearity of the model (\citet{Vella98}).} We now consider a general case with possibly \textit{continuous}
$X_{1}$, and discuss what can be said about the existence of two
distinct sets of $(\beta,\delta_{1},\rho,\mu_{\varepsilon},\sigma_{\varepsilon})$
that generate the same observed data. To this end, define 
\begin{align*}
q(x) & \equiv F_{\tilde{\nu}}((x'\alpha-\mu_{\nu})/\sigma_{\nu}),\quad t(x)\equiv F_{\tilde{\varepsilon}}((x'\beta-\mu_{\varepsilon})/\sigma_{\varepsilon}).
\end{align*}
Then, 
\begin{align*}
p_{11,x} & =C(F_{\tilde{\varepsilon}}(F_{\tilde{\varepsilon}}^{-1}(t(x))+\delta_{1}),q(x);\rho),\\
p_{10,x} & =t(x)-C(t(x),q(x);\rho),\\
p_{00,x} & =1-t(x)-q(x)+C(t(x),q(x);\rho).
\end{align*}
Similar to the proof strategy for the binary $X_{1}$ case, we want
to show that, given $(\alpha,\mu_{\nu},\sigma_{\nu})$, there are
two distinct sets of parameter values $(\beta,\delta_{1},\rho,\mu_{\varepsilon},\sigma_{\varepsilon})$
and $(\beta^{*},\delta_{1}^{*},\rho^{*},\mu_{\varepsilon}^{*},\sigma_{\varepsilon}^{*})$
that generate the same observed fitted probabilities $p_{yd,x}$ for
all $(y,d)\in\{0,1\}^{2}$ and $x\in\mbox{supp}(X)$ under some choices
of $C(u_{1},u_{2})$ and $F_{\varepsilon}$.

Let $t(x)\equiv F_{\tilde{\varepsilon}}(x'\beta)\in(0,1)$ for all
$x$ and for some $\beta$. Also, choose $\delta_{1}=0$ and some
$\rho\in\Omega$. For $\rho^{*}>\rho$, we want to show that there
exists $(\beta^{*},\delta_{1}^{*})$ such that, for $t^{*}(x)\equiv F_{\tilde{\varepsilon}}(x'\beta^{*})$,
\begin{align}
p_{10,x} & =t(x)-C(t(x),q(x);\rho)=t^{*}(x)-C(t^{*}(x),q(x);\rho^{*})\label{eq:conti_x1}\\
p_{11,x} & =C(F_{\tilde{\varepsilon}}(F_{\tilde{\varepsilon}}^{-1}(t(x))+0),q(x);\rho)=C(s^{\dagger}(x),q(x);\rho^{*})\label{eq:conti_x2}
\end{align}
for all $x$, where 
\begin{equation}
s^{\dagger}(x)=F_{\tilde{\varepsilon}}(F_{\tilde{\varepsilon}}^{-1}(t^{*}(x))+\delta_{1}^{*}).\label{eq:conti_x3}
\end{equation}
The question is whether we find $(\beta,\delta_{1},\rho)$ and $(\beta^{*},\delta_{1}^{*},\rho^{*})$
such that \eqref{eq:conti_x1}\textendash \eqref{eq:conti_x3} hold
simultaneously. First, note that, since $\rho^{*}>\rho$, we have
$t^{*}>t$ and hence $\beta^{*}\neq\beta$ by the assumption that
there is no linear subspace in the space of $X$. Now, choose $C(\cdot,\cdot;\rho)$
to be a normal copula and choose $\rho=0$ and $\rho^{*}=1$. Then,
using arguments similar to those of the binary case (found in the
online appendix), we obtain 
\begin{equation}
t^{*}(x)=q(x)+(1-q(x))t(x)\label{eq:conti_x4}
\end{equation}
and $s^{\dagger}(x)=q(x)t(x)$. Then, \eqref{eq:conti_x3} can be
rewritten as 
\begin{align}
\delta_{1}^{*} & =F_{\tilde{\varepsilon}}^{-1}(s^{\dagger}(x))-F_{\tilde{\varepsilon}}^{-1}(t^{*}(x))=F_{\tilde{\varepsilon}}^{-1}(q(x)t(x))-F_{\tilde{\varepsilon}}^{-1}(q(x)+(1-q(x))t(x)).\label{eq:conti_x5}
\end{align}

The complication here is to ensure that this equation is satisfied
for all $x$. Note that \eqref{eq:conti_x4} and \eqref{eq:conti_x5}
are consistent with the definition of a distribution function of a
continuous r.v.: $F_{\tilde{\varepsilon}}(+\infty)=1$, $F_{\tilde{\varepsilon}}(-\infty)=0$,
and $F_{\tilde{\varepsilon}}(\varepsilon)$ is strictly increasing.
We can then numerically show that a distribution function that is
close to a normal distribution satisfies the conditions with a particular
choice of $(\beta^{*},\delta_{1}^{*})$; see Figure 1.
\begin{figure}
\begin{centering}
\includegraphics[scale=0.2]{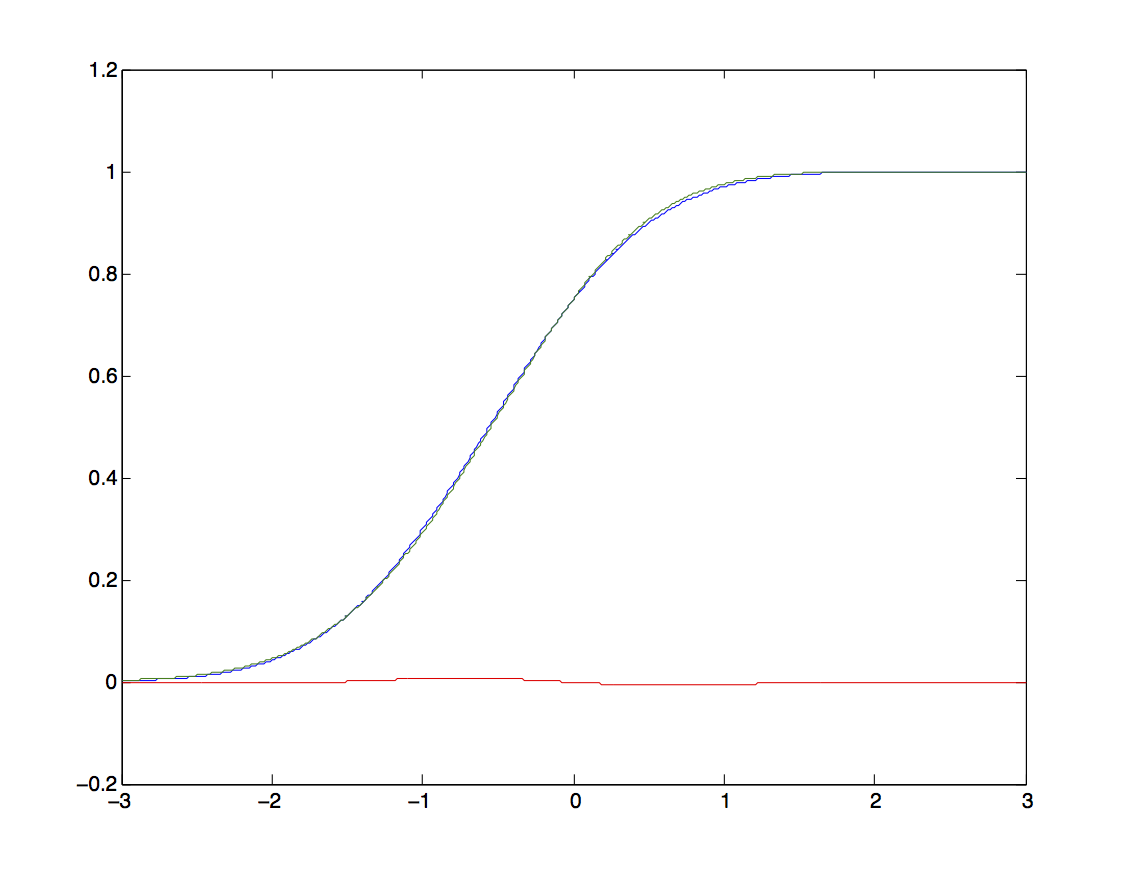}
\par\end{centering}
\caption{A numerical calculation of a distribution function under which identification
fails (blue line), compared with a normal distribution function (green
line).}
\end{figure}
 Although no formal derivation of the counterexample is given, this
result suggests the following:
\begin{itemize}
\item [(i)]In the bivariate probit model with continuous common exogenous
covariates and no excluded instruments, the parameters will be, \textit{at
best,} weakly identified; 
\item [(ii)]This also implies that, in the semiparametric model considered
in Theorem \ref{thm_full_unknown}, the structural parameters and
the marginal distributions are not identified without an exclusion
restriction, even if $X_{1}$ has large support.
\end{itemize}

\subsubsection{No Restrictions on Dependence Structures}

When the restriction imposed on $C(\cdot,\cdot)$ (i.e., Assumption
\ref{as_copula}) is completely relaxed, the underlying parameters
of model \eqref{eq:model} may fail to be identified, regardless of
whether the exclusion restriction holds. That is, a structure describing
how the unobservables $(\varepsilon,\nu)$ are dependent on each other
is necessary for identification. This is closely related to the results
in the literature that the treatment parameters (which are lower dimensional
functions of the individual parameters) in triangular models similar
to \eqref{eq:model} are only partially identified without distributional
assumptions; see \citet{bhattacharya2008treatment}, \citet{Chiburis},
\citet{shaikh2011partial}, and \citet{mourifie2015sharp}.

Suppose Assumptions \ref{as:independence}\textendash \ref{as:normalize}
hold. Then the model becomes a semiparametric threshold crossing model
in that the joint distribution is completely unspecified. Then, as
a special case of \citet{shaikh2011partial}, one can easily derive
bounds for the ATE $F_{\varepsilon}(x^{\prime}\beta+\delta_{1})-F_{\varepsilon}(x^{\prime}\beta)$.
The sharpness of these bounds is shown in their paper under a rectangular
support assumption for $(X,Z)$, which is, in turn, relaxed in \citet{mourifie2015sharp}.
In addition, using Assumption \ref{as:para_dist}, one can also derive
bounds for the individual parameters $x^{\prime}\beta$ and $\delta_{1}$,
as shown in \citet{Chiburis}. When there are no excluded instruments
in the model, \citet{Chiburis} shows that the bounds on the ATE do
not improve on the bounds of Manski (1990), whose argument applies
to the individual parameters.

\section{\label{sec:Estimation}Sieve and Parametric ML Estimations}

Based on the identification results, we now consider estimation. Let
$\psi\equiv(\alpha^{'},\beta^{'},\delta_{1},\gamma,\rho)$ denote
the vector of the structural individual parameters. Let $f_{\epsilon}$
and $f_{\nu}$ be the density functions associated with the distribution
functions $F_{\epsilon}$ and $F_{\nu}$, respectively, of the unobservables.
Then, $(\psi^{'},f_{\epsilon},f_{\nu})^{'}$ is the set of parameters
in the \textit{semiparametric version} of the model. The model becomes
fully parametric, once the infinite-dimensional parameters $f_{\epsilon}$
and $f_{\nu}$ are fully characterized by some finite-dimensional
parameters, i.e., $f_{\epsilon}(\cdot;\eta_{\epsilon})$ and $f_{\nu}(\cdot;\eta_{\nu})$
for $\eta_{\epsilon}\in\mathbb{R}^{d_{\eta_{\epsilon}}}$ and $\eta_{\nu}\in\mathbb{R}^{d_{\eta_{\nu}}}$.
This yields $(\psi^{'},\eta_{\epsilon}^{'},\eta_{\nu}^{'})^{'}$ to
be the set of parameters in the \textit{parametric version} of the
model. For either case, the parameter of the model is denoted as $\theta$
for convenience. That is, $\theta\equiv(\psi^{'},f_{\epsilon},f_{\nu})^{'}$
in the semiparametric model and $\theta\equiv(\psi^{'},\eta_{\epsilon}^{'},\eta_{\nu}^{'})^{'}$
in the parametric model. For the rest of this paper, we explicitly
express $\theta_{0}$ to be the true parameter value for $\theta$.
This applies to all the other parameter expressions.

Let $\tilde{\Psi}$ be the parameter space for $\psi$. For the parametric
model, the spaces for the finite-dimensional parameters $\eta_{\epsilon}$
and $\eta_{\nu}$ are denoted as $\mathbf{H}_{\epsilon}\subseteq\mathbb{R}^{d_{\eta_{\epsilon}}}$
and $\mathbf{H}_{\nu}\subseteq\mathbb{R}^{d_{\eta_{\nu}}}$, respectively.
Then, the parameter space $\tilde{\Theta}$ for $\theta\equiv(\psi^{'},\eta_{\epsilon}^{'},\eta_{\nu}^{'})^{'}$
becomes a Cartesian product of $\tilde{\Psi}$, $\mathbf{H}_{\epsilon}$,
and $\mathbf{H}_{\nu}$, i.e., $\tilde{\Theta}\equiv\tilde{\Psi}\times\mathbf{H}_{\epsilon}\times\mathbf{H}_{\nu}\subseteq\mathbb{R}^{d_{\psi}+d_{\eta_{\epsilon}}+d_{\eta_{\nu}}}$,
in the parametric model.\footnote{For example, if one imposes Assumption \ref{as:para_dist}, then $\eta_{\epsilon}=(\mu_{\epsilon},\sigma_{\epsilon})^{'}$
and $\eta_{\nu}=(\mu_{\nu},\sigma_{\nu})^{'}$.} For the semiparametric model, we consider the following function
spaces as the spaces for $f_{\epsilon}$ and $f_{\nu}$: 

\begin{equation}
\mathcal{F}_{j}\equiv\left\{ f=q^{2}:q\in\mathcal{F},\int\{q(x)\}^{2}dx=1\right\} ,\label{eq:density_space}
\end{equation}
where $j\in\{\epsilon,\nu\}$ and $\mathcal{F}$ is a space of functions,
which we specify later. Then, the parameter space $\tilde{\Theta}$
of $\theta\equiv(\psi^{'},f_{\epsilon},f_{\nu})^{'}$ can be written
as $\tilde{\Theta}\equiv\tilde{\Psi}\times\mathcal{F}_{\epsilon}\times\mathcal{F}_{\nu}$
in the semiparametric model. Note that the function spaces $\mathcal{F}_{\epsilon}$
and $\mathcal{F}_{\nu}$ contain functions that are nonnegative.

We adopt the ML method to estimate the parameters in the model. Let
$\{W_{i}=\{Y_{i},D_{i},X_{i}^{'},Z_{i}^{'}\}:i=1,2,...,n\}$ be the
random sample. For both parametric and semiparametric models with
corresponding $\theta$, we define the conditional density function
of $(Y_{i},D_{i})$ conditional on $(X_{i}^{'},Z_{i}^{'})^{'}$ as
\[
f(Y_{i},D_{i}|X_{i},Z_{i};\theta)={\textstyle \prod\limits _{y,d=0,1}}[p_{yd}(X_{i},Z_{i};\theta)]^{\mathbf{1}\{Y_{i}=y,D_{i}=d\}},
\]
where $p_{yd}(x,z;\theta)$ abbreviates the right hand side expression
that equates $p_{yd,xz}$ in \eqref{eq:fitted}. Then, the log of
density $l(\theta,w)\equiv\log f(y,d|x,z;\theta)$ becomes
\begin{equation}
l(\theta,W_{i})\equiv{\textstyle \sum\limits _{y,d=0,1}}\mathbf{1}_{yd}(Y_{i},D_{i})\cdot\log p_{yd}(X_{i},Z_{i};\theta),\label{eq:log_den}
\end{equation}
 where $\mathbf{1}_{yd}(Y_{i},D_{i})\equiv\mathbf{1}\{Y_{i}=y,D_{i}=d\}$.
Consequently, the log-likelihood function can be written as $Q_{n}(\theta)=\frac{1}{n}\sum\limits _{i=1}^{n}l(\theta,W_{i})$.

Now, the ML estimator $\tilde{\theta}_{n}$ of $\theta_{0}\equiv(\psi_{0}^{'},\eta_{\epsilon0},\eta_{\nu0})^{'}$
in the parametric model is defined as 
\begin{equation}
\tilde{\theta}_{n}\equiv\arg\max_{\theta\in\tilde{\Theta}}{\textstyle Q_{n}(\theta)}.\label{eq:log_likelihood}
\end{equation}
For the semiparametric model, let $\mathcal{F}_{\varepsilon n}$ and
$\mathcal{F}_{\nu n}$ be appropriate sieve spaces for $\mathcal{F}_{\varepsilon}$
and $\mathcal{F}_{\nu}$, respectively, and let $f_{\epsilon n}(\cdot;a_{\epsilon n})$
and $f_{\nu n}(\cdot;a_{\nu n})$ be the sieve approximations of $f_{\epsilon}$
and $f_{\nu}$ on their sieve spaces $\mathcal{F}_{\epsilon n}$ and
$\mathcal{F}_{\nu n}$, respectively. Then, we define the sieve ML
estimator $\hat{\theta}_{n}$ of $\theta_{0}\equiv(\psi_{0}^{'},f_{\epsilon0},f_{\nu0})^{'}$
in the semiparametric model as follows: 
\begin{equation}
\hat{\theta}_{n}\equiv\arg\max_{\theta\in\tilde{\Theta}_{n}}{\textstyle Q_{n}(\theta)},\label{eq:sieve_log_likelihood}
\end{equation}
where $\tilde{\Theta}_{n}\equiv\tilde{\Psi}\times\mathcal{F}_{\epsilon n}\times\mathcal{F}_{\nu n}$
is the sieve space for $\theta$. 

With the parameter spaces $\mathcal{F}_{\epsilon}$ and $\mathcal{F}_{\nu}$
in \eqref{eq:density_space}, we are interested in a class of ``smooth''
univariate square root density functions. Specifically, we assume
that $\sqrt{f_{\epsilon}}$ and $\sqrt{f_{\nu}}$ belong to the class
of \emph{p-smooth }functions and we restrict our attention to linear
sieve spaces for $\mathcal{F}_{\epsilon}$ and $\mathcal{F}_{\nu}$.\footnote{The definition of $p$-smooth functions can be found in \citet[p.5570]{chen2007large}
or CFT06 (p.1230). We give the formal definition of $p$-smooth functions
in Section 4.} In this case, the choice of sieve spaces for $\mathcal{F}_{\epsilon}$
and $\mathcal{F}_{\nu}$ depends on the supports of $\epsilon$ and
$\nu$. If the supports are bounded, then one can use the polynomial
sieve, trigonometric sieve, or cosine sieve. When the supports are
unbounded, then we can use the Hermite polynomial sieve or the spline
wavelet sieve. 

In this paper, we implicitly assume that the copula function is correctly
specified. As mentioned earlier, using a parametric copula may lead
to model misspecification. It is well known that when the model is
misspecified, the ML estimator converges to a pseudo-true value which
minimizes the Kullback-Leibler (KL) divergence (e.g., \citet{white1982maximum}).
This result applies to a semiparametric model (e.g., \citet{chen2006mis_cop}
and \citet{chen2006estimation}) as in our semiparametric case. We,
however, do not investigate the asymptotic properties of the sieve
estimators under copula misspecification, as it is beyond the scope
of this paper. Instead, later in simulation, we investigate how the
copula misspecification affects the performance of estimators.\footnote{For related issues of copula misspecification, refer to, e.g., \citet{chen2006mis_cop}
and \citet{liao2017uniform}. In particular, \citet{chen2006mis_cop}
propose a test procedure for model selection that is based on the
test of \citet{vuong1989likelihood}. \citet{liao2017uniform} extend
Vuong's test to cases where models contain infinite-dimensional parameters
and propose a uniformly asymptotically valid Vuong test for semi/non-parametric
models. Their setting encompasses those models that can be estimated
by the sieve ML as a special case. }

\section{\label{sec:Asymptotic}Asymptotic Theory for Sieve ML Estimators}

In this section, we provide the asymptotic theory for the sieve ML
estimator $\hat{\theta}_{n}$ of $\theta\equiv(\psi^{'},f_{\epsilon},f_{\nu})^{'}$
in the semiparametric model. This theory will be useful for practitioners
to conduct inference. The asymptotic theory for the ML estimator $\tilde{\theta}_{n}$
of $\theta\equiv(\psi^{'},\eta_{\epsilon},\eta_{\nu})^{'}$ in the
parametric model is relatively standard and can be found in, e.g.,
\citet{newey1994large}. The theory establishes that the parametric
ML estimator is consistent, asymptotically normal, and efficient under
some regularity conditions. To investigate the asymptotic properties
of the sieve ML estimator, we slightly modify our model as follows. 

Let $G(\cdot$) be a strictly increasing function mapping from $\mathbb{R}$
to $[0,1]$. We further assume that $G$ is differentiable and that
its derivative $g(x)\equiv\frac{dG(x)}{dx}$ is bounded away from
zero on $\mathbb{R}$. Then, without loss of generality (e.g., \citet{bierens2014consistency}),
we consider the following transformation of $F_{\epsilon0}$ and $F_{\nu0}$
as: 
\begin{align}
F_{\epsilon0}(x) & =H_{\epsilon0}(G_{\epsilon}(x)),\quad F_{\nu0}(x)=H_{\nu0}(G_{\nu}(x)),\label{eq:transform}
\end{align}
where $H_{\epsilon0}(\cdot)$ and $H_{\nu0}(\cdot)$ are unknown distribution
functions on $[0,1]$. For $G$, we can choose the standard normal
distribution function or the logistic distribution function. Since
we assume that the distribution functions of $\epsilon$ and $\nu$
admit density functions, we require that $H_{\epsilon0}$ and $H_{\nu0}(\cdot)$
be differentiable, and write their derivatives as $h_{\epsilon0}(\cdot)$
and $h_{\nu0}(\cdot)$, respectively. For each $j\in\{\epsilon,\nu\}$,
let $\mathcal{H}_{j}\equiv\{h_{j}=q^{2}:q\in\mathcal{F}\}$ for some
function space $\mathcal{F}$. With this modification, we redefine
the parameter as $\theta=(\psi^{'},h_{\epsilon},h_{\nu})^{'}\in\tilde{\Theta}^{\dagger}\equiv\tilde{\Psi}\times\mathcal{H}_{\epsilon}\times\mathcal{H}_{\nu}$.
Note that, using the transformation of the distribution functions
in equation \eqref{eq:transform}, the unknown infinite-dimensional
parameters are defined on a bounded domain. In the online appendix,
we show that the transformation does not affect the identification
result.

We redefine the parameter space to facilitate developing the asymptotic
theory. The identification requires that the space of the finite-dimensional
parameter $\tilde{\Psi}$ be open and convex (see Theorems \ref{thm_full}
and \ref{thm_full_unknown}), and thus $\tilde{\Psi}$ cannot be compact.
We introduce an ``optimization space'' that contains the true parameter
$\psi_{0}$ and consider it as the parameter space of $\psi$. Formally,
we restrict the parameter space for estimation in the following way. 

\begin{assumption}\label{as:compact_para} There exists a compact
and convex subset $\Psi\subseteq\tilde{\Psi}$ such that $\psi_{0}\in int(\Psi)$,
where $int(A)$ is the interior of the set $A$. \end{assumption}

With the optimization space, we define the parameter space as $\Theta\equiv\Psi\times\mathcal{H}_{\epsilon}\times\mathcal{H}_{\nu}$,
and the corresponding sieve space is denoted by $\Theta_{n}\equiv\Psi\times\mathcal{H}_{\epsilon n}\times\mathcal{H}_{\nu n}$.
Then, the sieve ML estimator in equation \eqref{eq:sieve_log_likelihood}
is also redefined as follows: 
\begin{equation}
\hat{\theta}_{n}\equiv\text{arg max}_{\theta\in\Theta_{n}}Q_{n}(\theta).\label{eq:sieve_ML_est}
\end{equation}

\subsection{Consistency of the Sieve ML Estimators}

We begin by showing the consistency of the sieve ML estimator. Since
the parameter involves both finite- and infinite-dimensional objects,
we establish the consistency of the sieve ML estimators with respect
to a pseudo distance function $d_{c}$ on $\Theta\times\Theta$.\footnote{It is important to choose appropriate norms to ensure the compactness
of the original parameter space, as compactness plays a key role in
establishing the asymptotic theory. Since the parameter space is infinite-dimensional,
it may be compact under certain norms but not under other norms. An
infinite-dimensional space that is closed and bounded is not necessarily
compact, and thus it is more demanding to show that the parameter
space is compact under certain norms. To overcome this difficulty,
we take the approach introduced by \citet{gallant1987semi}, which
uses two norms to obtain the consistency. Their idea is to use the
strong norm to define the parameter space as a ball, and then to ensure
the compactness of the parameter space using the consistency norm.
In our setting, the H\"older norm is the strong norm and $||\cdot||_{c}$
is the consistency norm. Related to this issue, \citet{freyberger2015compactness}
recently extend the idea to more cases and present compactness results
for several parameter spaces.} All of the norms and the definitions of function spaces in this paper
are provided in the online appendix. 

We present the following assumptions, under which the sieve ML estimator
in equation \eqref{eq:sieve_ML_est} is consistent with respect to
the pseudo-metric $d_{c}(\cdot,\cdot)$. 

\begin{assumption}\label{as:WD} There exists a measurable function
$\underline{p}(X,Z)$ such that for all $\theta\in\Theta$ and for
all $y,d=0,1$, $p_{yd,XZ}(\theta)\geq\underline{p}(X,Z)$, with $E|\log(\underline{p}(X,Z))|<\infty$
and $E\left[\frac{1}{\underline{p}(X,Z)^{2}}\right]<\infty$. \end{assumption}

\begin{assumption}\label{as:data}$\{W_{i}:i=1,2,...,n\}$ is a random
sample, with $E\left[||(X_{i}^{'},Z_{i}^{'})^{'}||_{E}^{2}\right]<\infty$.
\end{assumption}

\begin{assumption}\label{as:para_sp} (i) $\sqrt{h_{\epsilon0}},\sqrt{h_{\nu0}}\in\Lambda_{R}^{p}([0,1])$,
with $p>\frac{1}{2}$ and some $R>0$; (ii) $\mathcal{H}_{\epsilon}=\mathcal{H}_{\nu}=\mathcal{H}$
where $\mathcal{H}\equiv\left\{ h=q^{2}:q\in\Lambda_{R}^{p}([0,1]),\int_{0}^{1}q=1\right\} $,
with $R$ being defined as in (i) and $\Lambda_{R}^{p}([0,1])$ being
a H\"older ball with radius $R$; (iii) the density functions $h_{\epsilon0}$
and $h_{\nu0}$ are bounded away from zero on $[0,1]$. \end{assumption} 

\begin{assumption}\label{as:sieve}(i) $\mathcal{H}_{\epsilon n}=\mathcal{H}_{\nu n}\equiv\{h\in\mathcal{H}:h(x)=p^{k_{n}}(x)^{'}a_{k_{n}},a_{k_{n}}\in\mathbb{R}^{k_{n}},||h||_{\infty}<2R^{2}\}$,
where $k_{n}\rightarrow\infty$ and $k_{n}/n\rightarrow0$ as $n\rightarrow\infty$;
(ii) for all $j\geq1$, we have $\Theta_{j}\subseteq\Theta_{j+1}$,
and there exists a sequence $\{\pi_{j}\theta_{0}\}_{j}$ such that
$d_{c}(\pi_{j}\theta_{0},\theta_{0})\rightarrow0$ as $j\rightarrow\infty$.
\end{assumption} 

\begin{assumption}\label{as:copula_derivative} For $j=1,2$, let
$C_{j}(u_{1},u_{2};\rho)\equiv\frac{\partial C(u_{1},u_{2};\rho)}{\partial u_{j}}$
and $C_{\rho}(u_{1},u_{2};\rho)\equiv\frac{\partial C(u_{1},u_{2};\rho)}{\partial\rho}$.
The derivatives $C_{j}(\cdot,\cdot;\cdot)$ and $C_{\rho}(\cdot,\cdot;\cdot)$
are uniformly bounded for all $j=1,2$. \end{assumption}

Assumption \ref{as:WD} guarantees that the log-likelihood function
$l(\theta,W_{i})$ is well defined for all $\theta\in\Theta$ and
that $Q_{0}(\theta_{0})>-\infty$. Assumption \ref{as:data} restricts
the data generating process (DGP), and assumes the existence of moments
of the data. Assumption \ref{as:para_sp} defines the parameter space
and implies that the infinite-dimensional parameters are in some smooth
class called a H\"older class. Note that conditions (i) and (ii)
in Assumption \ref{as:para_sp} together imply that $h_{\epsilon0}$
and $h_{\nu0}$ belong to $\Lambda_{\tilde{R}}^{p}([0,1])$, where
$\tilde{R}\equiv2^{m+1}R^{2}<\infty$.\footnote{See the online appendix for details.}
Thus, we may assume that $h_{\epsilon0}$ and $h_{\nu0}$ belong to
a H\"older ball with smoothness $p$ under Assumption \ref{as:para_sp}.\footnote{These conditions implicitly define the strong norm (H\"older norm).}
The condition that $\mathcal{H}_{\epsilon}$ and $\mathcal{H}_{\nu}$
are the same can be relaxed, but it is imposed for simplicity. The
first part of Assumption \ref{as:sieve} restricts our choice of sieve
spaces for $\mathcal{H}_{\epsilon}$ and $\mathcal{H}_{\nu}$ to linear
sieve spaces with order $k_{n}$. This can be relaxed so that the
choice of $k_{n}$ is different for $h_{\epsilon}$ and $h_{\nu}$.
The latter part of Assumption \ref{as:sieve} requires that the sieve
space be chosen appropriately so that the unknown parameters can be
well-approximated. Because the unknown infinite-dimensional parameters
belong to a H\"older ball and are defined on bounded supports, we
can choose the polynomial sieve, trigonometric sieve, cosine sieve,
or spline sieve.\footnote{Refer to \citet{chen2007large} or CFT06 for details on the choice
of sieve spaces.} For example, if we choose the polynomial sieve or the spline sieve,
then one can show that $d_{c}(\pi_{k_{n}}\theta_{0},\theta_{0})=O(k_{n}^{-p})$
(e.g., \citet{lorentz1966approximation}). Assumption \ref{as:copula_derivative}
imposes the boundedness of the derivatives of the copula function. 

The following theorem demonstrates that under the above assumptions,
the sieve estimator $\hat{\theta}_{n}$ is consistent with respect
to the pseudo metric, $d_{c}$. 

\begin{theorem}\label{thm:consistency} Suppose that Assumptions
\ref{as:independence}\textendash \ref{as_copula} and \ref{as:add_support}
hold. If Assumptions \ref{as:compact_para}\textendash \ref{as:copula_derivative}
are satisfied, then $d_{c}(\hat{\theta}_{n},\theta_{0})\overset{p}{\rightarrow}0$.
\end{theorem}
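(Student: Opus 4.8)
The plan is to verify the hypotheses of a generic sieve extremum-estimator consistency theorem, such as Theorem 3.1 of \citet{chen2007large} (see also CFT06), applied to the sample criterion $Q_n(\theta)=\frac{1}{n}\sum_{i=1}^n l(\theta,W_i)$ with population analogue $Q_0(\theta)\equiv E[l(\theta,W)]$. The ingredients I would check are: (i) $\theta_0$ is the unique maximizer of $Q_0$ over $\Theta$ (in the sense of the pseudo-metric $d_c$); (ii) $(\Theta,d_c)$ is compact; (iii) $Q_0$ is upper semicontinuous with respect to $d_c$ and $Q_0(\theta_0)>-\infty$; and (iv) the sieve is dense and $\sup_{\theta\in\Theta_n}|Q_n(\theta)-Q_0(\theta)|\overset{p}{\rightarrow}0$. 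Given (i)--(iii), well-separatedness of $\theta_0$ (for every $\epsilon>0$, $\sup_{\{\theta:\,d_c(\theta,\theta_0)\ge\epsilon\}}Q_0(\theta)<Q_0(\theta_0)$) is automatic, since an upper semicontinuous function attains its supremum on the compact set $\{\theta:\,d_c(\theta,\theta_0)\ge\epsilon\}$, which excludes the unique maximizer. The conclusion then follows from the usual sandwich: $Q_n(\hat\theta_n)\ge Q_n(\pi_{k_n}\theta_0)$ by definition of $\hat\theta_n$ in \eqref{eq:sieve_ML_est}, which together with (iv) and the approximation $d_c(\pi_{k_n}\theta_0,\theta_0)\to0$ of Assumption \ref{as:sieve}(ii) yields $Q_0(\hat\theta_n)\overset{p}{\rightarrow}Q_0(\theta_0)$, and well-separatedness converts this into $d_c(\hat\theta_n,\theta_0)\overset{p}{\rightarrow}0$.

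For (i) I would use the information inequality. Because $f(y,d|x,z;\theta)$ is a conditional probability mass function over $(y,d)\in\{0,1\}^2$, the difference $Q_0(\theta_0)-Q_0(\theta)$ equals the $(X,Z)$-averaged Kullback--Leibler divergence of the fitted-probability vector $\{p_{yd,XZ}(\theta)\}$ from $\{p_{yd,XZ}(\theta_0)\}$, which is nonnegative and vanishes only when the two vectors agree almost surely. The point-identification conclusion of Theorem \ref{thm_full_unknown}, whose hypotheses (Assumptions \ref{as:independence}--\ref{as_copula} and \ref{as:add_support}) are in force, shows that such agreement forces $\theta=\theta_0$ up to the location indeterminacy in $F_\epsilon,F_\nu$ that the pseudo-metric $d_c$ is designed to ignore, so $\theta_0$ is the unique maximizer modulo $d_c$; and $Q_0(\theta_0)>-\infty$ holds by Assumption \ref{as:WD}. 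For (ii), the finite-dimensional block $\Psi$ is compact by Assumption \ref{as:compact_para}, while the infinite-dimensional blocks $\mathcal{H}_\epsilon=\mathcal{H}_\nu=\mathcal{H}$ of Assumption \ref{as:para_sp} are built from H\"older balls $\Lambda_R^p([0,1])$; the crucial fact is that such a ball is compact under the weaker consistency norm $||\cdot||_{c}$ even though it is not compact under its own (strong) H\"older norm. This is the two-norm device of \citet{gallant1987semi}, and compactness in $||\cdot||_{c}$ follows from an Arzel\`a--Ascoli / compact-embedding argument for H\"older balls. Hence $\Theta=\Psi\times\mathcal{H}\times\mathcal{H}$ is compact in $d_c$.

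Conditions (iii) and (iv) rest on smoothness and a good envelope. Continuity of $\theta\mapsto p_{yd,xz}(\theta)$, and hence of $l(\theta,\cdot)$, holds because the fitted probabilities in \eqref{eq:fitted} are assembled from the copula and the two marginals through smooth operations whose first derivatives $C_1,C_2,C_\rho$ are uniformly bounded by Assumption \ref{as:copula_derivative}; dominated convergence, with the integrable dominating function furnished by $E|\log\underline{p}(X,Z)|<\infty$ in Assumption \ref{as:WD}, then gives upper semicontinuity of $Q_0$ in $d_c$. The main obstacle is the uniform convergence in (iv), since $l(\theta,W_i)=\sum_{y,d}\mathbf{1}_{yd}(Y_i,D_i)\log p_{yd}(X_i,Z_i;\theta)$ is unbounded through the logarithm, so a uniform law of large numbers over the growing sieve is not automatic. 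I would control it by noting that $|l(\theta,W)|\le|\log\underline{p}(X,Z)|$ uniformly over $\theta\in\Theta$, an integrable envelope by Assumption \ref{as:WD}; combining this envelope with the compactness from (ii), the equicontinuity of the H\"older class, and the data moment bound $E[||(X',Z')'||_E^2]<\infty$ of Assumption \ref{as:data} shows that $\{l(\theta,\cdot):\theta\in\Theta\}$ is Glivenko--Cantelli, while $E[1/\underline{p}(X,Z)^2]<\infty$ supplies the extra second-moment control needed for the stochastic equicontinuity as $k_n\to\infty$. Assembling (i)--(iv) in the sandwich argument of the first paragraph completes the proof.
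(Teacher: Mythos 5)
Your overall strategy coincides with the paper's own proof: the paper also establishes Theorem \ref{thm:consistency} by verifying the conditions of a generic sieve extremum consistency result (its Proposition \ref{prop:consistency}, a restatement of Theorem 3.1 of \citet{chen2007large}), with unique maximization of $Q_{0}$ obtained from identification plus a strict Jensen/KL argument (Lemma \ref{lemma:unique}), compactness of $\Theta$ under $d_{c}$ obtained from Assumption \ref{as:compact_para} together with the two-norm device (the paper cites \citet{freyberger2015compactness} where you cite Arzel\`a--Ascoli and \citet{gallant1987semi} --- the same mechanism), and uniform convergence over the sieves obtained from an integrable envelope plus entropy control (the paper verifies Condition 3.5M of \citet{chen2007large}). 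Your ``sandwich plus well-separatedness'' paragraph is exactly what the cited proposition internalizes, so the route is the same.

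One step of your argument is, however, genuinely wrong as stated: the claim that identification holds only ``up to the location indeterminacy in $F_{\epsilon},F_{\nu}$ that the pseudo-metric $d_{c}$ is designed to ignore.'' The metric $d_{c}(\theta_{1},\theta_{2})=||\psi_{1}-\psi_{2}||_{E}+||h_{\epsilon1}-h_{\epsilon2}||_{\infty}+||h_{\nu1}-h_{\nu2}||_{\infty}$ does not quotient out anything: two densities differing by a location shift are at strictly positive $d_{c}$-distance. So if an unresolved location indeterminacy really survived inside $\Theta$, the maximizer of $Q_{0}$ would not be $d_{c}$-unique, well-separatedness would fail, and consistency in $d_{c}$ would be \emph{false}, not merely harder to prove. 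The correct resolution --- the one Lemma \ref{lemma:unique} implicitly uses --- is that the ``up to additive constants'' caveat in Theorem \ref{thm_full_unknown} is already absorbed by the location/scale normalization built into the parameter space (Assumption \ref{as:normalize}; in estimation, no intercept and fixed slope coefficients), so that any two distinct elements of $\Theta$ generate fitted probabilities $p_{yd,xz}(\cdot)$ that differ on a set of positive $P_{XZ}$-measure, and strict Jensen then gives $Q_{0}(\theta)<Q_{0}(\theta_{0})$ for every $\theta\neq\theta_{0}$ in $\Theta$. You should replace the ``the metric ignores it'' sentence by this normalization argument. A secondary, smaller point: your uniform-convergence step (``Glivenko--Cantelli plus stochastic equicontinuity'') is where the real work lies; the paper makes it rigorous by deriving the Lipschitz bound $|l(\theta,W)-l(\tilde{\theta},W)|\leq U(W)\,d_{c}(\theta,\tilde{\theta})$ with $E[U(W)^{2}]<\infty$ (equation \eqref{eq:loglikelihood-supnorm}, using Assumptions \ref{as:WD}, \ref{as:data} and \ref{as:copula_derivative}) and by bounding $\log N(\omega,\Theta_{n},d_{c})\leq const.\times k_{n}\log(1+32R/\omega)=o(n)$ under Assumption \ref{as:sieve}; your ingredients are the right ones, but they must be assembled into exactly this Lipschitz-plus-covering-number form for Chen's Condition 3.5M to apply over the growing sieve.
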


\subsection{Convergence Rates}

In this section, we derive the convergence rate of the sieve ML estimator.
The convergence rate provides information on how fast the estimator
converges to the true parameter value. Heuristically, the faster the
convergence rate, the larger the effective sample size is for estimation.
The next theorem demonstrates the convergence rate of the sieve ML
estimator with respect to the $L^{2}$-norm $||\cdot||_{2}$.

\begin{theorem}\label{thm:convergence_rate} Suppose that Assumptions
\ref{as:independence}\textendash \ref{as_copula} and \ref{as:add_support}\textendash \ref{as:copula_derivative}
hold. If Assumption \ref{as:KL-L2} in the online appendix additionally
holds, then we have $||\hat{\theta}_{n}-\theta_{0}||_{2}=O_{p}\left(\max\left\{ \sqrt{k_{n}/n},k_{n}^{-p}\right\} \right)$.
Furthermore, if we choose $k_{n}\propto n^{\frac{1}{2p+1}}$, then
we have $||\hat{\theta}_{n}-\theta_{0}||_{2}=O_{p}\left(n^{-\frac{p}{2p+1}}\right)$.\end{theorem}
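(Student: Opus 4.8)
The plan is to apply the standard convergence-rate machinery for sieve M-estimation (Shen and Wong, 1994; \citet{chen2007large}, Theorem 3.2), in which the rate is pinned down by balancing a deterministic sieve-approximation (bias) error against a stochastic empirical-process (variance) error. Write $Q_0(\theta)\equiv E[l(\theta,W)]$ for the population criterion; by correct specification and the identification of Theorem \ref{thm_full_unknown}, $\theta_0$ uniquely maximizes $Q_0$, and $Q_0(\theta_0)-Q_0(\theta)$ equals the (conditional) Kullback--Leibler divergence of $\theta$ from $\theta_0$. Since Theorem \ref{thm:consistency} already gives $d_c(\hat{\theta}_n,\theta_0)\overset{p}{\rightarrow}0$, I may confine the analysis to a shrinking $d_c$-neighborhood of $\theta_0$. \textbf{Step 1 (curvature).} The first ingredient is a local quadratic lower bound $Q_0(\theta_0)-Q_0(\theta)\geq c\,||\theta-\theta_0||_2^2$ for $\theta\in\Theta_n$ with $||\theta-\theta_0||_2$ small. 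This is precisely the content of Assumption \ref{as:KL-L2}, which equates the KL divergence with the squared $L^2$ distance locally; the uniform lower bound $\underline{p}$ of Assumption \ref{as:WD} and the bounded copula derivatives of Assumption \ref{as:copula_derivative} ensure that the second-order expansion of the log-likelihood in the fitted probabilities is well defined, so that the curvature constant $c$ is strictly positive.

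\textbf{Step 2 (approximation).} By the $p$-smoothness imposed in Assumption \ref{as:para_sp} and the linear-sieve approximation property in Assumption \ref{as:sieve}(ii) (e.g., \citet{lorentz1966approximation}), there exists $\pi_{k_n}\theta_0\in\Theta_n$ with $||\pi_{k_n}\theta_0-\theta_0||_2=O(k_n^{-p})$. Combining this with the upper KL bound from Step 1 yields the bias contribution $Q_0(\theta_0)-Q_0(\pi_{k_n}\theta_0)=O(k_n^{-2p})$.

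\textbf{Step 3 (stochastic error --- the main obstacle).} The crux is to control the modulus of continuity of the centered empirical process $\nu_n(\theta)\equiv\sqrt{n}\,(Q_n-Q_0)(\theta)$ over shells of the sieve. I would first show that $l(\theta,\cdot)-l(\theta_0,\cdot)$ is Lipschitz in $\theta$ in the $L^2(P)$ sense: the square-integrability of $1/\underline{p}$ in Assumption \ref{as:WD} supplies an envelope for $\log p_{yd}$ as a function of the fitted probabilities, while Assumption \ref{as:copula_derivative} makes $p_{yd}(\theta)$ itself Lipschitz in $\theta$, giving $||l(\theta,\cdot)-l(\theta_0,\cdot)||_{L^2(P)}\lesssim||\theta-\theta_0||_2$. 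Because $\Theta_n$ restricts $(h_\epsilon,h_\nu)$ to a $k_n$-dimensional linear sieve, its local metric entropy is $O(k_n\log(1/\varepsilon))$, and Dudley's entropy bound then delivers a continuity modulus $\phi_n(\delta)\lesssim\delta\sqrt{k_n}$ up to logarithmic factors. The associated critical inequality $\sqrt{n}\,\delta_n^2\asymp\phi_n(\delta_n)$ solves to the variance rate $\delta_n\asymp\sqrt{k_n/n}$. Verifying the $L^2(P)$-Lipschitz envelope and the entropy-integral bound uniformly over $\Theta_n$ --- i.e. making the uniform-boundedness Assumptions \ref{as:WD} and \ref{as:copula_derivative} do their work over the whole sieve --- is where essentially all the technical effort lies.

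\textbf{Step 4 (assembly and optimization).} Using the basic inequality $Q_n(\hat{\theta}_n)\geq Q_n(\pi_{k_n}\theta_0)$ and decomposing $Q_n=Q_0+n^{-1/2}\nu_n$, the curvature lower bound of Step 1 (applied on the left) together with the bias bound of Step 2 and the modulus bound of Step 3 (applied on the right) give, after the standard rearrangement, $||\hat{\theta}_n-\theta_0||_2^2=O_p(k_n^{-2p}+k_n/n)$, that is $||\hat{\theta}_n-\theta_0||_2=O_p(\max\{\sqrt{k_n/n},\,k_n^{-p}\})$. Finally, balancing the two terms by setting $k_n^{-p}\asymp\sqrt{k_n/n}$ forces $k_n^{2p+1}\asymp n$, i.e. $k_n\propto n^{1/(2p+1)}$, which produces the optimized rate $O_p(n^{-p/(2p+1)})$ and completes the argument.
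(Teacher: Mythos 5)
Your proposal is correct and follows essentially the same route as the paper's proof: the paper also verifies the conditions of Theorem 3.2 in \citet{chen2007large} (i.i.d.\ data, the variance and envelope bounds via Assumptions \ref{as:WD} and \ref{as:copula_derivative}, the $O(k_n\log(1/\omega))$ entropy of the linear sieve solving to $\kappa_n\propto\sqrt{k_n/n}$, the Lorentz approximation bound $O(k_n^{-p})$, and Assumption \ref{as:KL-L2} for curvature), then balances the two terms. The one place you diverge is Step 3: you assert a direct $L^2$-Lipschitz bound $\|l(\theta,\cdot)-l(\theta_0,\cdot)\|_{L^2(P)}\lesssim\|\theta-\theta_0\|_2$, whereas the paper only establishes Lipschitz continuity in the sup-norm metric $d_c$ and then converts to the $L^2$ norm via the interpolation inequality $\|h_j-h_{j0}\|_{\infty}^{2}\leq\|h_j-h_{j0}\|_{2}^{4p/(2p+1)}$ (Lemma 2 of Chen and Shen, 1998), yielding H\"older continuity with exponent $s=2p/(2p+1)$ rather than $s=1$. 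Your stronger claim is in fact valid here—the densities enter the fitted probabilities only through integrals over subintervals of $[0,1]$, so $\int_0^{G(\cdot)}|h_j-h_{j0}|\leq\|h_j-h_{j0}\|_{2}$ by Cauchy--Schwarz—but you should state that step explicitly, since without it the natural bound is in $d_c$ and the interpolation detour (or your Cauchy--Schwarz observation) is exactly what the entropy calculation needs; either exponent feeds into Chen's theorem and yields the same rate.
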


The former convergence rate is standard in the literature, where the
first term corresponds to variance, which increases in $k_{n}$, and
the second term corresponds to the approximation error $||\theta_{0}-\pi_{k}\theta_{0}||_{2}$,
which decreases in $k_{n}$. The choice of $k_{n}\propto n^{\frac{1}{2p+1}}$
yields the optimal convergence rate, which is slower than the parametric
rate ($n^{-1/2}$). Note that this rate increases with the degree
of smoothness, $p$.

\subsection{Asymptotic Normality of Smooth Functionals}

We now establish the asymptotic normality of smooth functionals. The
parameters in our model contains both finite- and infinite-dimensional
parameters, and many objects of interest are written as functionals
of both types of the parameters. The results of this section can be
used to calculate the standard error of the estimate of a functional
of interest (including the individual finite-dimensional parameters),
or to conduct inference (i.e., testing hypotheses and constructing
confidence intervals) based on normal approximation.

Before proceeding, we strengthen the smoothness condition in Assumption
\ref{as_copula}. Let $C_{ij}(u_{1},u_{2};\rho)$ denote the second-order
partial derivative of a copula function $C(u_{1},u_{2};\rho)$ with
respect to $i$ and $j$, for $i,j\in\{u_{1},u_{2},\rho\}$.

\begin{assumption}\label{as:copula_second} The copula function $C(u_{1},u_{2};\rho)$
is twice continuously differentiable with respect to $u_{1},u_{2},$
and $\rho$, and its first- and second- order partial derivatives
are well defined in a neighborhood of $\theta_{0}$. \end{assumption}

Let $\mathbb{V}$ be the linear span of $\Theta-\{\theta_{0}\}$.
For $t\in[0,1]$, define the directional derivative of $l(\theta,W)$
at the direction $v\in\mathbb{V}$ as 
\begin{align}
\left.\frac{dl(\theta_{0}+tv,W)}{dt}\right|_{t=0} & \equiv\lim_{t\rightarrow0}\frac{l(\theta_{0}+tv,W)-l(\theta_{0})}{t}=\frac{\partial l(\theta_{0},W)}{\partial\psi^{'}}v_{\psi}+\sum_{j\in\{\epsilon,\nu\}}\frac{\partial l(\theta_{0},W)}{\partial h_{j}}[v_{j}],\label{eq:derivative_ll}
\end{align}
where $\frac{\partial l(\theta_{0},W)}{\partial\psi^{'}}v_{\psi}$,
$\frac{\partial l(\theta_{0},W)}{\partial h_{\epsilon}}[v_{\epsilon}]$,
and $\frac{\partial l(\theta_{0},W)}{\partial h_{\nu}}[v_{\nu}]$
are given by equations \eqref{eq:direc_derivative_1}\textendash \eqref{eq:direc_derivative_3}
in the online appendix. If we denote the closed linear span of $\mathbb{V}$
under the Fisher norm $||\cdot||$ by $\bar{\mathbb{V}}$, then $(\bar{\mathbb{V}},||\cdot||)$
is a Hilbert space. 

Let $T:\Theta\rightarrow\mathbb{R}$ be a functional. For any $v\in\mathbb{V}$,
we write 
\[
\frac{\partial T(\theta_{0})}{\partial\theta^{'}}[v]\equiv\lim_{t\rightarrow0}\frac{T(\theta_{0}+tv)-T(\theta_{0})}{t},
\]
 provided the right hand side limit is well defined. The following
assumption characterizes the smoothness of the functional $T$. 

\begin{assumption}\label{as:smooth_T} The following conditions hold: 

(i) there exist constants $w>1+\frac{1}{2p}$ and a small $\epsilon_{0}>0$
such that for any $v\in\mathbb{V}$ with $||v||\leq\epsilon_{0}$,
\[
\left|T(\theta_{0}+v)-T(\theta_{0})-\frac{\partial T(\theta_{0})}{\partial\theta^{'}}[v]\right|=O(||v||^{w});
\]

(ii) For any $v\in\mathbb{V}$, $T(\theta_{0}+tv)$ is continuously
differentiable in $t\in[0,1]$ around $t=0$, and 
\[
\left\Vert \frac{\partial T(\theta_{0})}{\partial\theta^{'}}\right\Vert \equiv\sup_{v\in\mathbb{V},||v||>0}\frac{\left|\frac{\partial T(\theta_{0})}{\partial\theta^{'}}[v]\right|}{||v||}<\infty.
\]

\end{assumption} 

Assumption \ref{as:smooth_T} defines a smooth functional $T$ and
guarantees the existence of $v^{*}\in\bar{\mathbb{V}}$ such that
$<v^{*},v>=\frac{\partial T(\theta_{0})}{\partial\theta^{'}}[v]$
for all $v\in\mathbb{V}$ and $||v^{*}||^{2}=\left\Vert \frac{\partial T(\theta_{0})}{\partial\theta^{'}}\right\Vert ^{2}$.
Here, we call $v^{*}$ the Riesz representer for the functional $T$. 

The next assumption requires that the Riesz representer be well approximated
over the sieve space and that it converges at a rate with respect
to the Fisher norm. 

\begin{assumption}\label{as:smooth_Riesz} There exists $\pi_{n}v^{*}\in\Theta_{n}-\{\theta_{0}\}$
such that $||\pi_{n}v^{*}-v^{*}||=o(n^{-1/4})$. \end{assumption}

The following proposition states that the plug-in sieve ML estimator
$T(\hat{\theta}_{n})$ of $T(\theta_{0})$ is $\sqrt{n}$-asymptotically
normally distributed under certain conditions. The technical conditions
(Assumptions \ref{as:KL-L2}, \ref{as:secondorder} and \ref{as:emp_process})
can be found in the online appendix. 

\begin{proposition}\label{prop:asym general} Suppose that Assumptions
\ref{as:independence}\textendash \ref{as_copula}, \ref{as:add_support}\textendash \ref{as:smooth_Riesz},
\ref{as:KL-L2}\textendash \ref{as:emp_process} are satisfied. If
$k_{n}\propto n^{\frac{1}{2p+1}}$, then we have 
\[
\sqrt{n}(T(\hat{\theta}_{n})-T(\theta_{0}))\overset{d}{\rightarrow}\mathcal{N}\left(0,\left\Vert \frac{\partial T(\theta_{0})}{\partial\theta^{'}}\right\Vert ^{2}\right).
\]
 \end{proposition}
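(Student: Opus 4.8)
The plan is to follow the standard functional-expansion approach for sieve M-estimators (as developed in CFT06 and \citet{chen2007large}), reducing the plug-in statistic to a sample average of a score in the direction of the Riesz representer and then invoking a central limit theorem. First I would linearize $T$ at $\theta_{0}$. By the smoothness condition in Assumption \ref{as:smooth_T}(i),
\[
T(\hat{\theta}_{n})-T(\theta_{0})=\frac{\partial T(\theta_{0})}{\partial\theta^{\prime}}[\hat{\theta}_{n}-\theta_{0}]+O\!\left(\|\hat{\theta}_{n}-\theta_{0}\|^{w}\right),
\]
and since $w>1+\tfrac{1}{2p}$ while Theorem \ref{thm:convergence_rate} (with $k_{n}\propto n^{1/(2p+1)}$) gives $\|\hat{\theta}_{n}-\theta_{0}\|=O_{p}(n^{-p/(2p+1)})$ in the Fisher norm, the remainder is $o_{p}(n^{-1/2})$; the exponent condition on $w$ is chosen precisely so that $w\cdot\tfrac{p}{2p+1}>\tfrac12$. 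By the Riesz representation guaranteed just after Assumption \ref{as:smooth_T}, the linear term equals $\langle v^{*},\hat{\theta}_{n}-\theta_{0}\rangle$, so the statistic reduces to $\langle v^{*},\hat{\theta}_{n}-\theta_{0}\rangle+o_{p}(n^{-1/2})$, with $\|v^{*}\|^{2}=\|\partial T(\theta_{0})/\partial\theta^{\prime}\|^{2}$.

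The core step is to show
\[
\langle v^{*},\hat{\theta}_{n}-\theta_{0}\rangle=\frac{1}{n}\sum_{i=1}^{n}\frac{\partial l(\theta_{0},W_{i})}{\partial\theta^{\prime}}[v^{*}]+o_{p}(n^{-1/2}).
\]
To establish this I would exploit the defining optimization property of $\hat{\theta}_{n}$: perturbing it within the sieve by $\pm\,\varepsilon_{n}\,\pi_{n}v^{*}$ and using $Q_{n}(\hat{\theta}_{n})\geq Q_{n}(\hat{\theta}_{n}\mp\varepsilon_{n}\pi_{n}v^{*})$ yields an approximate first-order condition $\frac{\partial Q_{n}(\hat{\theta}_{n})}{\partial\theta^{\prime}}[\pi_{n}v^{*}]=o_{p}(n^{-1/2})$. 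I would then combine (a) the quadratic expansion of the population criterion, $Q_{0}(\theta_{0})-Q_{0}(\theta)=\tfrac12\|\theta-\theta_{0}\|^{2}+o(\|\theta-\theta_{0}\|^{2})$, in which the Hessian is by construction the Fisher inner product (Assumptions \ref{as:KL-L2} and \ref{as:secondorder}); (b) the stochastic-equicontinuity bound of Assumption \ref{as:emp_process}, which controls the centered empirical process along the direction $\pi_{n}v^{*}$; and (c) the sieve-approximation rate $\|\pi_{n}v^{*}-v^{*}\|=o(n^{-1/4})$ from Assumption \ref{as:smooth_Riesz}. The $n^{-1/4}$ rate is exactly what is needed, because $\pi_{n}v^{*}-v^{*}$ enters quadratically through products of two such error terms and through cross-terms with $\hat{\theta}_{n}-\theta_{0}$, so each product is $o(n^{-1/2})$. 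Collecting these pieces delivers the displayed score representation.

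Finally, the leading term is a normalized sum of i.i.d.\ random variables. By the generalized information equality at $\theta_{0}$ one has $E\big[\frac{\partial l(\theta_{0},W)}{\partial\theta^{\prime}}[v^{*}]\big]=0$ and $\mathrm{Var}\big(\frac{\partial l(\theta_{0},W)}{\partial\theta^{\prime}}[v^{*}]\big)=\langle v^{*},v^{*}\rangle=\|v^{*}\|^{2}$, so the Lindeberg--L\'evy CLT gives
\[
\frac{1}{\sqrt{n}}\sum_{i=1}^{n}\frac{\partial l(\theta_{0},W_{i})}{\partial\theta^{\prime}}[v^{*}]\overset{d}{\rightarrow}\mathcal{N}\!\left(0,\|v^{*}\|^{2}\right),
\]
and substituting $\|v^{*}\|^{2}=\|\partial T(\theta_{0})/\partial\theta^{\prime}\|^{2}$ yields the claim. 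I expect the main obstacle to be the second step---specifically, verifying that both the second-order remainder of the population criterion and the empirical-process remainder are genuinely $o_{p}(n^{-1/2})$ uniformly over the relevant shrinking neighborhood, since this is where the convergence rate of Theorem \ref{thm:convergence_rate}, the equicontinuity in Assumption \ref{as:emp_process}, and the $o(n^{-1/4})$ Riesz-approximation rate of Assumption \ref{as:smooth_Riesz} must be balanced against one another.
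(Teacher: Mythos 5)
Your proposal is correct and follows essentially the same route as the paper's proof, which itself adapts Theorem 1 of CFT06: linearization of $T$ via Assumption \ref{as:smooth_T}, the two-sided perturbation $\hat{\theta}_{n}\pm\zeta_{n}\pi_{n}v^{*}$ exploiting the maximization property of the sieve estimator, control of the population second-order remainder (Assumption \ref{as:secondorder}) and the empirical-process remainder (Assumption \ref{as:emp_process}), the $o(n^{-1/4})$ Riesz-approximation rate to kill the cross terms, and a final i.i.d.\ CLT for the score in the direction $v^{*}$. The balance-of-rates concern you flag at the end is exactly what the paper handles by choosing $\zeta_{n}=o(n^{-1/2})$ with $(\delta_{n})^{3-(2\xi_{1}+\xi_{2})}=\zeta_{n}\,o(n^{-1/2})$, so no genuine gap remains.
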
It is worth noting that, although the parameter
$T(\theta_{0})$ contains an infinite-dimensional object (i.e., the
marginal distributions of $\epsilon$ and $\nu$), the sieve plug-in
estimator is $\sqrt{n}$-estimable due to the fact that $T$ is a
smooth functional. 

\subsubsection{Example 1: Asymptotic Normality for the Finite-Dimensional Parameter
$\psi_{0}$}

The finite-dimensional parameter $\psi_{0}$ is a special case of
the smooth functionals. Here, we demonstrate the asymptotic normality
of the sieve estimator of the finite-dimensional parameter $\psi_{0}$.

\begin{theorem}\label{thm:AN_psi} Suppose that Assumptions \ref{as:independence}\textendash \ref{as_copula},
\ref{as:add_support}\textendash \ref{as:copula_second}, \ref{as:smooth_Riesz},
\ref{as:KL-L2}\textendash \ref{as:nonsingular} hold. Then, we have
\begin{equation}
\sqrt{n}(\hat{\psi}_{n}-\psi_{0})\overset{d}{\rightarrow}\mathcal{N}\left(0,\mathcal{I}_{*}(\psi_{0})^{-1}\right),\label{eq:normal_finite}
\end{equation}
 and the form of $\mathcal{I}_{*}(\psi)$ is given in the online appendix.
\end{theorem}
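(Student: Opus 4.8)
The plan is to recognize that, for each fixed $\lambda\in\mathbb{R}^{d_{\psi}}$, the scalar map $T_{\lambda}(\theta)\equiv\lambda'\psi$ is a smooth functional of exactly the kind treated in Proposition~\ref{prop:asym general}, so that $\sqrt{n}\,\lambda'(\hat{\psi}_{n}-\psi_{0})$ is asymptotically normal, and then to upgrade to joint normality of the full vector by the Cram\'er--Wold device. First I would verify the hypotheses of Proposition~\ref{prop:asym general} for $T_{\lambda}$. Since $T_{\lambda}$ is linear in $\psi$ and does not depend on $(h_{\epsilon},h_{\nu})$, its directional derivative at $\theta_{0}$ in a direction $v=(v_{\psi},v_{\epsilon},v_{\nu})$ is simply $\frac{\partial T_{\lambda}(\theta_{0})}{\partial\theta'}[v]=\lambda'v_{\psi}$; the remainder in Assumption~\ref{as:smooth_T}(i) vanishes identically (so the condition holds for any $w>1+\tfrac{1}{2p}$), and the linear map $v\mapsto\lambda'v_{\psi}$ is bounded in the Fisher norm, giving Assumption~\ref{as:smooth_T}(ii).

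Then I would invoke the Riesz representation of this bounded linear functional on the Hilbert space $(\bar{\mathbb{V}},\|\cdot\|)$: there is $v_{\lambda}^{*}\in\bar{\mathbb{V}}$ with $\langle v_{\lambda}^{*},v\rangle=\lambda'v_{\psi}$ for all $v\in\mathbb{V}$ and $\|v_{\lambda}^{*}\|^{2}=\|\frac{\partial T_{\lambda}(\theta_{0})}{\partial\theta'}\|^{2}$. Assumption~\ref{as:smooth_Riesz}, the $o(n^{-1/4})$ sieve approximability of $v_{\lambda}^{*}$, must be checked for this particular representer; here the nonsingularity condition (Assumption~\ref{as:nonsingular}) ensures $v_{\lambda}^{*}$ is well defined, while the H\"older-class membership of its infinite-dimensional components, guaranteed by the reparametrization onto $[0,1]$ and Assumption~\ref{as:para_sp}, lets the linear sieve of Assumption~\ref{as:sieve} approximate it at the required rate. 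Applying Proposition~\ref{prop:asym general} with $k_{n}\propto n^{1/(2p+1)}$ then yields $\sqrt{n}\,\lambda'(\hat{\psi}_{n}-\psi_{0})\overset{d}{\rightarrow}\mathcal{N}(0,\|v_{\lambda}^{*}\|^{2})$ for each $\lambda$.

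The substantive step is to identify $\|v_{\lambda}^{*}\|^{2}$ with $\lambda'\mathcal{I}_{*}(\psi_{0})^{-1}\lambda$. Using the decomposition \eqref{eq:derivative_ll} of the directional derivative into the $\psi$-score $\frac{\partial l(\theta_{0},W)}{\partial\psi}$ and the nuisance score operators $v_{j}\mapsto\frac{\partial l(\theta_{0},W)}{\partial h_{j}}[v_{j}]$ for $j\in\{\epsilon,\nu\}$, I would characterize $v_{\lambda}^{*}$ as the minimizer of $\|v\|^{2}$ over directions with $\lambda'v_{\psi}$ held fixed. Its $\psi$-block equals $\mathcal{I}_{*}(\psi_{0})^{-1}\lambda$, where $\mathcal{I}_{*}(\psi_{0})=E[S_{\psi}^{*}(S_{\psi}^{*})']$ is the efficient information matrix built from the efficient score $S_{\psi}^{*}$, the residual of $\frac{\partial l(\theta_{0},W)}{\partial\psi}$ after projecting, component by component, onto the closed linear span of the nuisance directions $\{(0,v_{\epsilon},v_{\nu})\}$ in the Fisher-norm Hilbert space. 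A direct computation then gives $\|v_{\lambda}^{*}\|^{2}=\lambda'\mathcal{I}_{*}(\psi_{0})^{-1}\lambda$, and the Cram\'er--Wold device delivers \eqref{eq:normal_finite}.

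The main obstacle is precisely this last identification: solving the projection that defines the least-favorable direction and the efficient score. This requires an explicit description of the nuisance tangent space generated by the square-root-density perturbations of $h_{\epsilon}$ and $h_{\nu}$, a demonstration that the projection exists and is unique (which is exactly where nonsingularity of $\mathcal{I}_{*}(\psi_{0})$ in Assumption~\ref{as:nonsingular} is used), and the verification that the resulting representer meets the $o(n^{-1/4})$ rate of Assumption~\ref{as:smooth_Riesz}. The remaining technical hypotheses of Proposition~\ref{prop:asym general} (Assumptions~\ref{as:KL-L2}, \ref{as:secondorder}, and \ref{as:emp_process}) carry over verbatim, since $T_{\lambda}$ shares the same $\theta_{0}$, sieve space, and rate as the generic functional treated there.
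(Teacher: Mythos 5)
Your proposal follows essentially the same route as the paper's proof: apply Proposition \ref{prop:asym general} to the linear functionals $T_{\lambda}(\theta)=\lambda^{\prime}\psi$ (whose smoothness conditions hold trivially since the remainder vanishes), identify the Fisher-norm operator norm with $\lambda^{\prime}\mathcal{I}_{*}(\psi_{0})^{-1}\lambda$ by projecting the $\psi$-score onto the nuisance directions $\{(0,v_{\epsilon},v_{\nu})\}$ (which is exactly where Assumption \ref{as:nonsingular} enters, guaranteeing finiteness of the sup and hence existence of the Riesz representer), and conclude by the Cram\'er--Wold device. The only cosmetic differences are that you characterize $\Vert v_{\lambda}^{*}\Vert^{2}$ via the least-favorable-direction (constrained minimization) formulation whereas the paper computes the sup ratio directly, and that you propose to verify Assumption \ref{as:smooth_Riesz} for the representer when the theorem simply takes it as a hypothesis; neither affects correctness.
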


The covariance matrix in \eqref{eq:normal_finite} needs to be estimated.
To do so, CFT06 adopt the covariance estimation method proposed by
\citet{ai2003efficient}. Since an infinite-dimensional optimization
is involved in calculating $\mathcal{S}_{\psi_{0}}$, we provide a
sieve estimator of $\mathcal{I}_{*}(\psi_{0})^{-1}$. The sieve spaces
for $b_{\epsilon}$ and $b_{\nu}$ can be the same as those for $h_{\epsilon}$
and $h_{\nu}$, respectively. As in \citet{ai2003efficient}, we first
estimate efficient score functions  by solving the following minimization
problem: for all $k=1,2,...,d_{\psi}$, 
\[
(\hat{b}_{\epsilon k},\hat{b}_{\nu k})\equiv\arg\min_{(b_{\epsilon k},b_{\nu k})\in\mathcal{H}_{\epsilon n}\times\mathcal{H}_{\nu n}}\frac{1}{n}\sum_{i=1}^{n}\left\{ \frac{\partial l(\hat{\theta}_{n},W_{i})}{\partial\psi_{k}}-\left(\frac{\partial l(\hat{\theta}_{n},W_{i})}{\partial h_{\epsilon}}[b_{\epsilon k}]+\frac{\partial l(\hat{\theta}_{n},W_{i})}{\partial h_{\nu}}[b_{\nu k}]\right)\right\} ^{2}.
\]
 Let $\hat{b}_{j}=(\hat{b}_{j1},\hat{b}_{j2},...,\hat{b}_{jd_{\psi}})^{'}$
for given $j\in\{\epsilon,\nu\}$ and compute 
\begin{align*}
\hat{\mathcal{I}}_{*}(\hat{\psi}_{n})= & \frac{1}{n}\sum_{i=1}^{n}\left\{ \left[\frac{\partial l(\hat{\theta}_{n},W_{i})}{\partial\psi}-\left(\frac{\partial l(\hat{\theta}_{n},W_{i})}{\partial h_{\epsilon}}[\hat{b}_{\epsilon}]+\frac{\partial l(\hat{\theta}_{n},W_{i})}{\partial h_{\nu}}[\hat{b}_{\nu}]\right)\right]\right.\\
 & \times\left.\left[\frac{\partial l(\hat{\theta}_{n},W_{i})}{\partial\psi}-\left(\frac{\partial l(\hat{\theta}_{n},W_{i})}{\partial h_{\epsilon}}[\hat{b}_{\epsilon}]+\frac{\partial l(\hat{\theta}_{n},W_{i})}{\partial h_{\nu}}[\hat{b}_{\nu}]\right)\right]^{'}\right\} 
\end{align*}
 to obtain a consistent estimator of $\mathcal{I}_{*}(\psi_{0})$.
We now summarize this result as follows:

\begin{theorem}\label{thm:var_est_para} Suppose that assumptions
in Theorem \ref{thm:AN_psi} hold. Then, $\hat{\mathcal{I}}_{*}(\hat{\psi}_{n})=\mathcal{I}_{*}(\psi_{0})+o_{p}(1)$.
\end{theorem}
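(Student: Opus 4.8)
The plan is to decompose the estimation error into a term that an ordinary law of large numbers handles directly and a term measuring the cost of replacing the unknown quantities $(\theta_0,b_\epsilon^*,b_\nu^*)$ by their sieve estimates $(\hat{\theta}_n,\hat b_\epsilon,\hat b_\nu)$, where $(b_\epsilon^*,b_\nu^*)$ are the population minimizers that deliver the efficient score $\mathcal{S}_{\psi_0}$ underlying $\mathcal{I}_*(\psi_0)=E[\mathcal{S}_{\psi_0}\mathcal{S}_{\psi_0}']$. Writing $S_i\equiv\mathcal{S}_{\psi_0}(W_i)$ for the true efficient score and $\hat S_i$ for the bracketed plug-in expression appearing in the definition of $\hat{\mathcal{I}}_*$, I would split
\[
\hat{\mathcal{I}}_*(\hat{\psi}_n)-\mathcal{I}_*(\psi_0)=\left(\frac{1}{n}\sum_{i=1}^n\hat S_i\hat S_i'-\frac{1}{n}\sum_{i=1}^nS_iS_i'\right)+\left(\frac{1}{n}\sum_{i=1}^nS_iS_i'-\mathcal{I}_*(\psi_0)\right).
\]
The second bracket is $o_p(1)$ by the law of large numbers, since $E\|\mathcal{S}_{\psi_0}\|_E^2<\infty$ follows from Assumptions \ref{as:WD} and \ref{as:data} together with the uniform boundedness of the copula derivatives in Assumption \ref{as:copula_derivative}.

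For the first bracket I would use the identity $\hat S_i\hat S_i'-S_iS_i'=(\hat S_i-S_i)\hat S_i'+S_i(\hat S_i-S_i)'$ and the Cauchy--Schwarz inequality to reduce the problem to showing that
\[
\frac{1}{n}\sum_{i=1}^n\|\hat S_i-S_i\|_E^2=o_p(1),
\]
while keeping $\frac{1}{n}\sum_i\|S_i\|_E^2$ and $\frac{1}{n}\sum_i\|\hat S_i\|_E^2$ bounded in probability; the latter two bounds follow from the same moment conditions once mean-square consistency of $\hat S_i$ is established. I would then decompose $\hat S_i-S_i$ into (a) the effect of replacing $\theta_0$ by $\hat{\theta}_n$ inside the score derivatives, and (b) the effect of replacing the population Riesz directions $(b_\epsilon^*,b_\nu^*)$ by the sieve minimizers $(\hat b_\epsilon,\hat b_\nu)$. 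For part (a) I would invoke the rate $\|\hat{\theta}_n-\theta_0\|_2=O_p(n^{-p/(2p+1)})$ from Theorem \ref{thm:convergence_rate}, together with the twice continuous differentiability and boundedness of the copula (Assumptions \ref{as:copula_derivative} and \ref{as:copula_second}), to show that the score derivatives are Lipschitz in $\theta$ near $\theta_0$ with square-integrable envelopes, so that part (a) is $o_p(1)$.

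For part (b), the sub-problem is to show that the sieve least-squares minimizers $\hat b_j$ converge to $b_j^*$ in the relevant $L^2$ sense. This is the \citet{ai2003efficient} sieve minimum-distance argument adopted in CFT06: one combines uniform convergence of the quadratic sample criterion to its population counterpart over the growing sieve space, the sieve approximation property for the Riesz representer (Assumption \ref{as:smooth_Riesz}), and the nonsingularity of the information matrix (Assumption \ref{as:nonsingular}), which supplies the curvature that identifies the projection. The main obstacle, I expect, is the interaction of (a) and (b): the minimizers $\hat b_j$ are themselves computed at $\hat{\theta}_n$ rather than $\theta_0$, so one must verify that the criterion defining $\hat b_j$ is uniformly close to its population version \emph{uniformly} over the sieve despite $\hat{\theta}_n$ converging only at the nonparametric rate. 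The care required is to confirm that this slower rate enters only through smooth, squared, and hence dominated terms, so that it neither contaminates the consistency of $\hat b_j$ nor that of the outer-product average. Once the mean-square convergence of $\hat S_i-S_i$ is in hand, the claimed consistency $\hat{\mathcal{I}}_*(\hat{\psi}_n)=\mathcal{I}_*(\psi_0)+o_p(1)$ follows from the decomposition above.
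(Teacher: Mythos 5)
Your proposal is correct and follows essentially the same route as the paper, which gives no self-contained argument but simply defers to Theorem 5.1 of \citet{ai2003efficient} (as adopted in CFT06): that proof is exactly the decomposition you describe, namely a law of large numbers for the true efficient-score outer products plus mean-square consistency of the plug-in scores, with the estimated Riesz directions $\hat b_j$ handled by the sieve least-squares/minimum-distance argument under the approximation condition (Assumption \ref{as:smooth_Riesz}) and nonsingularity (Assumption \ref{as:nonsingular}). You also correctly flag the one genuine subtlety the cited proof must address --- that $\hat b_j$ is computed at $\hat{\theta}_n$ rather than $\theta_0$, so the criterion's uniform convergence over the sieve must tolerate the nonparametric rate of $\hat{\theta}_n$ --- which is controlled there by the smoothness/dominance conditions corresponding to Assumptions \ref{as:copula_second} and \ref{as:secondorder}.
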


The proof of the theorem can be found in Theorem 5.1 in \citet{ai2003efficient}. 

\subsubsection{Example 2: Asymptotic Normality for the Conditional ATE}

We now consider the conditional ATE, $E[Y_{1}-Y_{0}|X=x]=F_{\epsilon0}(x^{'}\beta_{0}+\delta_{10})-F_{\epsilon0}(x^{'}\beta_{0})$.
From Proposition \ref{prop:asym general}, we provide the asymptotic
normality of the sieve plug-in estimator of the conditional ATE: 

\begin{theorem} Let $x\in\text{supp}(X)$ be given. Suppose that
the conditions in Proposition \ref{prop:asym general} hold with $T(\theta_{0})=ATE(\theta_{0};x)$.
Then, we have 
\begin{equation}
\sqrt{n}(ATE(\hat{\theta}_{n};x)-ATE(\theta_{0};x))\overset{d}{\rightarrow}\mathcal{N}\left(0,\left\Vert \frac{\partial ATE(\theta_{0};x)}{\partial\theta^{'}}[v]\right\Vert ^{2}\right),\label{eq:ate}
\end{equation}
 where $\left\Vert \frac{\partial ATE(\theta_{0};x)}{\partial\theta^{'}}[v]\right\Vert ^{2}=\sup_{v\in\mathbb{V},||v||>0}\frac{\left|\frac{\partial ATE(\theta_{0};x)}{\partial\theta^{'}}[v]\right|}{||v||}$,
and the form of $\frac{\partial ATE(\theta_{0};x)}{\partial\theta^{'}}[v]$
is given by \eqref{eq:diff_ATE} in the online appendix. 

Furthermore, the asymptotic variance in \eqref{eq:ate} can be estimated
as follows: 
\[
\hat{\sigma}_{ATE(\theta;x)}^{2}\equiv\max_{v\in\Theta_{n}}\left\Vert \frac{\partial ATE(\hat{\theta}_{n};x)}{\partial\theta^{'}}[v]\right\Vert ^{2}.
\]
 \end{theorem}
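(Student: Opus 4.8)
The plan is to exhibit the conditional ATE as a smooth functional of the full parameter and then invoke Proposition \ref{prop:asym general} essentially verbatim. Writing $T(\theta;x)\equiv F_{\epsilon}(x'\beta+\delta_1)-F_{\epsilon}(x'\beta)$ with $F_{\epsilon}(u)=H_{\epsilon}(G_{\epsilon}(u))=\int_0^{G_{\epsilon}(u)}h_{\epsilon}(s)\,ds$ by \eqref{eq:transform}, the entire content of the theorem reduces to checking that $T(\cdot;x)$ satisfies Assumption \ref{as:smooth_T} and that its Riesz representer obeys Assumption \ref{as:smooth_Riesz}; everything else is inherited from the maintained hypotheses of Proposition \ref{prop:asym general}. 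First I would compute the pathwise derivative $\frac{\partial T(\theta_0;x)}{\partial\theta'}[v]$ for $v=(v_\psi',v_\epsilon,v_\nu)'\in\mathbb{V}$. Since $T$ does not depend on $h_\nu$ and depends on $\psi$ only through $(\beta,\delta_1)$, the derivative splits into a finite-dimensional piece from differentiating the integration limits, $f_{\epsilon 0}(x'\beta_0+\delta_{10})(x'v_\beta+v_{\delta_1})-f_{\epsilon 0}(x'\beta_0)\,x'v_\beta$, and an infinite-dimensional piece from the linear dependence of $F_{\epsilon}$ on $h_{\epsilon}$, namely $\int_{G_{\epsilon}(x'\beta_0)}^{G_{\epsilon}(x'\beta_0+\delta_{10})}v_\epsilon(s)\,ds$ (with $v_\epsilon$ the induced perturbation under the square-root parametrization); this reproduces the expression \eqref{eq:diff_ATE}.

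Next I would verify Assumption \ref{as:smooth_T}. For part (ii), the linear map $v\mapsto\frac{\partial T(\theta_0;x)}{\partial\theta'}[v]$ is bounded in the Fisher norm: the finite-dimensional coefficients are finite because $f_{\epsilon 0}$ is bounded by Assumption \ref{as:para_sp}, and the integral against $v_\epsilon$ is a bounded linear functional, so $\sup_{\|v\|>0}|\frac{\partial T(\theta_0;x)}{\partial\theta'}[v]|/\|v\|<\infty$, which guarantees a Riesz representer $v^{*}\in(\bar{\mathbb{V}},\|\cdot\|)$. For part (i) I would carry out a second-order expansion of $T(\theta_0+v;x)$: the remainder collects (a) the curvature in $(\beta,\delta_1)$, controlled by $f_{\epsilon 0}'$, which is bounded since $\sqrt{h_{\epsilon 0}}$ lies in a H\"older ball (Assumption \ref{as:para_sp}), and (b) the cross term from simultaneously perturbing the integration limits and the integrand. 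Both are $O(\|v\|^2)$, so one may take $w=2$; because $p>\tfrac12$ gives $1+\tfrac{1}{2p}<2$, the requirement $w>1+\tfrac{1}{2p}$ holds.

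With Assumption \ref{as:smooth_T} in force and the Riesz representer approximable over the sieve (Assumption \ref{as:smooth_Riesz}, maintained through Proposition \ref{prop:asym general}), applying that proposition with $T(\theta_0)=ATE(\theta_0;x)$ delivers \eqref{eq:ate} with asymptotic variance $\|\frac{\partial ATE(\theta_0;x)}{\partial\theta'}\|^2$. For the variance estimator, I would show that the sieve maximization defining $\hat{\sigma}_{ATE(\theta;x)}^2$ consistently recovers this operator norm, using the consistency of $\hat{\theta}_n$ (Theorem \ref{thm:consistency}) together with the completeness of the sieve spaces (Assumption \ref{as:sieve}), so that the supremum over $\Theta_n$ converges to the supremum over $\mathbb{V}$.

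The main obstacle is part (i) of Assumption \ref{as:smooth_T}, and within it the cross term: perturbing the limits of integration by $t v_\beta$ while perturbing the integrand by $t v_\epsilon$ produces a contribution essentially of $v_\epsilon$ evaluated near the endpoints $G_{\epsilon}(x'\beta_0)$ and $G_{\epsilon}(x'\beta_0+\delta_{10})$, and dominating such a near-pointwise quantity by the Fisher norm $\|v\|$ is the delicate estimate. This is precisely where the H\"older smoothness of the square-root densities and the boundedness of $f_{\epsilon 0}$ and $f_{\epsilon 0}'$ are indispensable; by contrast, the finite-dimensional directions and the pure $h_\epsilon$ direction are comparatively routine.
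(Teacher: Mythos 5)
Your essential step---apply Proposition \ref{prop:asym general} with $T(\theta)=ATE(\theta;x)$, after identifying the pathwise derivative as \eqref{eq:diff_ATE}---is exactly the paper's proof: the theorem's hypotheses already \emph{assume} that the conditions of Proposition \ref{prop:asym general} (including Assumptions \ref{as:smooth_T} and \ref{as:smooth_Riesz} for the ATE functional) hold, so the paper's argument consists of nothing more than this citation together with the derivative computation in the online appendix. In that sense your proposal is correct and coincides with the paper's route. However, you misread where the burden lies: you treat verification of Assumption \ref{as:smooth_T} from primitives as ``the entire content of the theorem,'' when in fact it is hypothesized away, and the extra verification you sketch would not go through as stated. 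Your claim that $w=2$ works relies on $f_{\epsilon 0}'$ being bounded, but Assumption \ref{as:para_sp} only requires $p>\tfrac12$, so $f_{\epsilon0}=h_{\epsilon0}(G(\cdot))g(\cdot)$ need not be differentiable at all; with only H\"older-$p$ continuity of the density the quadratic remainder in $(\beta,\delta_1)$ is $O(|\Delta|^{1+\min(p,1)})$, and the requirement $w>1+\tfrac{1}{2p}$ then imposes $p>1/\sqrt{2}$, which is not implied by the maintained assumptions. Likewise, the ``cross term'' you correctly single out as the main obstacle---dominating a near-pointwise evaluation of $v_\epsilon$ by the Fisher norm $\|v\|$---is asserted to follow from H\"older smoothness but is never actually established; controlling such quantities by the Fisher norm typically requires a nondegeneracy condition in the spirit of Assumption \ref{as:nonsingular}, not smoothness alone. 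Finally, your last step (consistency of $\hat{\sigma}^{2}_{ATE(\theta;x)}$) is also only asserted; the paper itself does not prove it either, so you cannot claim more than the paper does there. None of these defects invalidates your proof of the theorem \emph{as stated}, because everything you failed to establish is assumed in its hypotheses; but had the theorem demanded primitive verification, your sketch would have genuine gaps at precisely the points noted.
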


\subsection{Weighted Bootstrap\label{subsec:bootstrap}}

The asymptotic variances characterized in the previous subsection
can be estimated using the sieve methods. In practice, estimating
asymptotic variances may be sensitive to the choice of the number
of sieve approximation terms. Furthermore, when the dimension of $\theta_{0}$
is large, it is relatively cumbersome to estimate the asymptotic variance
of the sieve estimator for the finite-dimensional parameter. In this
subsection, we briefly discuss the weighted bootstrap as an alternative
procedure.

For general semiparametric M-estimation, \citet{ma2005robust} and
\citet{cheng2010bootstrap} provide the validity of the weighted bootstrap
for finite-dimensional parameters in a class of semiparametric models
that includes our model. Related to these results, \citet{chen2009efficient}
provide the bootstrap validity in semiparametric conditional moment
models. We do not pursue to prove the bootstrap validity in this paper,
as these references sufficiently address it. In our empirical exercise,
we use the weighted bootstrap scheme proposed in these papers to obtain
the standard errors of the estimated functionals of interest. Let
$T(\theta_{0})$ be a smooth functional of interest and $B$ be the
number of bootstrap iterations. The weighted bootstrap is carried
out as follows: 
\begin{enumerate}
\item For each $b=1,2,...,B$, let $\{B_{i}^{(b)}:i=1,2,...,n\}$ be a random
sample generated from a positive random variable $B_{i}$ such that
$EB_{i}=1$, $Var(B_{i})=1$, and is independent of $\{W_{i}:i=1,2,...n\}$.\footnote{Note that the condition on the variance of $B_{i}$ can be relaxed.
In our empirical example, we use $B_{i}\sim\exp(1)$. } 
\item For each bootstrap iteration $b=1,2,...,B$, define $\hat{\theta}_{n}^{*(b)}$
be a bootstrap estimate of $\theta_{0}$: 
\[
\hat{\theta}_{n}^{*(b)}\equiv\arg\max_{\theta\in\tilde{\Theta}_{n}}{\textstyle Q_{n}^{*(b)}(\theta)},
\]
 where $Q_{n}^{*(b)}(\theta)\equiv\frac{1}{n}\sum\limits _{i=1}^{n}B_{i}^{(b)}\cdot l(\theta,W_{i})$.
Obtain the bootstrap estimate of the functional of interest by using
$\hat{\theta}_{n}^{*(b)}$ and denote it by $T(\hat{\theta}_{n}^{*(b)})$. 
\item The bootstrap standard error of $T(\hat{\theta}_{n})$ is given by
$\sqrt{\frac{1}{B}\sum_{b=1}^{B}\left(T(\hat{\theta}_{n}^{*(b)})-\bar{T}_{B}^{*}\right)}$,
where $\bar{T}_{B}^{*}\equiv\frac{1}{B}\sum_{b=1}^{B}T(\hat{\theta}_{n}^{*(b)})$.
 
\end{enumerate}
One may use the bootstrap standard errors to construct confidence
intervals, and such confidence intervals rely on the normal approximation.
As an alternative to the normal approximation, one can use percentile
confidence intervals. For a small $p\in(0,1)$, a $(1-p)\times100$
percent percentile confidence interval for a functional $T(\theta_{0})$
is constructed as follows: 
\[
PCI(p)\equiv\left[Q_{T}^{*}(p/2),\quad Q_{T}^{*}(1-p/2)\right],
\]
 where $Q_{T}^{*}(\tau)$ is the $\tau$-th quantile of bootstrap
estimates $\{T(\hat{\theta}_{n}^{*(b)}):b=1,2,...,B\}$. We suggest
that practitioners use the percentile confidence intervals rather
than the confidence intervals with the bootstrap standard errors.

\section{\label{sec:Monte_Carlo}Monte Carlo Simulation and Sensitivity Analysis}

In this section, we conduct a sensitivity analysis via Monte Carlo
simulation exercises to provide guidance for empirical researchers.
To this end, we investigate the finite sample performance of the sieve
ML estimators of the finite-dimensional parameter $\psi_{0}$ and
the ATE. We compare them with the performance of the parametric ML
estimators under various DGPs and model specifications, and illustrate
how the parametric estimators of $\psi_{0}$ and the ATE suffer from
misspecification of the marginal distribution of $\epsilon$. Note
that the ATE involves $\psi_{0}$ and the marginal of $\epsilon$.

\subsection{Simulation Design}

We compare the performance of the parametric and semiparametric estimators
when the marginal distributions are misspecified in the parametric
models. To calculate the parametric estimators, we specify the parametric
models with normal distributions for the marginals of $\epsilon$
and $\nu$, owing to their popularity. For the DGPs, we consider two
marginals of $\epsilon$ and $\nu$: the standard normal distribution
(to reflect correct specification) and a mixture of normal distributions
(to reflect misspecification).

The DGPs are as follows: 
\begin{align*}
Y_{i} & =\mathbf{1}\{X_{i}\beta+D_{i}\delta_{1}\geq\epsilon\},\quad D_{i}=\mathbf{1}\{X_{i}\alpha+Z_{i}\gamma\geq\nu\},
\end{align*}
 where $(\alpha,\gamma,\beta,\delta_{1})=(-1,0.8,-1,1.1)$, $(X,Z)'\sim\mathcal{N}\left((0,0)',\begin{pmatrix}1 & -0.1\\
-0.1 & 1
\end{pmatrix}\right)$, and $(\epsilon,\nu)^{'}\sim C(F_{\epsilon0}(\cdot),F_{\nu0}(\cdot);\rho)$.
Here, $F_{\epsilon0}$ and $F_{\nu0}$ are normal or a mixture of
normal.\footnote{For the mixture of normal distributions, $\epsilon$ and $\nu$ are
generated from $0.6\mathcal{N}(-1,\sigma^{2})+0.4\mathcal{N}(1.5,\sigma^{2})$
for appropriate $\sigma>0$, so that the mean is zero and the variance
is one.} For $C(\cdot,\cdot;\rho)$, we consider the Gaussian, Frank, Clayton,
and Gumbel copulas, which satisfy the identifying assumption (Assumption
\ref{as_copula}). The dependence structure between $\epsilon$ and
$\nu$ is characterized by a one-dimensional parameter $\rho$ in
all copulas considered, but the interpretation of the dependence parameter
differs across the copulas. To resolve this issue, we report the Spearman's
$\rho$ corresponding to the estimated dependence parameter in each
copula specification. We estimate the models with several values of
$\rho$ to examine whether the performance of the estimators varies
with the degree of dependence. Although we assume that the copula
is correctly specified, economic theory does not provide a justification
for the choice of copula. In this simulation study, we also examine
the effect of copula misspecification on the performance of the estimators.\footnote{Misspecification problems in copula-based models have been documented
using Monte Carlo simulations in the statistic literature (e.g., \citet{kim2007comparison,kim2007semiparametric,lawless2011comparison}).
In particular, \citet{lawless2011comparison} compare the performance
of the parametric and semiparametric ML estimators in a copula-based
model and show that the semiparametric two-step method outperforms
the parametric estimation method when the copula function is misspecified.}

We impose a restriction that $X$ has no constant for the location
normalization, and fix $\alpha$ and $\beta$ to -1 for the scale
normalization. We use these normalizations in both parametric and
semiparametric models, and it allows us to easily compare the performance
of the parametric and semiparametric estimators. We consider two sample
sizes, 500 and 1000, and all results are obtained from 2000 Monte
Carlo replications. As a performance measure of the estimators, we
consider the root mean squared errors (RMSEs) in our simulation. 

\subsection{Estimation of Parametric and Semiparametric Models }

The parametric models can be estimated by the standard ML method.
Since bivariate probit models are commonly used in practice, we specify
the model using the Gaussian copula and normal marginals. In addition
to that, we also try different copulas and normal marginals.\footnote{Such an estimation method in related parameteric models can be found
in \citet{marra2011estimation}. The R package (GJRM) used in their
paper can be used to estimate our parametric model as well.}

Consider semiparametric models. Recall that we assume that $\sqrt{h_{j}}\in\Lambda^{p}([0,1])$.
Therefore, for each $j\in\{\epsilon,\nu\}$, we approximate $h_{j}$
to 
\begin{equation}
h_{j}(x)=\frac{\left(\sum_{k=0}^{k_{nj}}a_{jk}\psi_{jk}(x)\right)^{2}}{\int_{0}^{1}\left(\sum_{k=0}^{k_{nj}}a_{jk}\psi_{jk}(x)\right)^{2}dx},\label{eq:sim_sieve_app}
\end{equation}
where $\{\psi_{jk}(\cdot)\}_{k=0}^{k_{nj}}$ is the set of approximating
functions for $h_{j}(\cdot)$, and $k_{nj}$ is the number of approximating
functions. The approximation in \eqref{eq:sim_sieve_app} guarantees
that $\int_{0}^{1}h_{j}(x)dx=1$ by construction. We take the space
of the polynomials as the sieve space for $h_{\epsilon}$ and $h_{\nu}$.
The orders of the polynomials ($k_{n\epsilon}$ and $k_{n\nu}$) are
set to be proportional to $n^{1/7}$.  To incorporate the specification
given in \eqref{eq:transform}, we choose the standard normal distribution
function for $G$. 

\subsection{Simulation Results}

We begin by examining the simulation results under correct specification
(i.e., the true marginal distributions and the specified marginal
distributions are both normal). Table \ref{tab:correct_500} shows
the simulation results for $n=500$. We find that the ML estimators
of $\psi$ and the ATE perform well in the parametric models, with
negligible biases and small variances.\footnote{The ATE is evaluated at the mean of $X$.}
The performance of the sieve ML estimators of $\psi$ and the ATE
in the semiparametric models is as good as that in the parametric
models, even with this moderate sample size. 

Now, we consider the cases where the marginal distributions are misspecified
in the parametric models. Table \ref{tab:mar_500} considers the case
where the true marginal distributions are a mixture of normal distributions,
but the researcher specifies them as normal distributions. In this
table, the RMSEs of the parametric ML estimators are larger than those
of the sieve ML estimators. This implies that the parametric ML estimators
suffer from misspecification while the sieve ML estimators do not.
Moreover, the parametric estimators of the ATE are substantially distorted
under this misspecification, presumably because the ATE is a function
of the misspecified distribution of $\epsilon$. Note that the poor
performance of the parametric estimators is attributed not only to
large bias, but also large variance. For instance, the bias of the
parametric estimator of the ATE with the Gaussian copula is 0.1377,
which is about eight times larger than that of the corresponding sieve
estimator. These biases of the parametric estimators of the ATE are
substantial in that they do not disappear with the increased sample
size.\footnote{We provide simulation results with a larger sample size ($n=1000$),
and they can be found in the online appendix.} Therefore, the simulation results demonstrate that when the marginal
distributions are misspecified, the sieve estimators outperform the
parametric estimators in terms of the RMSE. The online appendix also
contains simulation results for the cases where both the copula and
the marginal distributions are misspecified. The results show that,
even under copula misspecification, the sieve ML estimators remain
to outperform the parametric counterparts when the marginal distributions
are misspecified.

Overall, the simulation results suggest that researchers are recommended
to use the semiparametric models and the sieve ML estimation proposed
in this paper when they are concerned about model misspecification.
The following is the summary of the main findings from our simulation
study: 
\begin{itemize}
\item [(i)]When the model is correctly specified, the performance of the
sieve ML estimators is comparable to that of the parametric ML estimators. 
\item [(ii)]When the marginal distributions are misspecified, the sieve
ML estimation is recommended in order to improve the performance. 
\item [(iii)]The semiparametric ML estimators performs better than the
parametric ML estimators under both copula and marginal misspecification.
Therefore, the semiparametric models are preferred to the parametric
models in such cases. 
\item [(iv)]Especially for the ATE, whenever the marginal distributions
are misspecified, the parametric ML estimates can be significantly
distorted. 
\end{itemize}
We provide additional simulation results in the online appendix, where
we consider (a) a larger sample size, (b) both copula and marginal
misspecification, (c) different degrees of dependence, (d) marginal
density functions of heavy tails, and (e) the coverage probabilities
of bootstrap confidence intervals. Here is the summary. Across various
simulation designs ((a)\textendash (c)), our main findings remain
the same. When the marginal distributions are believed to have fat
tails, we recommend practitioners to use the transformation function
$G$ that has fat tails. Lastly, the percentile bootstrap works well
with the coverage probabilities close to its nominal level.

\section{\label{sec:Empirical-Example}Empirical Example}

In this section, we illustrate in an application the practical relevance
of the theoretical results developed in this paper. It is widely recognized
that health insurance coverage can be an important factor for patients'
decisions for making medical visits. At the same time, having insurances
is endogenously determined by individual's health status and socioeconomic
characteristics. In our empirical application, we analyze how health
insurance coverage affects an individual's decision to visit a doctor.
In this example, $Y$ is a binary outcome variable indicating whether
an individual visited a doctor's office, and $D$ is the endogenous
treatment variable that indicates whether an individual has her own
private insurance. 

We use the 2010 wave of the Medical Expenditure Panel Survey (MEPS)
as our main data source. We focus on all the visits happened in January,
2010. We restrict the sample to contain individuals with age between
25 and 64, and exclude individuals who have retained any kinds of
federal or state insurance in 2010. For $Z$, we consider two instrumental
variables that are used in \citet{zimmer2018using}\textemdash the
number of employees in the firm at which the individual works and
a dummy variable that indicates whether a firm has multiple locations.
These variables reflect how big the firm is, and the underlying rationale
for using these variables as instruments is as follows: the bigger
the firm is, the more likely it provides fringe benefits including
health insurance. Therefore, it is likely that these instruments affect
insurance status. We can argue, however, that they do not have direct
effects on decisions to visit doctors.\footnote{Note that it is difficult to justify these instruments for individuals
who are either self-employed or unemployed. To avoid this issue, we
exclude those individuals from our analysis.} We assume that these variables are exogenous conditional on covariates.
For additional covariates $W$, we include age, gender, years of education,
family size (the number of family members), income, region, race,
marital status, subjective physical and mental health status evaluations,
and whether living in a metropolitan statistical area. For the exogenous
variable $X$ in our model, we use information about the provision
of paid sick leave, which is separately collected from the National
Compensation Survey published by the U.S. Bureau of Labor Statistics.
We match the information for various industries with the primary dataset
we use. Conditional on the covariates listed above, we assume that
the number of sick leave days and leave benefits are exogenous, by
the same argument as for the instruments. Since $X$ and $Z$ are
assumed to be exogenous only conditional on $W$, we rely on Assumption
\ref{as:independence}$^{\prime}$ instead of Assumption \ref{as:independence}
for identification.

Since we include various control variables, one may concern that the
resulting estimators are imprecise with a moderate sample size. It
is worth emphasizing, however, that our semiparametric estimators
do not suffer from the curse of dimensionality as theoretically shown
in Section \ref{sec:Asymptotic}. This is because of the parametric
index structure in our model. Moreover, we do not attempt to estimate
the distributions of the unobservables conditional on these covariates,
but only estimate the marginal distributions.

Table \ref{tab:Summary} summarizes the variables used in estimation
and shows their summary statistics. While $65.7\%$ of individuals
had private health insurances in January 2010, only $18.2\%$ of them
visited doctors during the period. We use two variables for the pay
sick leave provision (i.e., $X$)\textemdash within each industry,
the percentage of workers who are provided with paid sick leave benefits
and the percentage of workers who are provided with a fixed number
of days for sick leave per year. The summary statistics for these
two variables show that there are sufficient variations across individuals
in different industries. Note that all the continuous variables are
standardized in order to ensure stability in estimation.\footnote{That is, for a continuous random variable $X$, define $\tilde{X}=\frac{X-\bar{X}_{n}}{\hat{sd}(X)}$,
where $\bar{X}_{n}$ and $\hat{sd}(X)$ are the sample average and
standard deviation of $X$, respectively.}

Before estimating the parametric and semiparametric models, we run
a first-stage OLS regression of $D$ on $X$, $W$, and $Z$ to see
if the excluded instruments are weak. The $F$-statistic value is
$167.19$, and thus we assume that the instruments are strong.\footnote{The $F$-statistic in the first-stage linear regression may not be
the best indicator for detecting weak instruments in nonlinear models.
\citet{han2019weakestimation} develop inference methods that are
robust to weak identification for a class of nonlinear models, and
consider bivariate probit models as one of the leading examples.} For the normalization of the parametric model, we you the convention\textemdash $E[\epsilon]=E[\nu]=0$
and $Var(\epsilon)=Var(\nu)=1$. On the other hand, for the semiparametric
model, we impose the normalization used in our simulation studies\textemdash i.e.
we exclude the constant terms and the coefficients on $sick34$ are
fixed to be corresponding parametric estimates.  We choose the Gaussian
copula to capture the dependence structure between $\epsilon$ and
$\nu$. In both models, the standard errors are obtained by the bootstrap
procedure (Section \ref{subsec:bootstrap}), where the bootstrap weights
are generated from the exponential distribution with the parameter
value $1$.

Tables \ref{tab:pe_selection-3234} and \ref{tab:pe_outcome-3234}
present the estimation results for the selection equation and the
outcome equation, respectively. Between the parametric and semiparametric
models, the magnitude and significance of the estimates differs, although,
overall, the signs of the estimates are similar. Table \ref{tab:pe_ates-3234}
shows the ATE estimates evaluated at various values of $X$, as well
as the estimates of the copula parameter $\rho$. The parametric estimate
of $\rho$ is statistically significant under $5\%$ level, whereas
the semiparametric estimate is not. We can find that the parametric
estimates of the ATE are different from the corresponding semiparametric
estimates. For example, the parametric ATE estimate evaluated at the
$50\%$ quantile of $X$ is about $0.129$, which means that having
private insurance increases the probability of visiting doctors by
$12.9\%$. On the other hand, the corresponding semiparametric estimate
shows that the effect is $10.4\%$. The discrepancy in the ATE estimates
between the parametric and semiparametric models suggests the possible
misspecification of the marginals, which is consistent with the premise
of this paper.

\section{\label{sec:Conclusions}Conclusions}

In this paper, we propose semiparametric estimation and inference
methods for generalized bivariate probit models. Specifically, we
develop the asymptotic theory for the sieve ML estimators of semiparametric
copula-based triangular systems with binary endogenous variables.
We show that the sieve ML estimators are consistent and that their
smooth functionals are $\sqrt{n}$-asymptotically normal under some
regularity conditions. This semiparametric estimation approach allows
for flexibility in the models and thus provides robustness in estimation
and inference. 

We conduct a sensitivity analysis to examine how sensitive the estimation
results are to model specifications. The results show that, overall,
the semiparametric sieve ML estimators perform well in terms of both
bias and variance. When the marginal distributions are misspecified,
the sieve ML estimators substantially outperform the parametric ML
estimators and the latter exhibit substantial bias. In particular,
we find that the parametric estimates of the parameters involving
the misspecified marginal distributions, such as the ATE, are highly
misleading. When the model is correctly specified, we find that the
performance of the sieve ML estimators is comparable to that of the
parametric ones. When the copula is also misspecified, the distortion
of the parametric estimates under misspecification of the marginals
can become even more severe, whereas the semiparametric estimates
do not seem to be affected by this misspecification as long as the
copula of the true DGP is within the stochastic ordering class. A
related and interesting question is how the results would change if
the data are not generated from this class of copulas.

We also formally show that the exclusion restriction is not only sufficient,
but is also necessary for identification. Without the exclusion restriction,
the model parameters are not identified or, under the normality assumption,
are, at best, weakly identified. Some empirical studies ignore the
exclusion restriction when estimating the model, and our non-identification
result provides a caveat for practitioners.

\bibliographystyle{chicago}
\bibliography{Est_Tri,Est_Tri2}

\newpage{}

\begin{table}[H]
\begin{centering}
\caption{\label{tab:correct_500}Correct Specification ($n=500$) (True marginal:
normal)}
\bigskip{}
\par\end{centering}
\centering{}{\small{}}%

\]
and
\begin{align*}
\tilde{p}_{11,0} & =C(F_{\tilde{\varepsilon}}(F_{\tilde{\varepsilon}}^{-1}(t_{0})+\delta_{1}),q_{0};\rho),\\
\tilde{p}_{11,1} & =C(F_{\tilde{\varepsilon}}(F_{\tilde{\varepsilon}}^{-1}(t_{1})+\delta_{1}),q_{1};\rho),\\
\tilde{p}_{10,0} & =t_{0}-C(t_{0},q_{0};\rho),\\
\tilde{p}_{10,1} & =t_{1}-C(t_{1},q_{1};\rho),\\
\tilde{p}_{00,0} & =1-t_{0}-q_{0}+C(t_{0},q_{0};\rho),\\
\tilde{p}_{00,1} & =1-t_{1}-q_{1}+C(t_{1},q_{1};\rho),
\end{align*}
where $\tilde{p}_{yd,x}\equiv\Pr[Y=y,D=d|X_{1}=x]$. Again, we want
to show that, given $(q_{0},q_{1})$ which are identified from the
reduced-form equation, there are two distinct sets of parameter values
$(t_{0},t_{1},\delta_{1},\rho)$ and $(t_{0}^{*},t_{1}^{*},\delta_{1}^{*},\rho^{*})$
(with $(t_{0},t_{1},\delta_{1},\rho)$ $\ne$ $(t_{0}^{*},t_{1}^{*},\delta_{1}^{*},\rho^{*})$)
that generate the same observed fitted probabilities $\tilde{p}_{yd,0}$
and $\tilde{p}_{yd,1}$ for all $(y,d)\in\{0,1\}^{2}$. In showing
this, the following lemma is useful:

\begin{lemma}\label{lem:si_to_c}Assumption \ref{as_copula} implies
that, for any $(u_{1},u_{2})\in(0,1)^{2}$ and $\rho\in\Omega$, 
\begin{equation}
C_{\rho}(u_{1},u_{2};\rho)>0.\label{concordance}
\end{equation}
\end{lemma}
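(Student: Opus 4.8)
The plan is to show that Assumption \ref{as_copula} forces the copula CDF to be strictly increasing in $\rho$ at every interior point, which is exactly the claim $C_\rho>0$. The bridge is the standard fact (see \citet{Joe1997}) that the stochastic-increasing / regression-dependence ordering $\prec_{S}$ is stronger than the concordance (PQD) ordering: if $C(\cdot;\rho_1)\prec_{S}C(\cdot;\rho_2)$, then $C(u_1,u_2;\rho_1)\le C(u_1,u_2;\rho_2)$ for all $(u_1,u_2)\in(0,1)^2$. This is precisely the first-order stochastic dominance reformulation of Definition \ref{def_moreSI} noted in the main text. Applying it to arbitrary $\rho_1<\rho_2$ shows that $\rho\mapsto C(u_1,u_2;\rho)$ is nondecreasing, and by the differentiability assumed in Assumption \ref{as_copula} this already yields $C_\rho(u_1,u_2;\rho)\ge 0$.

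The hard part is upgrading this weak inequality to the strict inequality \eqref{concordance}, since monotonicity in $\rho$ by itself permits $C_\rho$ to vanish. To rule this out I would pass to the infinitesimal version of the ordering. Using $C(u_1|u_2;\rho)=C_2(u_1,u_2;\rho)$, the re-leveling map satisfies $C_2(\psi,u_2;\rho_2)=C_2(u_1,u_2;\rho_1)$; differentiating in $\rho_2$ at $\rho_2=\rho_1=\rho$ (where $\psi=u_1$) gives the first-order shift
\[
\left.\frac{\partial\psi}{\partial\rho_2}\right|_{\rho_2=\rho}=-\frac{\partial_\rho C_2(u_1,u_2;\rho)}{\partial_{u_1}C_2(u_1,u_2;\rho)}\equiv\phi(u_1,u_2),
\]
where $\partial_{u_1}C_2=\partial^2 C/\partial u_1\partial u_2>0$ is the (positive) copula density. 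Since the ordering in Assumption \ref{as_copula} requires $\psi$ to be increasing in the conditioning argument $u_2$ for every $\rho_2>\rho_1$, letting $\rho_2\downarrow\rho_1$ in $\psi\approx u_1+(\rho_2-\rho_1)\phi(u_1,u_2)$ shows that $\phi(u_1,\cdot)$ is monotone in $u_2$, and the strictness built into Definition \ref{def_moreSI} makes it nonconstant wherever the family genuinely depends on $\rho$.

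Finally I would combine this monotonicity with the boundary identities $C(u_1,0;\rho)=0$ and $C(u_1,1;\rho)=u_1$, which are free of $\rho$ and hence give $\int_0^1\partial_\rho C_2(u_1,s;\rho)\,ds=0$. Because $\partial_\rho C_2=-\phi\,\partial_{u_1}C_2$ with $\partial_{u_1}C_2>0$ and $\int_0^1\partial_{u_1}C_2(u_1,s;\rho)\,ds=1$, the monotone, nonconstant, mean-zero function $\phi(u_1,\cdot)$ must change sign exactly once (negative then positive), so $\partial_\rho C_2(u_1,\cdot;\rho)$ is positive then negative. Consequently $C_\rho(u_1,u_2;\rho)=\int_0^{u_2}\partial_\rho C_2(u_1,s;\rho)\,ds$ rises from $0$ at $u_2=0$ and returns to $0$ only at $u_2=1$, so it is strictly positive for every $u_2\in(0,1)$, and the symmetric argument in $u_1$ completes \eqref{concordance}. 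The main obstacle is exactly this strictness step: the single-crossing structure of $\partial_\rho C_2$ is what keeps $C_\rho$ off zero in the interior, and the one point requiring care is any location where the family is locally insensitive to $\rho$ (i.e. $\phi\equiv0$ on a slice), where one must argue directly that this would contradict the strict ordering $C(\cdot;\rho_1)\prec_{S}C(\cdot;\rho_2)$.
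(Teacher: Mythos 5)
Your first two steps are sound: the SI ordering does imply the weak concordance ordering, hence $C_{\rho}\geq0$; and your single-crossing computation is correct as far as it goes. With $\phi\equiv-\partial_{\rho}C_{2}/\partial_{u_{1}}C_{2}$ weakly increasing in $u_{2}$ (weak monotonicity is all the limit $\rho_{2}\downarrow\rho_{1}$ can deliver) and $\int_{0}^{1}\partial_{\rho}C_{2}(u_{1},s;\rho)\,ds=0$, one gets on each slice $\{u_{1}\}\times(0,1)$ the dichotomy: either $C_{\rho}(u_{1},\cdot\,;\rho)\equiv0$ or $C_{\rho}(u_{1},\cdot\,;\rho)>0$ throughout $(0,1)$. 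The genuine gap is exactly the step you flag: ruling out the degenerate alternative by ``contradiction with the strict ordering.'' That step cannot be completed, because a vanishing $\rho$-derivative does not contradict Assumption \ref{as_copula}. The ordering \eqref{ssi} compares pairs of distinct parameter values and is invariant under any smooth, strictly increasing reparametrization of $\rho$, while such a reparametrization can annihilate the derivative at a point: if $C^{Ga}$ denotes the Gaussian copula (which satisfies Assumption \ref{as_copula}), then $C(u_{1},u_{2};\rho)\equiv C^{Ga}(u_{1},u_{2};\rho^{3})$ is twice differentiable and strictly SI-ordered in $\rho$, yet $C_{\rho}(u_{1},u_{2};0)=0$ for all $(u_{1},u_{2})$. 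So no argument can pass from Assumption \ref{as_copula} to the pointwise inequality \eqref{concordance}; differentiating the strict inequality $C(u_{1},u_{2};\rho_{1})<C(u_{1},u_{2};\rho_{2})$ in $\rho$ only ever yields a weak one.

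This is also why the paper's proof looks nothing like an infinitesimal argument. It fixes $\rho_{1}<\rho_{2}$, invokes Theorem 2.9 of \citet{Joe1997} to write $(U_{1},U_{2})\overset{d}{=}(\tilde{U}_{1},\psi(U_{1},U_{2}))$ with $\psi$ increasing in both arguments, and then, by a two-case probabilistic comparison (according to whether $\psi(u_{1},u_{2})\leq u_{2}$ or $\psi(u_{1},u_{2})>u_{2}$, a strict sharpening of the proof of Theorem 2.14 of \citet{Joe1997}), concludes the strict concordance ordering $C(u_{1},u_{2};\rho_{1})<C(u_{1},u_{2};\rho_{2})$ at every interior point. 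That finite-difference statement---strict monotonicity of $\rho\mapsto C(u_{1},u_{2};\rho)$---is what the argument for Theorem \ref{thm:lack} actually consumes (e.g., to conclude $t_{0}^{*}>t_{0}$), and it is the correct strict conclusion obtainable from the assumption. If you want your approach to land on something true, either target that finite-difference conclusion directly (your weak step plus the paper's case analysis yields it), or state the derivative conclusion as $C_{\rho}\geq0$ together with your slicewise dichotomy; the literal pointwise claim \eqref{concordance} does not follow from the stated hypotheses.
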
The proof of this lemma can be found below.

Now fix $(q_{0},q_{1})\in(0,1)^{2}$. First, consider the fitted probability
$\tilde{p}_{10,0}$. Given $t_{0}\in(0,1)$ and $\rho\in\Omega$,
note that, for $\rho^{*}>\rho$,\footnote{The inequality here and other inequalities implied from this (e.g.,
$t_{0}^{*}>t_{0}$, and etc.) are assumed only for concreteness.} there exists a solution $t_{0}^{*}=t_{0}^{*}(t_{0},q_{0},\rho,\rho^{*})$
such that 
\begin{align}
t_{0}-C(t_{0},q_{0};\rho) & =\Pr[u_{1}\leq t_{0},u_{2}\ge q_{0};\rho]\label{eq:quad1}\\
 & =\Pr[u_{1}\leq t_{0}^{*},u_{2}\ge q_{0};\rho^{*}]\label{eq:quad2}\\
 & =t_{0}^{*}-C(t_{0}^{*},q_{0};\rho^{*}),\nonumber 
\end{align}
and note that by Assumption \ref{as_copula} and a variant of Lemma
\ref{lem:si_to_c}, we have that $t_{0}^{*}>t_{0}$. Here, $(t_{0},q_{0},\rho)$
and $(t_{0}^{*},q_{0},\rho^{*})$ result in the same observed probability
$\tilde{p}_{10,0}=t_{0}-C(t_{0},q_{0};\rho)=t_{0}^{*}-C(t_{0}^{*},q_{0};\rho^{*})$.
Now consider the fitted probability $\tilde{p}_{11,0}$. Choose $\delta_{1}=0$.
Also let $F_{\tilde{\varepsilon}}\sim Unif(0,1)$ only for simplicity,
which is relaxed later. Then there exists a solution $t_{0}^{\dagger}=t_{0}^{\dagger}(t_{0},q_{0},\rho,\rho^{*})$
such that 
\begin{align}
C(t_{0},q_{0};\rho) & =\Pr[u_{1}\leq t_{0},u_{2}\leq q_{0};\rho]\label{eq:quad3}\\
 & =\Pr[u_{1}\leq t_{0}^{\dagger},u_{2}\leq q_{0};\rho^{*}]\label{eq:quad4}\\
 & =C(t_{0}^{\dagger},q_{0};\rho^{*}),\nonumber 
\end{align}
and note that $t_{0}^{\dagger}<t_{0}$ by Assumption \ref{as_copula}
and Lemma \ref{lem:si_to_c}. Then, by letting $\delta_{1}^{*}=t_{0}^{\dagger}-t_{0}^{*}$,
$(t_{0},q_{0},\delta_{1},\rho)$ and $(t_{0}^{*},q_{0},\delta_{1}^{*},\rho^{*})$
satisfy $\tilde{p}_{11,0}=C(t_{0}+0,q_{0};\rho)=C(t_{0}^{*}+\delta_{1}^{*},q_{0};\rho^{*})$.
Lastly, note that $\tilde{p}_{00,0}=1-q_{0}-\tilde{p}_{10,0}$ and
$\tilde{p}_{01,0}=q_{0}-\tilde{p}_{11,0}$, and so $(t_{0},\delta_{1},\rho)$
and $(t_{0}^{*},\delta_{1}^{*},\rho^{*})$ above will also result
in the same values of $\tilde{p}_{00,0}$ and $\tilde{p}_{01,0}$.

It is tempting to have a parallel argument for $\tilde{p}_{10,1}$,
$\tilde{p}_{11,1}$, $\tilde{p}_{00,1}$, and $\tilde{p}_{01,1}$,
but there is a complication. Although other parameters are not, $\delta_{1}$
and $\rho$ are common in both sets of probabilities. Therefore, we
proceed as follows. First, consider $\tilde{p}_{10,1}$. Given $t_{1}\in(0,1)$
and the above choice of $\rho^{*}\in\Omega$, note that there exists
a solution $t_{1}^{*}=t_{1}^{*}(t_{1},q_{1},\rho,\rho^{*})$ such
that 
\begin{align}
t_{1}-C(t_{1},q_{1};\rho) & =\Pr[u_{1}\leq t_{1},u_{2}\ge q_{1};\rho]\label{eq:quad5}\\
 & =\Pr[u_{1}\leq t_{1}^{*},u_{2}\ge q_{1};\rho^{*}]\label{eq:quad6}\\
 & =t_{1}^{*}-C(t_{1}^{*},q_{1};\rho^{*}),\nonumber 
\end{align}
and similarly as before, we have $t_{1}^{*}>t_{1}$. Here, $(t_{1},q_{1},\rho)$
and $(t_{1}^{*},q_{1},\rho^{*})$ result in the same observed probability
$\tilde{p}_{10,1}=t_{1}-C(t_{1},q_{1};\rho)=t_{1}^{*}-C(t_{1}^{*},q_{1};\rho^{*})$.
Now consider $\tilde{p}_{11,1}$. Recall $\delta_{1}=0$ and $F_{\varepsilon}\sim Unif(0,1)$.
Then there exists a solution $t_{1}^{\dagger}=t_{1}^{\dagger}(t_{1},q_{1},\rho,\rho^{*})$
such that 
\begin{align}
C(t_{1},q_{1};\rho) & =\Pr[u_{1}\leq t_{1},u_{2}\leq q_{1};\rho]\label{eq:quad7}\\
 & =\Pr[u_{1}\leq t_{1}^{\dagger},u_{2}\leq q_{1};\rho^{*}]\label{eq:quad8}\\
 & =C(t_{1}^{\dagger},q_{1};\rho^{*}),\nonumber 
\end{align}
and thus $t_{1}^{\dagger}<t_{1}$. Then, if we can show that 
\begin{equation}
t_{1}^{\dagger}=t_{1}^{*}+\delta_{1}^{*},\label{eq:t_dagger}
\end{equation}
where $t_{1}^{*}$ and $\delta_{1}^{*}$ are the values already determined
above, then $(t_{1},q_{1},\delta_{1},\rho)$ and $(t_{1}^{*},q_{1},\delta_{1}^{*},\rho^{*})$
result in $\tilde{p}_{11,1}=C(t_{1}+0,q_{1};\rho)=C(t_{1}^{*}+\delta_{1}^{*},q_{1};\rho^{*})$.
Then similar as before, the two sets of parameters will generate the
same values of $\tilde{p}_{00,1}=1-q_{1}-\tilde{p}_{10,1}$ and $\tilde{p}_{01,1}=q_{1}-\tilde{p}_{11,1}$.
Consequently, $(t_{0},t_{1},q_{0},q_{1},\delta_{1},\rho)$ and $(t_{0}^{*},t_{1}^{*},q_{0},q_{1},\delta_{1}^{*},\rho^{*})$
generate the same entire observed fitted probabilities. The remaining
question is whether we can find $(t_{0},t_{1},\delta_{1},\rho)$ and
$(t_{0}^{*},t_{1}^{*},\delta_{1}^{*},\rho^{*})$ such that \eqref{eq:t_dagger}
holds.

To show this, we choose further specifications. We assume a normal
copula.\footnote{This choice is not critical except that we can have $\rho$ reach
to 1.} We choose $\rho=0$, $\rho^{*}=1$, $q_{0}=t_{0}=1/3$, and $q_{1}=t_{1}=2/3$.
Since $(U_{1},U_{2})$ are jointly uniform, note that when $\rho=0$,
the probability of the quadrant in $[0,1]^{2}$ specified by each
of \eqref{eq:quad1}, \eqref{eq:quad3}, \eqref{eq:quad5}, and \eqref{eq:quad7}
equals the volume of the quadrant. When $\rho^{*}=1$, all the probability
mass lies on the 45 degree line in $[0,1]^{2}$ and no where else,
so the probability of a quadrant specified by each of \eqref{eq:quad2},
\eqref{eq:quad4}, \eqref{eq:quad6}, and \eqref{eq:quad8} equals
the length of the 45 line which intersects with that quadrant. Suppose
that the following observational equivalence holds: 
\begin{align*}
\Pr[u_{1}\leq t_{0},u_{2}\ge q_{0};\rho] & =\Pr[u_{1}\leq t_{0}^{*},u_{2}\ge q_{0};\rho^{*}]=2/9,\\
\Pr[u_{1}\leq t_{0},u_{2}\leq q_{0};\rho] & =\Pr[u_{1}\leq t_{0}^{\dagger},u_{2}\leq q_{0};\rho^{*}]=1/9,\\
\Pr[u_{1}\leq t_{1},u_{2}\ge q_{1};\rho] & =\Pr[u_{1}\leq t_{1}^{*},u_{2}\ge q_{1};\rho^{*}]=2/9,\\
\Pr[u_{1}\leq t_{1},u_{2}\leq q_{1};\rho] & =\Pr[u_{1}\leq t_{1}^{\dagger},u_{2}\leq q_{1};\rho^{*}]=4/9.
\end{align*}
One can easily show that these equations yield that $t_{0}^{*}=5/9$,
$t_{0}^{\dagger}=1/9$, $t_{1}^{*}=8/9$, and $t_{1}^{\dagger}=4/9$.
Consider the equation \eqref{eq:t_dagger}, which can be rewritten
as $t_{1}^{\dagger}=t_{1}^{*}+t_{0}^{\dagger}-t_{0}^{*}$ or $t_{1}^{\dagger}-t_{1}^{*}=t_{0}^{\dagger}-t_{0}^{*}$.
Then, note that we have $t_{1}^{\dagger}-t_{1}^{*}=t_{0}^{\dagger}-t_{0}^{*}=-4/9$,
which is, in fact, the value of $\delta_{1}^{*}$. In sum, the values
of parameters that give the observationally equivalent fitted probabilities
are 
\begin{align}
(t_{0},t_{1},q_{0},q_{1},\delta_{1},\rho) & =\left(\frac{1}{3},\frac{2}{3},\frac{1}{3},\frac{2}{3},0,0\right),\label{eq:counterex1}\\
(t_{0}^{*},t_{1}^{*},q_{0},q_{1},\delta_{1}^{*},\rho^{*}) & =\left(\frac{5}{9},\frac{8}{9},\frac{1}{3},\frac{2}{3},-\frac{4}{9},1\right).\label{eq:counterex2}
\end{align}

This argument can be made slightly more general, and thus the counterexample
more realistic, by relaxing $F_{\tilde{\varepsilon}}\sim Unif(0,1)$
and $\rho^{*}=1$. We show that a similar argument goes through with
$F_{\tilde{\varepsilon}}$ being a general distribution function with
a symmetric density function, and $-1\leq\rho^{*}\leq1$ as long as
the copula density is symmetric around $u_{2}=u_{1}$ (i.e., the 45
degree line) and $u_{2}=1-u_{1}$. Let $F\equiv F_{\tilde{\varepsilon}}$
be a general distribution whose density function is symmetric. Then
there exists a solution $s_{0}^{\dagger}=s_{0}^{\dagger}(t_{0},q_{0},\rho,\rho^{*})$
such that 
\begin{align*}
C(F(F^{-1}(t_{0})+0),q_{0};\rho) & =\Pr[u_{1}\leq t_{0},u_{2}\leq q_{0};\rho]\\
 & =\Pr[u_{1}\leq s_{0}^{\dagger},u_{2}\leq q_{0};\rho^{*}]\\
 & =C(s_{0}^{\dagger},q_{0};\rho^{*}).
\end{align*}
Then, by letting $\delta_{1}^{*}=F^{-1}(s_{0}^{\dagger})-F^{-1}(t_{0}^{*})$,
we have $s_{0}^{\dagger}=F(F^{-1}(t_{0}^{*})+\delta_{1}^{*})$ and
therefore $(t_{0},q_{0},\delta_{1},\rho)$ and $(t_{0}^{*},q_{0},\delta_{1}^{*},\rho^{*})$
result in $p_{11,x}=C(F(F^{-1}(t_{0})+0),q_{0};\rho)=C(F(F^{-1}(t_{0}^{*})+\delta_{1}^{*}),q_{0};\rho^{*})$.
Suppose that $\delta_{1}=0$. Then there exists a solution $s_{1}^{\dagger}=s_{1}^{\dagger}(t_{1},q_{1},\rho,\rho^{*})$
such that 
\begin{align*}
C(F(F^{-1}(t_{1})+0),q_{1};\rho) & =\Pr[u_{1}\leq t_{1},u_{2}\leq q_{1};\rho]\\
 & =\Pr[u_{1}\leq s_{1}^{\dagger},u_{2}\leq q_{1};\rho^{*}]\\
 & =C(s_{1}^{\dagger},q_{1};\rho^{*}).
\end{align*}
Then, if we can show that 
\[
F^{-1}(s_{1}^{\dagger})=F^{-1}(t_{1}^{*})+\delta_{1}^{*},
\]
then $s_{1}^{\dagger}=F(F^{-1}(t_{1}^{*})+\delta_{1}^{*})$ and therefore
$(t_{1},q_{1},\delta_{1},\rho)$ and $(t_{1}^{*},q_{1},\delta_{1}^{*},\rho^{*})$
result in $\tilde{p}_{11,1}=C(F(F^{-1}(t_{1})+0),q_{1};\rho)=C(F(F^{-1}(t_{1}^{*})+\delta_{1}^{*}),q_{1};\rho)$.
Note $F^{-1}(s_{1}^{\dagger})=F^{-1}(t_{1}^{*})+\delta_{1}^{*}$ can
be rewritten as $F^{-1}(s_{1}^{\dagger})=F^{-1}(t_{1}^{*})+F^{-1}(s_{0}^{\dagger})-F^{-1}(t_{0}^{*})$
or 
\begin{equation}
F^{-1}(s_{1}^{\dagger})-F^{-1}(t_{1}^{*})=F^{-1}(s_{0}^{\dagger})-F^{-1}(t_{0}^{*}).\label{eq:delta_star2}
\end{equation}
But note that since the density of $F$ is symmetric, any two values
$s$ and $\tilde{s}$ in $(0,1)$ that are symmetric around $u_{1}=1/2$
will satisfy $F^{-1}(s)=-F^{-1}(\tilde{s})$. Therefore, since in
our example $s_{0}^{\dagger}$ and $t_{1}^{*}$ are symmetric around
$u_{1}=1/2$, and so are $s_{1}^{\dagger}$ and $t_{0}^{*}$, we have
the desired result \eqref{eq:delta_star2}, and the counterexample
\eqref{eq:counterex1}\textendash \eqref{eq:counterex2} remains valid.
Note that the symmetry of the density function of $F$ plays a key
role here; the uniform distribution trivially satisfies the condition
as does the normal distribution.

The above counter-example to identification involves a parameter on
the boundary of the parameter space ($\rho^{*}=1$), while the identification
results in the paper assume that the parameter space is open and thus
that $\rho\in(-1,1)$. We now show that the key idea of the argument
remains the same with $-1<\rho^{*}<1$. Suppose that the copula density
is symmetric around $u_{2}=u_{1}$ and $u_{2}=1-u_{1}$. The normal
copula satisfies this condition for any $\rho\in(-1,1)$. Because
of this condition, the symmetry of $s_{0}^{\dagger}$ and $t_{1}^{*}$
(and of $s_{1}^{\dagger}$ and $t_{0}^{*}$) around $u_{1}=1/2$ does
not break at a different value of $\rho^{*}$, even though the values
of $s_{0}^{\dagger}$, $t_{1}^{*}$, $s_{1}^{\dagger}$, and $t_{0}^{*}$
themselves change. Therefore, \eqref{eq:delta_star2} continues to
hold with $\rho^{*}\neq1$.

\subsection{Proof of Lemma \ref{lem:si_to_c}}

The proof of Lemma \ref{lem:si_to_c} is a slight modification of
the proof of Theorem 2.14 of \citet[p. 44]{Joe1997}. Suppose $C_{2|1}\prec_{S}\tilde{C}_{2|1}$.
Let $(U_{1},U_{2})\sim C$, $(\tilde{U}_{1},\tilde{U}_{2})\sim\tilde{C}$,
with $U_{j}\overset{d}{=}\tilde{U}_{j}$, $j=1,2$. By Theorem 2.9
of \citet[p. 40]{Joe1997}, $(U_{1},U_{2})\overset{d}{=}(\tilde{U}_{1},\psi(U_{1},U_{2}))$
with $\psi(u_{1},u_{2})=\tilde{C}_{2|1}^{-1}(C_{2|1}(u_{2}|u_{1})|u_{1})$.
Since $C_{2|1}\prec_{S}\tilde{C}_{2|1}$, $\psi$ is increasing in
$u_{1}$ and $u_{2}$. We consider two cases: 
\begin{itemize}
\item Case 1: Suppose that $u_{1}$ and $u_{2}$ are such that $\psi(u_{1},u_{2})\leq u_{2}$.
Then 
\begin{align*}
\tilde{C}(u_{1},u_{2}) & =\Pr[\tilde{U}_{1}\leq u_{1},\tilde{U}_{2}\leq u_{2})]=\Pr[\tilde{U}_{1}<u_{1},\tilde{U}_{2}<u_{2})]\\
 & =\Pr[U_{1}<u_{1},\psi(U_{1},U_{2})<u_{2})]\geq\Pr[U_{1}<u_{1},\psi(u_{1},U_{2})<u_{2}]\\
 & >\Pr[U_{1}<u_{1},U_{2}<u_{2})]=C(u_{1},u_{2})
\end{align*}
where the strict inequality holds since $U_{2}<u_{2}$ implies $\psi(u_{1},U_{2})\leq\psi(u_{1},u_{2})\leq u_{2}$
(but not vice versa since $\psi(u_{1},U_{2})\leq u_{2}$ and $\psi(u_{1},u_{2})\leq u_{2}$
does not necessarily imply $U_{2}<u_{2}$ and $\Pr[\psi(u_{1},u_{2})<\psi(u_{1},U_{2})]=\Pr[u_{2}<U_{2}]\neq0$),
and the second last inequality holds since, given $U_{1}<u_{1}$,
$\psi(U_{1},U_{2})\leq\psi(u_{1},U_{2})<u_{2}$. 
\item Case 2: Suppose that $u_{1}$ and $u_{2}$ are such that $\psi(u_{1},u_{2})>u_{2}$.
Then 
\begin{align*}
u_{2}-C(u_{1},u_{2}) & =\Pr[U_{1}>u_{1},U_{2}<u_{2})]>\Pr[U_{1}>u_{1},\psi(u_{1},U_{2})\leq u_{2})]\\
 & \geq\Pr[U_{1}>u_{1},\psi(U_{1},U_{2})\leq u_{2})]=\Pr[\tilde{U}_{1}>u_{1},\tilde{U}_{2}<u_{2}]=u_{2}-\tilde{C}(u_{1},u_{2})
\end{align*}
where the strict inequality holds since $U_{2}>u_{2}$ implies $\psi(u_{1},U_{2})\geq\psi(u_{1},u_{2})>u_{2}$
or $\psi(u_{1},U_{2})\leq u_{2}$ implies $U_{2}\leq u_{2}$ (but
not vice versa). 
\end{itemize}
Therefore in both cases, $C(u_{1},u_{2})<\tilde{C}(u_{1},u_{2})$
for any $u_{1}$ and $u_{2}$.

\section{\label{sec:Est_proofs}Proofs of Results in Section \ref{sec:Asymptotic}}

\subsection{Identification under Transformation of Marginal Distribution Functions}

Recall that we consider the following specification of the marginal
distribution functions to derive the asymptotic theory for the sieve
ML estimator: 
\begin{equation}
F_{\epsilon0}(x)=H_{\epsilon0}(G(x)),\quad F_{\nu0}(x)=H_{\nu0}(G_{\nu}(x)),\label{eq:app_transform}
\end{equation}
 where $G:\mathbb{R}\rightarrow[0,1]$ is a strictly increasing function
with its derivative $g(x)\equiv\frac{dG(x)}{dx}$ and $g(x)$ is bounded
away from zero on $\mathbb{R}$.

We first verify that there exist $H_{\epsilon0}$ and $H_{\nu0}$
that satisfy \eqref{eq:app_transform} for given $F_{\epsilon0}$,
$F_{\nu0}$, and $G$. Since $G$ is assumed to be strictly increasing,
there exists an inverse function $G^{-1}$. Letting $H_{\epsilon0}(\cdot)=F_{\epsilon0}(G^{-1}(\cdot))$
and $H_{\nu0}(\cdot)=F_{\nu0}(G^{-1}(\cdot))$, it is straightforward
to show that $H_{\epsilon0}$ and $H_{\nu0}$ are mappings from $[0,1]$
to $[0,1]$ and that satisfy the relations in \eqref{eq:app_transform}.
Note too that this transformation does not change the identification
results. That is, $F_{0}$ is identified on $\mathbb{R}$ if and only
if $H_{0}$ is identified on $[0,1]$. Assuming that $g$ is bounded
away from zero on $\mathbb{R}$ and bounded above, the unknown density
function $h_{j0}$ can be written as $h_{j0}(x)=\frac{f_{j0}(G^{-1}(x))}{g(G^{-1}(x))}$
for each $j\in\{\epsilon,\nu\}$, which is well-defined on $[0,1]$.
In addition, we can see that $h_{\epsilon0}$ and $h_{\nu0}$ are
identified if and only if the unknown marginal density functions $f_{\epsilon0}$
and $f_{\nu0}$ are identified. 

We note that the choice of $G$ depends on the tail behavior of $f_{\epsilon0}$
and $f_{\nu0}$. If researchers believe that the unknown marginal
density functions have fat tails, then they should choose a distribution
function with fat tails for $G$. On the other hand, one can choose
the logistic or the standard normal distribution function for $G$
when $f_{\epsilon0}$ and $f_{\nu0}$ are likely to have thin tails.
This is because Assumption \ref{as:para_sp} implicitly requires that
the unknown marginal density functions and $g$ decay at the same
rate at the tails. Specifically, we observe that 
\[
h_{\epsilon0}(0)=\lim_{x\rightarrow0^{+}}h_{\epsilon0}(x)=\lim_{t\rightarrow-\infty}\frac{f_{\epsilon0}(t)}{g(t)},
\]
 and the limit exists if the decaying rates are of the same order.
We also provide simulation results to examine how the performance
of our semiparametric estimator varies across the choice of $G$ when
the marginal density functions have fat tails (see Section \eqref{subsec:fat_tail}). 

\subsection{\label{subsec:Tech_Expressions}Technical Expressions}

\subsubsection{H\"older Norm and H\"older Class}

Let $\mathcal{C}^{m}(\mathcal{X})$ be the space of $m$-times continuously
differentiable real-valued functions on $\mathcal{X}$. Let $\zeta\in(0,1]$
and, given a $d$-tuple $\omega$, let $[\omega]=\omega_{1}+...+\omega_{d}$.
Denote the differential operator by $\mathcal{D}$ and let $\mathcal{D}^{\omega}=\frac{\partial^{[\omega]}}{\partial x_{1}^{\omega_{1}}...\partial x_{d}^{\omega_{d}}}$.
Letting $p=m+\zeta$, the H\"older norm of $h\in\mathcal{C}^{m}(\mathcal{X})$
is defined as follows: 

\[
||h||_{\Lambda^{p}}\equiv\sup_{[\omega]\leq m,x}|\mathcal{D}^{\omega}h(x)|+\sup_{[\omega]=m}\sup_{x,y\in\mathcal{X},||x-y||_{E}\neq0}\frac{|\mathcal{D}^{\omega}h(x)-\mathcal{D}^{\omega}h(y)|}{||x-y||_{E}^{\zeta}},
\]
where $\zeta$ is the H\"older exponent. 

A H\"older class with smoothness $p>0$, denoted by $\Lambda^{p}(\mathcal{X})$,
is defined as $\Lambda^{p}(\mathcal{X})\equiv\{h\in\mathcal{C}^{m}(\mathcal{X}):||h||_{\Lambda^{p}}<\infty\}$.
A H\"older ball with radius $R$, $\Lambda_{R}^{p}(\mathcal{X})$,
is defined as $\Lambda_{R}^{p}(\mathcal{X})\equiv\{h\in\Lambda^{p}(\mathcal{X}):||h||_{\Lambda^{p}}\leq R<\infty\}$. 

\subsubsection{Sup-norm and Pseudo-metric $d_{c}$ }

For any $h\in\mathcal{H}_{\epsilon}$ (or $\mathcal{H}_{\nu}$), define
the sup-norm on $\mathcal{H}_{\epsilon}$ (or $\mathcal{H}_{\nu}$)
as follows: 
\[
||h||_{\infty}\equiv\sup_{t\in[0,1]}|h(t)|.
\]
 Let $\theta=(\psi^{'},h_{\epsilon},h_{\nu})^{'}\in\Theta$ be given.
We define the consistency norm $||\cdot||_{c}$ as follows: 
\[
||\theta||_{c}\equiv||\psi||_{E}+||h_{\epsilon}||_{\infty}+||h_{\nu}||_{\mathcal{\infty}},
\]
 where $||\cdot||_{E}$ is the Euclidean norm. The pseudo-metric $d_{c}(\cdot,\cdot):\Theta\times\Theta\rightarrow[0,\infty)$,
which induced by the consistency norm $||\cdot||_{c}$, is defined
as 
\[
d_{c}(\theta_{1},\theta_{2})=||\theta_{1}-\theta_{2}||_{c}.
\]

\subsubsection{$L^{2}$-norm }

\begin{equation}
||\theta-\theta_{0}||_{2}\equiv||\psi-\psi_{0}||_{E}+||h_{\epsilon}-h_{\epsilon0}||_{2}+||h_{\nu}-h_{\nu0}||_{2},\label{eq:cr-norm}
\end{equation}
 where $||h-\tilde{h}||_{2}^{2}\equiv\int_{0}^{1}(h(t)-\tilde{h}(t))^{2}dt$
for any $h,\tilde{h}\in\mathcal{H}$. It is straightforward to show
that $||\theta-\theta_{0}||_{2}\leq d_{c}(\theta,\theta_{0})$, where
$d_{c}(\theta,\theta_{0})=||\psi-\psi_{0}||_{E}+||h_{\epsilon}-h_{\epsilon0}||_{\infty}+||h_{\nu}-h_{\nu0}||_{\infty}$. 

\subsubsection{Fisher inner product and Fisher norm }

Recall that $\mathbb{V}$ is the linear span of $\Theta-\{\theta_{0}\}$.
Define the Fisher inner product on the space $\mathbb{V}$ as 
\[
<v,\tilde{v}>\equiv E\left[(\frac{\partial l(\theta_{0},W)}{\partial\theta}[v])(\frac{\partial l(\theta_{0},W)}{\partial\theta}[\tilde{v}])\right]
\]
 for given $v,\tilde{v}\in\mathbb{V}$. Then, the Fisher norm for
$v\in\mathbb{V}$ is defined as 
\[
||v||^{2}\equiv<v,v>.
\]

\subsubsection{Relationship between the Fisher norm and $L^{2}$-norm }

Note that for any $\theta_{1},\theta_{2}\in\Theta$, we have 
\begin{align}
||\theta_{1}-\theta_{2}||^{2} & =E\left[(\frac{\partial l(\theta_{0},W_{i})}{\partial\theta}[\theta_{1}-\theta_{2}])^{2}\right]\nonumber \\
 & \leq B\left\{ E\left[\{\frac{\partial l(\theta_{0},W_{i})}{\partial\psi^{'}}(\psi_{1}-\psi_{2})\}^{2}\right]+E\left[\{\frac{\partial l(\theta_{0},W_{i})}{\partial h_{\epsilon}}[h_{\epsilon1}-h_{\epsilon2}]\}^{2}\right]+E\left[\{\frac{\partial l(\theta_{0},W_{i})}{\partial h_{\nu}}[h_{\nu1}-h_{\nu2}]\}^{2}\right]\right\} \nonumber \\
 & \leq\tilde{B}||\theta_{1}-\theta_{2}||_{2}^{2}\label{eq:l2-fisher}
\end{align}
for some $B,\tilde{B}>0$ under Assumptions \ref{as:data}, \ref{as:para_sp},
and \ref{as:copula_derivative}. From equation \eqref{eq:l2-fisher},
it is straightforward to see that the convergence rate of the sieve
ML estimator with respect to the Fisher norm $||\cdot||$ is at least
as fast as the convergence rate with respect to the $L^{2}$-norm. 

\subsubsection{Directional derivatives of the log-likelihood function }

Let $r_{10}=F_{\epsilon0}(x^{'}\beta_{0}+\delta_{10})$, $r_{00}=F_{\epsilon0}(x^{'}\beta_{0})$,
and $s_{0}=F_{\nu0}(x^{'}\alpha_{0}+z^{'}\gamma_{0})$. For given
$v=(v_{\psi}^{'},v_{\epsilon},v_{\nu})^{'}\in\mathbb{V}$, we have
\begin{align}
\frac{\partial l(\theta_{0},w)}{\partial\psi^{'}}v_{\psi} & =\sum_{\tilde{y},\tilde{d}\in\{0,1\}}(\mathbf{1}_{\tilde{y},\tilde{d}}\cdot\frac{1}{p_{\tilde{y}\tilde{d},xz}(\theta_{0})}\cdot\frac{\partial p_{\tilde{y}\tilde{d},xz}(\theta_{0})}{\partial\psi^{'}})v_{\psi},\label{eq:direc_derivative_1}
\end{align}
\begin{align}
\frac{\partial l(\theta_{0},w)}{\partial h_{\epsilon}}[v_{\epsilon}] & =\mathbf{1}_{11}(y,d)\times\left[\frac{1}{p_{11,xz}(\theta_{0})}C_{1}(r_{10},s_{0};\rho_{0})\int_{0}^{G(x^{'}\beta_{0}+\delta_{10})}v_{\epsilon}(t)dt\right]\nonumber \\
 & +\mathbf{1}_{10}(y,d)\times\left[\frac{1}{p_{10,xz}(\theta_{0})}\left\{ \left(1-C_{1}(r_{00},s_{0};\rho_{0})\right)\int_{0}^{G(x^{'}\beta_{0})}v_{\epsilon}(t)dt\right\} \right]\nonumber \\
 & +\mathbf{1}_{01}(y,d)\times\left[\frac{1}{p_{01,xz}(\theta_{0})}\left\{ -C_{1}(r_{10},s_{0};\rho_{0})\int_{0}^{G(x^{'}\beta_{0}+\delta_{10})}v_{\epsilon}(t)dt\right\} \right]\nonumber \\
 & +\mathbf{1}_{00}(y,d)\times\left[\frac{1}{p_{00,xz}(\theta_{0})}\left\{ \left(1-C_{1}(r_{00},s_{0};\rho_{0})\right)\int_{0}^{G(x^{'}\beta_{0})}v_{\epsilon}(t)dt\right\} \right],\label{eq:direc_derivative_2}
\end{align}
and 
\begin{align}
\frac{\partial l(\theta{}_{0},w)}{\partial h_{\nu}}[v_{\nu}] & =\left\{ \mathbf{1}_{11}(y,d)\times\frac{1}{p_{11,xz}(\theta_{0})}C_{2}(r_{10},s_{0};\rho_{0})+\mathbf{1}_{10}(y,d)\times\frac{1}{p_{10,xz}(\theta_{0})}(-C_{2}(r_{00},s_{0};\rho_{0}))\right.\nonumber \\
 & \left.+\mathbf{1}_{01}(y,d)\times\frac{1}{p_{01,xz}(\theta_{0})}(1-C_{2}(r_{10},s_{0};\rho_{0}))+\mathbf{1}_{00}(y,d)\times\frac{1}{p_{00,xz}(\theta_{0})}(1-C_{2}(r_{00},s_{0};\rho_{0}))\right\} \nonumber \\
 & \times\int_{0}^{G(x^{'}\alpha_{0}+z^{'}\gamma_{0})}v_{\nu}(t)dt.\label{eq:direc_derivative_3}
\end{align}

\subsubsection{Directional derivative of the ATE}

Let $v=(v_{\psi}^{'},v_{\epsilon},v_{\nu})^{'}\in\mathbb{V}$. Then,
\begin{equation}
\frac{\partial ATE(\theta_{0};x)}{\partial\theta^{'}}[v]=\left\{ f_{\epsilon0}(x^{'}\beta_{0}+\delta_{10})(x^{'}v_{\beta}+v_{\delta})-f_{\epsilon0}(x^{'}\beta_{0})x^{'}v_{\beta}\right\} +\int_{G(x^{'}\beta_{0})}^{G(x^{'}\beta_{0}+\delta_{10})}v_{\epsilon}(t)dt,\label{eq:diff_ATE}
\end{equation}
 where $f_{\epsilon0}(x)=h_{\epsilon0}(G(x))g(x)$.

\subsection{Proof of Theorem \ref{thm:consistency}}

Define $Q_{0}(\theta)\equiv E[l(\theta,W_{i})]$. The following proposition
is a modification of Theorem 3.1 in \citet{chen2007large} and establishes
the consistency of sieve M-estimator.\footnote{See also Remark 3.3 in \citet{chen2007large}.} 

\begin{proposition}\label{prop:consistency} Let $\hat{\theta}_{n}$
be the sieve extremum estimator defined in \eqref{eq:sieve_ML_est}.
Suppose that the following conditions hold : 

(i) $Q_{0}(\theta)$ is uniquely maximized at $\theta_{0}$ in $\Theta$
and $Q_{0}(\theta_{0})>-\infty$;

(ii) $\Theta$ is compact under $d_{c}(\cdot,\cdot)$, and $Q_{0}(\theta)$
is upper semicontinuous on $\Theta$ under $d_{c}(\cdot,\cdot)$; 

(iii) The sieve spaces, $\Theta_{n}$, are compact under $d_{c}(\cdot,\cdot)$
; 

(iv) $\Theta_{k}\subseteq\Theta_{k+1}\subseteq\Theta$ for all $k\geq1$,
and there exists a sequence $\pi_{k}\theta_{0}\in\Theta_{k}$ such
that $d_{c}(\theta_{0},\pi_{k}\theta_{0})\rightarrow0$ as $k\rightarrow\infty$
;

(v) For all $k\geq1$, $p\lim_{n\rightarrow\infty}\sup_{\theta\in\Theta_{k}}|Q_{n}(\theta)-Q_{0}(\theta)|=0$. 

Then, $d_{c}(\hat{\theta}_{n},\theta_{0})=o_{p}(1)$. 

\end{proposition}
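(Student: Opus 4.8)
The plan is to reduce the metric-consistency statement $d_{c}(\hat{\theta}_{n},\theta_{0})=o_{p}(1)$ to a statement about the population criterion $Q_{0}$ at the estimator, namely $Q_{0}(\theta_{0})-Q_{0}(\hat{\theta}_{n})=o_{p}(1)$, and then to deploy a \emph{well-separated maximum} argument. This is the canonical route for sieve extremum estimators and follows the structure of Theorem 3.1 (and Remark 3.3) in \citet{chen2007large}; conditions (i)--(v) are arranged so that every ingredient of that argument is in place.

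First I would extract well-separatedness from conditions (i) and (ii). For any $\epsilon>0$ the set $K_{\epsilon}\equiv\{\theta\in\Theta:d_{c}(\theta,\theta_{0})\geq\epsilon\}$ is a closed subset of the $d_{c}$-compact $\Theta$, hence compact, so the upper semicontinuous $Q_{0}$ attains its supremum over $K_{\epsilon}$ at some $\theta_{\epsilon}$. Since $\theta_{0}$ is the \emph{unique} maximizer of $Q_{0}$ on $\Theta$ and $\theta_{\epsilon}\neq\theta_{0}$, we obtain $\delta(\epsilon)\equiv Q_{0}(\theta_{0})-Q_{0}(\theta_{\epsilon})>0$, whence $\sup_{\theta\in K_{\epsilon}}Q_{0}(\theta)\leq Q_{0}(\theta_{0})-\delta(\epsilon)$. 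Therefore $\{d_{c}(\hat{\theta}_{n},\theta_{0})\geq\epsilon\}\subseteq\{Q_{0}(\theta_{0})-Q_{0}(\hat{\theta}_{n})\geq\delta(\epsilon)\}$, and it suffices to prove $Q_{0}(\theta_{0})-Q_{0}(\hat{\theta}_{n})=o_{p}(1)$.

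The backbone is the inequality chain built on the optimality of $\hat{\theta}_{n}$. Since $\pi_{n}\theta_{0}\in\Theta_{n}$ by condition (iv) and $\hat{\theta}_{n}$ maximizes $Q_{n}$ over $\Theta_{n}$ by \eqref{eq:sieve_ML_est}, we have $Q_{n}(\hat{\theta}_{n})\geq Q_{n}(\pi_{n}\theta_{0})$. Writing
\[
Q_{0}(\theta_{0})-Q_{0}(\hat{\theta}_{n})\leq\bigl[Q_{0}(\theta_{0})-Q_{0}(\pi_{n}\theta_{0})\bigr]+\bigl[Q_{n}(\hat{\theta}_{n})-Q_{0}(\hat{\theta}_{n})\bigr]-\bigl[Q_{n}(\pi_{n}\theta_{0})-Q_{0}(\pi_{n}\theta_{0})\bigr],
\]
I would control the three brackets in turn. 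The first vanishes because $d_{c}(\pi_{n}\theta_{0},\theta_{0})\to0$ (condition (iv)) while $Q_{0}$ is continuous at $\theta_{0}$---the upper semicontinuity in (ii) complemented by lower semicontinuity at $\theta_{0}$, which holds since $l(\cdot,W)$ is smooth in a neighborhood of $\theta_{0}$---so $Q_{0}(\pi_{n}\theta_{0})\to Q_{0}(\theta_{0})$. The third is handled by the sieve uniform law of large numbers in condition (v): along the approximating sequence one fixes a sieve index $k$, invokes $\sup_{\theta\in\Theta_{k}}|Q_{n}-Q_{0}|\overset{p}{\to}0$, and then lets $k\to\infty$ using the nestedness $\Theta_{k}\subseteq\Theta_{k+1}$ from (iv).

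The main obstacle is the second bracket, $Q_{n}(\hat{\theta}_{n})-Q_{0}(\hat{\theta}_{n})$, because $\hat{\theta}_{n}$ is random and lives in the \emph{growing} sieve $\Theta_{n}$, whereas condition (v) only furnishes uniform convergence over each \emph{fixed} $\Theta_{k}$. This is precisely where the compactness hypotheses (ii) and (iii) enter, and I would close the gap by a subsequence argument rather than by a direct uniform bound over $\Theta_{n}$. The brackets already controlled give the lower bound $Q_{n}(\hat{\theta}_{n})\geq Q_{0}(\theta_{0})-o_{p}(1)$; suppose then, toward a contradiction, that $d_{c}(\hat{\theta}_{n},\theta_{0})\not\to0$, so that along a subsequence $d_{c}(\hat{\theta}_{n_{j}},\theta_{0})\geq\epsilon$. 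By compactness of $\Theta$ one extracts a further subsequence with $\hat{\theta}_{n_{j}}\to\theta^{*}\in\Theta$ and $d_{c}(\theta^{*},\theta_{0})\geq\epsilon$; combining the lower bound with upper semicontinuity of $Q_{0}$ and the uniform convergence (v) on a fixed sieve enclosing the limit region forces $Q_{0}(\theta^{*})\geq Q_{0}(\theta_{0})$, contradicting well-separatedness. Rendering this compactness/subsequence step rigorous for the \emph{random} estimator---that is, phrasing it directly as a bound on $\Pr[d_{c}(\hat{\theta}_{n},\theta_{0})\geq\epsilon]$ rather than realization-wise---is the delicate part; once settled, $Q_{0}(\theta_{0})-Q_{0}(\hat{\theta}_{n})=o_{p}(1)$ follows, and with the first step this yields $d_{c}(\hat{\theta}_{n},\theta_{0})=o_{p}(1)$.
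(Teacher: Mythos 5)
Your architecture is the right one, and it is in fact the proof the paper itself points to: the paper gives no proof of Proposition \ref{prop:consistency} beyond citing Theorem 3.1 and Remark 3.3 of \citet{chen2007large}, and your two building blocks---extracting a well-separated maximum from uniqueness plus $d_{c}$-compactness plus upper semicontinuity (your $K_{\epsilon}$ argument is correct, since an u.s.c.\ function attains its supremum on a compact set), followed by the basic inequality $Q_{n}(\hat{\theta}_{n})\geq Q_{n}(\pi_{n}\theta_{0})$ and the three-bracket decomposition---are exactly that canonical route. Within this skeleton, two brackets are not actually covered by conditions (i)--(v). The one you flag yourself is minor: $Q_{0}(\theta_{0})-Q_{0}(\pi_{n}\theta_{0})\rightarrow0$ needs lower semicontinuity of $Q_{0}$ at $\theta_{0}$ along the sieve sequence, and u.s.c.\ gives the wrong inequality; (i)--(v) do not deliver it (take $Q_{0}(\theta_{0})=1$ and $Q_{0}\equiv0$ elsewhere: u.s.c., uniquely maximized, yet $Q_{0}(\pi_{k}\theta_{0})\not\rightarrow Q_{0}(\theta_{0})$ whenever $\theta_{0}\notin\bigcup_{k}\Theta_{k}$). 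Chen's Theorem 3.1 assumes continuity of the criterion at $\theta_{0}$ for precisely this reason, and in the present paper it is supplied in the proof of Theorem \ref{thm:consistency}, where $|Q_{0}(\theta)-Q_{0}(\tilde{\theta})|\leq B\,d_{c}(\theta,\tilde{\theta})$ is derived from \eqref{eq:loglikelihood-supnorm}; your appeal to smoothness of $l(\cdot,W)$ is the same patch, but it is an additional hypothesis, not something readable out of (ii).

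The serious gap is your treatment of the second bracket, $Q_{n}(\hat{\theta}_{n})-Q_{0}(\hat{\theta}_{n})$. The subsequence device cannot close it: from $d_{c}(\hat{\theta}_{n_{j}},\theta^{*})\rightarrow0$ there is no ``fixed sieve enclosing the limit region''---the points $\hat{\theta}_{n_{j}}$ live in the growing spaces $\Theta_{n_{j}}$ and need not belong to any fixed $\Theta_{k}$ (nor need $\theta^{*}$), so condition (v) never applies to them, and upper semicontinuity constrains only $Q_{0}$-values at limit points while saying nothing about the discrepancy $Q_{n}-Q_{0}$ at the estimator. Nor can any argument close it at this level of generality: under the literal fixed-$k$ reading of (v), nothing prevents the sample criterion from being inflated on the newly added part of the sieve (schematically, $Q_{0}$ plus a continuous bump supported on $\Theta_{n}\setminus\Theta_{n-1}$ satisfies (v) for every fixed $k$, since $\Theta_{k}\subseteq\Theta_{n-1}$ eventually, yet drags the maximizer along the shells, away from $\theta_{0}$). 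What is missing is an assumption, not an argument: condition (v) must be read as uniform convergence over the sieve actually used at sample size $n$, i.e.\ $\sup_{\theta\in\Theta_{n}}|Q_{n}(\theta)-Q_{0}(\theta)|=o_{p}(1)$. This is Chen's Condition 3.5, and it is what the paper actually verifies in Lemma \ref{lemma:uniform} via Condition 3.5M---the entropy bound $k_{n}\log(1+32R/\omega)=o(n)$ there has no other purpose. Under that reading your second and third brackets are $o_{p}(1)$ at a stroke and the entire detour disappears. (Your third bracket, incidentally, survives even the literal reading if you compare against a \emph{fixed} $\pi_{k}\theta_{0}$, which is admissible for all $n\geq k$ by nestedness, and send $k\rightarrow\infty$ after $n$; but nothing rescues the second.) So: right structure, correct well-separation step, but the step you call ``the delicate part'' is not delicate---it is absent, and unprovable from (i)--(v) as literally stated.
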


We show that the conditions in Theorem \ref{thm:consistency} imply
those in this proposition to prove consistency of the sieve estimator.
We first need to verify that (i) the true parameter $\theta_{0}$
is the unique maximizer of $Q_{0}(\cdot)$ over $\Theta$ and that
(ii) the sample log-likelihood function $Q_{n}(\cdot)$ uniformly
converges to $Q_{0}(\cdot)$ over the sieve space in probability to
establish the consistency of the sieve ML estimator. The following
lemma shows that if the model with unknown marginal distributions
are identified and some additional conditions are satisfied, then
the true parameter $\theta_{0}$ is the unique maximizer of $Q_{0}(\cdot)$
over $\Theta$. 

\begin{lemma}\label{lemma:unique}Suppose that Assumptions \ref{as:independence}\textendash \ref{as_copula},
\ref{as:add_support}, \ref{as:compact_para} and \ref{as:WD} are
satisfied. Then the condition (i) in Proposition \ref{prop:consistency}
is satisfied. \end{lemma}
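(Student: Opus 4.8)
The plan is to verify condition (i) of Proposition \ref{prop:consistency} in two parts: first that $Q_0(\theta_0)>-\infty$, and second that $\theta_0$ is the \emph{unique} maximizer of $Q_0$ over $\Theta$. For the uniqueness claim I would split the work into showing that $\theta_0$ is \emph{a} maximizer (a population information inequality) and that it is the \emph{only} one (identification). The whole point of the lemma is precisely to reduce ``unique maximizer of the population criterion'' to ``point identification,'' so the identification theorems of Section \ref{sec:Identification} will do the heavy lifting.

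For finiteness I would use Assumption \ref{as:WD}. Since each $p_{yd,XZ}(\theta_0)\in(0,1]$, exactly one indicator $\mathbf{1}_{yd}(Y,D)$ is active and the log-density satisfies $\log\underline{p}(X,Z)\le l(\theta_0,W)\le 0$. Because $E|\log\underline{p}(X,Z)|<\infty$ by Assumption \ref{as:WD}, taking expectations gives $Q_0(\theta_0)=E[l(\theta_0,W)]\in(-\infty,0]$, so in particular $Q_0(\theta_0)>-\infty$.

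For the maximizer property I would condition on $(X,Z)$ and use $E[\mathbf{1}_{yd}(Y,D)\mid X,Z]=p_{yd,XZ}(\theta_0)$ under Assumption \ref{as:independence} to write
\[
Q_0(\theta_0)-Q_0(\theta)=E\left[\sum_{y,d\in\{0,1\}}p_{yd,XZ}(\theta_0)\log\frac{p_{yd,XZ}(\theta_0)}{p_{yd,XZ}(\theta)}\right],
\]
where the inner sum is the Kullback--Leibler divergence between the conditional law of $(Y,D)$ given $(X,Z)$ implied by $\theta_0$ and that implied by $\theta$. By the information (Gibbs') inequality this inner term is nonnegative for each $(X,Z)$, and equals zero if and only if $p_{yd,XZ}(\theta)=p_{yd,XZ}(\theta_0)$ for all $(y,d)$. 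Taking expectation over $(X,Z)$ yields $Q_0(\theta_0)\ge Q_0(\theta)$ for every $\theta\in\Theta$, so $\theta_0$ maximizes $Q_0$; Assumption \ref{as:WD} again guarantees the relevant expectations are well defined, the displayed difference being bounded below by an integrable function.

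Uniqueness is where identification enters and is the main obstacle. If $Q_0(\theta)=Q_0(\theta_0)$, then the nonnegative integrand above must vanish almost surely in $(X,Z)$, forcing $p_{yd,XZ}(\theta)=p_{yd,XZ}(\theta_0)$ for all $(y,d)$, a.s. These fitted probabilities are exactly the objects in \eqref{eq:fitted} that are identified in Theorem \ref{thm_full_unknown}; invoking that result (together with the fact, verified in the online appendix, that the reparametrization \eqref{eq:transform} leaves identification intact) delivers $\psi=\psi_0$ and $h_\epsilon=h_{\epsilon0}$, $h_\nu=h_{\nu0}$, i.e.\ $\theta=\theta_0$. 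The care required here is twofold: one must check that restricting attention to the compact optimization space $\Psi\ni\psi_0$ of Assumption \ref{as:compact_para} does not affect applicability of the identification theorem (it holds on the larger open, convex space), and that the normalizations encoded in $\Theta$---the index normalization of Assumption \ref{as:normalize} and the density constraint $\int_0^1 q=1$ defining $\mathcal{H}$ in Assumption \ref{as:para_sp}---eliminate the residual ``up to additive constants'' indeterminacy in Theorem \ref{thm_full_unknown}, so that equality of the full set of fitted probabilities pins $\theta$ down exactly and the maximizer is unique in $\Theta$.
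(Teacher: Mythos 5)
Your proposal is correct and takes essentially the same approach as the paper: finiteness of $Q_{0}(\theta_{0})$ via Assumption \ref{as:WD}, the information (Jensen/Gibbs) inequality to show $\theta_{0}$ maximizes $Q_{0}$, and point identification (Theorem \ref{thm_full_unknown}) to force $\theta=\theta_{0}$ whenever the fitted probabilities coincide almost surely. The paper merely runs the last step in contrapositive form---for $\theta\neq\theta_{0}$, identification yields fitted probabilities differing on a set of positive $P_{XZ}$-measure, whence $Q_{0}(\theta)<Q_{0}(\theta_{0})$ strictly by Jensen's inequality---which is logically the same argument.
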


\begin{proof} By Theorem \ref{thm_full_unknown}, the model parameter
is identified. Under Assumption \ref{as:WD}, we can see that for
any $\theta\in\Theta$, $|Q_{0}(\theta)|\leq E|l(\theta,W_{i})|\leq\sum_{y,d\in\{0,1\}}E|\log(p_{yd,XZ}(\theta))|<\infty$,
and thus the function $Q_{0}(\theta)$ is well-defined on $\Theta$
and $Q_{0}(\theta)>-\infty$ for all $\theta\in\Theta$; hence $Q_{0}(\theta_{0})>-\infty$.
Since the model is identified, it implies that for $\theta\neq\theta_{0}$,
there exists a set $E\subset\text{supp}(X,Z)$ such that $\int_{E}dP_{XZ}>0$
and for some $y,d\in\{0,1\}$, $\frac{p_{yd,xz}(\theta)}{p_{yd,xz}(\theta_{0})}\neq1$
on $E$, where $P_{XZ}$ is the distribution function of $(X,Z)$.
Thus, we have 
\begin{align*}
Q_{0}(\theta)-Q_{0}(\theta_{0}) & =\int\sum_{y,d\in\{0,1\}}p_{yd,xz}(\theta_{0})\log\left(\frac{p_{yd,xz}(\theta)}{p_{yd,xz}(\theta_{0})}\right)dP_{XZ}<\log\left(\int_{E}\sum_{y,d\in\{0,1\}}p_{yd,xz}(\theta)dP_{XZ}\right)\leq0,
\end{align*}
 where the strict inequality holds by the fact that $p_{yd,xz}(\theta)\neq p_{yd,xz}(\theta_{0})$
on $E$ and Jensen's inequality. Hence, $\theta_{0}$ is the unique
maximizer of $Q_{0}(\cdot)$. \end{proof}

For any $\omega>0$, let $N(\omega,\Theta_{n},d_{c})$ be the covering
numbers without bracketing of $\Theta_{n}$ with respect to the pseudo-metric
$d_{c}$. We now establish the uniform convergence of $Q_{n}(\cdot)$
to $Q_{0}$ over the sieve space. 

\begin{lemma} \label{lemma:uniform}Suppose that Assumptions \ref{as:independence}\textendash \ref{as_copula},
\ref{as:add_support} are satisfied. If Assumptions \ref{as:compact_para}
through \ref{as:copula_derivative} hold, then $\sup_{\theta\in\Theta_{n}}|Q_{n}(\theta)-Q_{0}(\theta)|\overset{p}{\rightarrow}0$
for all $n\geq1$. \end{lemma}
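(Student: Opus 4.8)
The plan is to prove a uniform law of large numbers for $Q_n(\theta)=\frac1n\sum_i l(\theta,W_i)$ over the sieve space $\Theta_n$, treating $\Theta_n$ as a finite-dimensional set whose dimension $k_n$ grows slowly ($k_n/n\to0$ by Assumption~\ref{as:sieve}). The standard route (as in CFT06 and \citet{chen2007large}) is to combine (a) an integrable envelope for the class $\{l(\theta,\cdot):\theta\in\Theta_n\}$, (b) a Lipschitz bound for $l(\theta,w)$ in $\theta$ under $d_c$ with a square-integrable Lipschitz coefficient, and (c) a polynomial-in-$1/\omega$ bound on the covering numbers $N(\omega,\Theta_n,d_c)$. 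The envelope is immediate: by Assumption~\ref{as:WD}, $p_{yd,XZ}(\theta)\ge\underline p(X,Z)$ for all $\theta$, so $|l(\theta,W)|\le\sum_{y,d}|\log p_{yd,XZ}(\theta)|\le 4\,|\log\underline p(X,Z)|$ up to a constant, which is integrable; this also confirms $Q_0(\theta)=E[l(\theta,W)]$ is finite on $\Theta$.

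The crucial step will be the Lipschitz bound. Writing $|\log a-\log b|\le|a-b|/\min\{a,b\}\le|a-b|/\underline p(X,Z)$, I would control $|p_{yd,XZ}(\theta_1)-p_{yd,XZ}(\theta_2)|$ using the structure in \eqref{eq:fitted}, where each $p_{yd}$ is a bounded $C^1$ transform of the copula arguments $r_{d,x}=\int_0^{G(\cdot)}h_\epsilon$ and $s_{xz}=\int_0^{G(\cdot)}h_\nu$. By Assumption~\ref{as:copula_derivative} the partials $C_1,C_2,C_\rho$ are uniformly bounded, so it suffices to bound the variation of $(r,s,\rho)$. Variation in $h_\epsilon,h_\nu$ contributes at most $\|h_{\epsilon1}-h_{\epsilon2}\|_\infty+\|h_{\nu1}-h_{\nu2}\|_\infty$, since the integration domains lie in $[0,1]$ (this is precisely why the sup-norm enters $d_c$). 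Variation in $\psi$ enters through the upper limits $G(x'\beta+\delta_1)$, etc., and through $\rho$; differentiating gives factors $h_j(G(\cdot))\,g(\cdot)$ times the index derivatives, all bounded because $h$ is bounded on the sieve ($\|h\|_\infty<2R^2$ by Assumption~\ref{as:sieve} together with Assumption~\ref{as:para_sp}) and $g$ is bounded. Collecting terms yields a coefficient $L(W)\lesssim \underline p(X,Z)^{-1}\bigl(1+\|(X_i',Z_i')'\|_E\bigr)$ with $|l(\theta_1,W)-l(\theta_2,W)|\le L(W)\,d_c(\theta_1,\theta_2)$. Square-integrability of $L(W)$ then follows from $E[\underline p(X,Z)^{-2}]<\infty$ (Assumption~\ref{as:WD}) and $E\|(X_i',Z_i')'\|_E^2<\infty$ (Assumption~\ref{as:data}) via Cauchy--Schwarz.

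Finally, I would bound the entropy of $\Theta_n=\Psi\times\mathcal{H}_{\epsilon n}\times\mathcal{H}_{\nu n}$: the factor $\Psi$ is compact and finite-dimensional (Assumption~\ref{as:compact_para}), and each sieve factor is a subset of a $k_n$-dimensional linear span with uniformly bounded sup-norm, so $\log N(\omega,\Theta_n,d_c)=O\!\left(k_n\log(1/\omega)\right)$. Combining the square-integrable Lipschitz coefficient with this entropy bound through a standard chaining/symmetrization argument, and using $k_n/n\to0$ so that $\frac1n\log N(1/n,\Theta_n,d_c)\to0$, delivers $\sup_{\theta\in\Theta_n}|Q_n(\theta)-Q_0(\theta)|\overset{p}{\to}0$, which in particular yields condition (v) of Proposition~\ref{prop:consistency} on each fixed $\Theta_k$. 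The hard part is the Lipschitz estimate near the boundary where some $p_{yd,XZ}(\theta)$ approaches zero and both $\log p_{yd}$ and its derivative blow up; this is exactly what the lower envelope $\underline p$ and the moment $E[\underline p^{-2}]<\infty$ in Assumption~\ref{as:WD} are designed to tame, so the whole argument hinges on verifying that these moment conditions propagate through the chain-rule bound above.
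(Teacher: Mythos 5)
Your proposal takes essentially the same route as the paper's own proof: it verifies the same three ingredients---an integrable envelope from Assumption \ref{as:WD}, a Lipschitz bound $|l(\theta,W)-l(\tilde\theta,W)|\le U(W)\,d_c(\theta,\tilde\theta)$ with $U(W)\propto \underline{p}(X,Z)^{-1}\|(X',Z',1)'\|_{E}$ obtained exactly as you describe (copula Lipschitz/bounded partials, mean value theorem on the indices, sup-norm control of the density differences), and the covering-number bound $\log N(\omega,\Theta_n,d_c)\lesssim k_n\log(1/\omega)=o(n)$---which the paper then feeds into Condition 3.5M of \citet{chen2007large} rather than running the chaining/symmetrization argument by hand. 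Even your slightly loose square-integrability claim for the Lipschitz coefficient (from $E[\underline{p}^{-2}]<\infty$ and $E\|(X',Z')'\|_E^2<\infty$) mirrors the paper's own assertion, so the two proofs coincide in substance.
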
 

\begin{proof} We verify Condition 3.5M in \citet{chen2007large}.
Let $B$ stand for a generic constant and it can be different in each
place. By Assumptions \ref{as:WD} and \ref{as:data}, the first condition
in Condition 3.5M is satisfied. Let $n\geq1$ be a natural number
and $\theta,\tilde{\theta}\in\Theta_{n}$. Define $R_{1}(\theta)=F_{\epsilon}(X^{'}\beta+\delta_{1})$,
$R_{0}(\theta)=F_{\epsilon}(X^{'}\beta)$, and $S(\theta)=F_{\nu}(X^{'}\alpha+Z^{'}\gamma)$.
Similarly, we define $R_{1}(\tilde{\theta})=\tilde{F}_{\epsilon}(X^{'}\tilde{\beta}+\tilde{\delta}_{1})$,
$R_{0}(\tilde{\theta})=\tilde{F}_{\epsilon}(X^{'}\tilde{\beta})$,
and $S(\tilde{\theta})=\tilde{F}_{\nu}(X^{'}\tilde{\alpha}+Z^{'}\tilde{\gamma})$.
For the simplicity of the notations, we write $R_{j}=R_{j}(\theta)$,
$\tilde{R}_{j}=R_{j}(\tilde{\theta})$, $S=S(\theta)$, and $\tilde{S}=S(\tilde{\theta})$
for all $j=0,1$. Observe that 
\begin{align*}
|p_{11,XZ}(\theta)-p_{11,XZ}(\tilde{\theta})| & =|C(R_{1},S;\rho)-C(\tilde{R}_{1},\tilde{S};\tilde{\rho})|\\
 & \leq|C(R_{1},S;\rho)-C(\tilde{R}_{1},\tilde{S};\rho)|+|C(\tilde{R}_{1},\tilde{S};\rho)-C(\tilde{R}_{1},\tilde{S};\tilde{\rho})|\\
 & \leq|R_{1}-\tilde{R}_{1}|+|S-\tilde{S}|+|C_{\rho}(\tilde{R}_{1},\tilde{S};\hat{\rho})||\rho-\tilde{\rho}|\\
 & \leq|R_{1}-\tilde{R}_{1}|+|S-\tilde{S}|+B|\rho-\tilde{\rho}|,
\end{align*}
 where $C_{\rho}(\cdot,\cdot;\cdot)$ is the partial derivative of
$C(\cdot,\cdot;\cdot)$ with respect to $\rho$ and $\hat{\rho}$
is between $\rho$ and $\tilde{\rho}$ and $B<\infty$. Note that
the last inequality holds due to a generic property of copulas (see,
e.g. Theorem 2.2.4 in \citet{nelsen1999introduction}) and the mean
value theorem. We also have 
\begin{align}
|R_{1}-\tilde{R}_{1}| & =\left|F_{\epsilon}(X^{'}\beta+\delta_{1})-\tilde{F}_{\epsilon}(X^{'}\tilde{\beta}+\tilde{\delta}_{1})\right|\nonumber \\
 & \leq\left|F_{\epsilon}(X^{'}\beta+\delta_{1})-F_{\epsilon}(X^{'}\tilde{\beta}+\tilde{\delta}_{1})\right|+\left|F_{\epsilon}(X^{'}\tilde{\beta}+\tilde{\delta}_{1})-\tilde{F}_{\epsilon}(X^{'}\tilde{\beta}+\tilde{\delta}_{1})\right|\nonumber \\
 & \leq\left|f_{\epsilon}(X^{'}\hat{\beta}+\hat{\delta}_{1})\right|\cdot\left|X^{'}(\beta-\tilde{\beta})+(\delta_{1}-\tilde{\delta}_{1})\right|+\int_{0}^{G(X^{'}\tilde{\beta}+\tilde{\delta}_{1})}\left|h_{\epsilon}(t)-\tilde{h}_{\epsilon}(t)\right|dt\nonumber \\
 & \leq\sup_{x\in\mathbb{R}}|h_{\epsilon}(G(x))g(x)|\times||(X^{'},1)^{'}||_{E}\cdot||\psi-\tilde{\psi}||_{E}+||h_{\epsilon}-\tilde{h}_{\epsilon}||_{\infty}\nonumber \\
 & \leq B\times||(X^{'},1)^{'}||_{E}\times||(\beta^{'},\delta_{1})^{'}-(\tilde{\beta}^{'},\tilde{\delta}_{1})^{'}||_{E}+||h_{\epsilon}-\tilde{h}_{\epsilon}||_{\infty},\label{eq:r1-domination}
\end{align}
 for some constant $B<\infty$. Similarly, we can show that 
\begin{equation}
|R_{0}-\tilde{R}_{0}|\leq B\times||X||_{E}\times||\beta-\tilde{\beta}||_{E}+||h_{\epsilon}-\tilde{h}_{\epsilon}||_{\infty}\label{eq:r0-domination}
\end{equation}
 and 
\begin{equation}
|S-\tilde{S}|\leq B\times||(X^{'},Z^{'})^{'}||_{E}\times||(\alpha^{'},\gamma^{'})^{'}-(\tilde{\alpha}^{'},\tilde{\gamma}^{'})^{'}||_{E}+||h_{\nu}-\tilde{h}_{\nu}||_{\infty}.\label{eq:s-domination}
\end{equation}
Note that, for any comparable subvectors $\psi_{s}$ and $\tilde{\psi}_{s}$
of $\psi$ and $\tilde{\psi}$, respectively, we have $||\psi_{s}-\tilde{\psi}_{s}||_{E}\leq||\psi-\tilde{\psi}||_{E}$
and that, for any subvector $W_{s}$ of $W$, we have $||W_{S}||_{E}\leq||W||_{E}$
a.s. Thus we have 
\begin{align*}
|p_{11,XZ}(\theta)-p_{11,XZ}(\tilde{\theta})| & \leq B||(X^{'},1)^{'}||_{E}\cdot||\psi-\tilde{\psi}||_{E}+||h_{\epsilon}-\tilde{h}_{\epsilon}||_{\infty}\\
 & \leq B||(X^{'},1)^{'}||_{E}d_{c}(\theta,\tilde{\theta}).
\end{align*}
Consequently, it follows that
\begin{align*}
|p_{10,XZ}(\theta)-p_{10,XZ}(\tilde{\theta})| & \leq|R_{0}-\tilde{R}_{0}|+|C(R_{0},S;\rho)-C(\tilde{R}_{0},\tilde{S};\tilde{\rho})|\\
 & \leq2|R_{0}-\tilde{R}_{0}|+|S-\tilde{S}|+B|\rho-\tilde{\rho}|\\
 & \leq B\{||X||_{E}||\beta-\tilde{\beta}||_{E}+||(X^{'},Z^{'})^{'}||_{E}||(\alpha^{'},\gamma^{'})^{'}-(\tilde{\alpha}^{'},\tilde{\gamma}^{'})^{'}||_{E}\\
 & +||h_{\epsilon}-\tilde{h}_{\epsilon}||_{\infty}+||h_{\nu}-\tilde{h}_{\nu}||_{\infty}+|\rho-\tilde{\rho}|\}\\
 & \leq B\cdot||(X^{'},Z^{'},1)^{'}||_{E}d_{c}(\theta,\tilde{\theta}),\\
|p_{01,XZ}(\theta)-p_{01,XZ}(\tilde{\theta})| & \leq2|S-\tilde{S}|+|R_{1}-\tilde{R}_{1}|+B|\rho-\tilde{\rho}|\\
 & \leq B||(X^{'},Z^{'},1)^{'}||_{E}d_{c}(\theta,\tilde{\theta}),\\
|p_{00,XZ}(\theta)-p_{00,XZ}(\tilde{\theta})| & \leq|p_{11,XZ}(\theta)-p_{11,XZ}(\tilde{\theta})|+|p_{10,XZ}(\theta)-p_{10,XZ}(\tilde{\theta})|+|p_{01,XZ}(\theta)-p_{01,XZ}(\tilde{\theta})|\\
 & \leq B||(X^{'},Z^{'},1)^{'}||_{E}d_{c}(\theta,\tilde{\theta}).
\end{align*}
 In all, we have 
\begin{align}
|l(\theta,W_{i})-l(\tilde{\theta},W_{i})| & \leq{\textstyle \sum\limits _{y,d=0,1}}\mathbf{1}_{yd}(Y_{i},D_{i})\cdot\left|\log p_{yd}(X_{i},Z_{i};\theta)-\log p_{yd}(X_{i},Z_{i};\tilde{\theta})\right|\nonumber \\
 & \leq\frac{1}{\underline{p}(X_{i},Z_{i})}\sum_{y,d=0,1}\mathbf{1}_{yd}(Y_{i},D_{i})\left|p_{yd}(X_{i},Z_{i};\theta)-p_{yd}(X_{i},Z_{i};\tilde{\theta})\right|\nonumber \\
 & \leq\frac{B}{\underline{p}(X_{i},Z_{i})}||(X_{i}^{'},Z_{i}^{'},1)^{'}||_{E}d_{c}(\theta,\tilde{\theta})\nonumber \\
 & \equiv U(W_{i})d_{c}(\theta,\tilde{\theta}),\label{eq:loglikelihood-supnorm}
\end{align}
 where $E[U(W_{i})^{2}]<\infty$ by Assumptions \ref{as:WD} and \ref{as:data}.
This results in 
\begin{equation}
\sup_{\theta,\tilde{\theta}\in\Theta_{n},d_{c}(\theta,\tilde{\theta})\leq\epsilon_{0}}\left|l(\theta,W_{i})-l(\tilde{\theta},W_{i})\right|\leq U(W_{i})\epsilon_{0}\label{eq:domination}
\end{equation}
 and thus the second condition in Condition 3.5M is satisfied with
$s=1$. 

For the last condition in Condition 3.5M, note that for any $\omega>0$,
we have 
\begin{align*}
N(\omega,\Theta_{n},d_{c}) & \leq N(\frac{\omega}{2},\Psi,||\cdot||_{E})\cdot N(\frac{\omega}{4},\mathcal{H}_{\epsilon n},||\cdot||_{\infty})\cdot N(\frac{\omega}{4},\mathcal{H}_{\nu n},||\cdot||_{\infty}).
\end{align*}
 By Lemma 2.5 in \citet{geer2000empirical}, we have $\log N\left(\frac{\omega}{4},\mathcal{H}_{\epsilon n},||\cdot||_{\infty}\right)\leq k_{n}\log\left(1+\frac{32R}{\omega}\right)$
under Assumption \ref{as:sieve}-(i); and hence 
\begin{align*}
\log N\left(\omega,\Theta_{n},d_{c}\right) & \leq const.\times k_{n}\times\log\left(1+\frac{32R}{\omega}\right)=o(n)
\end{align*}
 if $k_{n}/n\rightarrow0$. Since the condition $k_{n}/n=o(1)$ is
imposed by Assumption \ref{as:sieve}-(i), the last condition in Condition
3.5M is also satisfied. In all, we have the uniform convergence of
$Q_{n}$ to $Q_{0}$ over $\Theta_{k}$. \end{proof}

To finish proving Theorem \ref{thm:consistency}, we verify the conditions
in Proposition \ref{prop:consistency}. By Lemmas \ref{lemma:unique}
and \ref{lemma:uniform}, the conditions (i) and (v) in Proposition
\ref{prop:consistency} are satisfied. Using \eqref{eq:loglikelihood-supnorm}
and Jensen's inequality, we can see that, for any $\theta,\tilde{\theta}\in\Theta$,
\begin{align*}
|Q_{0}(\theta)-Q_{0}(\tilde{\theta})| & \leq E|l(\theta,W_{i})-l(\tilde{\theta},W_{i})|\leq E[U(W_{i})]d_{c}(\theta,\tilde{\theta})=B\cdot d_{c}(\theta,\tilde{\theta})
\end{align*}
 for some $B<\infty$. Thus, $Q_{0}(\cdot)$ is continuous with respect
to $d_{c}$. Note that since the parameter space of the finite-dimensional
parameter $\psi$, $\Psi$, is assumed to be compact in Assumption
\ref{as:compact_para}, the original parameter space $\Theta$ is
compact under the $d_{c}$, by Theorems 1 and 2 in \citet{freyberger2015compactness},
and thus the conditions (ii) and (iii) are satisfied with the specified
parameter space and the norm. Since the condition (iv) is directly
imposed, we have $d(\hat{\theta}_{n},\theta_{0})=o_{p}(1)$ by Proposition
\ref{prop:consistency}. 

\subsection{Proof of Theorem \ref{thm:convergence_rate}}

To establish the convergence rate with respect to the norm $||\cdot||_{2}$,
we consider the following assumption: 

\begin{assumption}\label{as:KL-L2} Let $K(\theta_{0},\theta)\equiv E[l(\theta_{0},W_{i})-l(\theta,W_{i})]$.
Then, there exist $B_{1},B_{2}>0$ such that 
\[
B_{1}K(\theta_{0},\theta)\leq||\theta-\theta_{0}||_{2}^{2}\leq B_{2}K(\theta_{0},\theta)
\]
 for all $\theta\in\Theta_{n}$ with $d_{c}(\theta,\theta_{0})=o(1)$.
\end{assumption}

Assumption \ref{as:KL-L2} implies that the $L^{2}$-norm $||\cdot||_{2}$
and the square-root of the KL divergence are equivalent.

We derive the convergence rate of the sieve M-estimator with respect
to the norm $||\cdot||_{2}$ by checking the conditions in Theorem
3.2 in \citet{chen2007large}. Since $\{W_{i}\}_{i=1}^{n}$ is assumed
to be i.i.d by Assumption \ref{as:data}, Condition 3.6 in \citet{chen2007large}
is satisfied. For Condition 3.7 in \citet{chen2007large}, we note
that for a small $\epsilon_{1}>0$ and for any $\theta\in\Theta_{n}$
such that $||\theta-\theta_{0}||_{2}\leq\epsilon_{1}$, we have 
\begin{align*}
Var\left(l(\theta,W_{i})-l(\theta_{0},W_{i})\right) & \leq E\left[l(\theta,W_{i})-l(\theta_{0},W_{i})\right]^{2}\\
 & \leq E\left[\frac{1}{\underline{p}(X_{i},Z_{i})^{2}}\sum_{y,d=0,1}\mathbf{1}_{yd}(Y_{i},D_{i})|p_{yd}(X_{i},Z_{i};\theta)-p_{yd}(X_{i},Z_{i};\theta_{0})|^{2}\right]\\
 & \leq E\left[\frac{1}{\underline{p}(X_{i},Z_{i})^{2}}\sum_{y,d\in\{0,1\}}|p_{yd}(X_{i},Z_{i};\theta)-p_{yd}(X_{i},Z_{i};\theta_{0})|^{2}\right].
\end{align*}
 By the same logic in \eqref{eq:loglikelihood-supnorm}, we have 
\[
Var\left(l(\theta,W_{i})-l(\theta_{0},W_{i})\right)\leq E\left[U(W_{i})^{2}\right]d_{c}(\theta,\theta_{0})^{2}.
\]
Note that 
\begin{align*}
d_{c}(\theta,\theta_{0})^{2} & =(||\psi-\psi_{0}||_{E}+||h_{\epsilon}-h_{\epsilon0}||_{\infty}+||h_{\nu}-h_{\nu0}||_{\infty})^{2}\\
 & \leq4(||\psi-\psi_{0}||_{E}^{2}+||h_{\epsilon}-h_{\epsilon0}||_{\infty}^{2}+||h_{\nu}-h_{\nu0}||_{\infty}^{2}).
\end{align*}
 By Lemma 2 in \citet{chen1998sieve}, we have 
\begin{equation}
||h_{j}-h_{j0}||_{\infty}^{2}\leq||h_{j}-h_{j0}||_{2}^{\frac{4p}{2p+1}}\label{eq:gabushin}
\end{equation}
 for all $j\in\{\epsilon,\nu\}$. Since $\frac{4p}{2p+1}>1$ under
Assumption \ref{as:para_sp}, we can show that 
\[
\sup_{\{\theta\in\Theta_{n}:||\theta-\theta_{0}||_{2}\leq\epsilon_{1}\}}Var\left(l(\theta,W_{i})-l(\theta_{0},W_{i})\right)\leq B_{1}\epsilon_{1}^{2}
\]
 with $\epsilon_{1}\leq1$ and some constant $B_{1}$, and thus Condition
3.7 in \citet{chen2007large} is satisfied. 

We recall equation \eqref{eq:loglikelihood-supnorm} to verify Condition
3.8 in \citet{chen2007large}. Let $\epsilon_{2}>0$ be given and
consider 
\begin{align}
|l(\theta,W_{i})-l(\theta_{0},W_{i})| & \leq U(W_{i})d_{c}(\theta,\theta_{0})\nonumber \\
 & =U(W_{i})\left\{ ||\psi-\psi_{0}||_{E}+||h_{\epsilon}-h_{\epsilon0}||_{\infty}+||h_{\nu}-h_{\nu0}||_{\infty}\right\} \nonumber \\
 & \leq U(W_{i})\left\{ ||\psi-\psi_{0}||_{E}+||h_{\epsilon}-h_{\epsilon0}||_{2}^{\frac{2p}{2p+1}}+||h_{\nu}-h_{\nu0}||_{2}^{\frac{2p}{2p+1}}\right\} \nonumber \\
 & \leq U(W_{i})\left\{ ||\psi-\psi_{0}||_{E}^{\frac{2p+1}{2p}}+||h_{\epsilon}-h_{\epsilon0}||_{2}+||h_{\nu}-h_{\nu0}||_{2}\right\} ^{\frac{2p}{2p+1}}\nonumber \\
 & \leq U(W_{i})\left\{ ||\psi-\psi_{0}||_{E}\times(\sup_{\psi\in\Psi}||\psi||_{E}+||\psi_{0}||_{E})^{\frac{1}{2p}}+||h_{\epsilon}-h_{\epsilon0}||_{2}+||h_{\nu}-h_{\nu0}||_{2}\right\} ^{\frac{2p}{2p+1}}\nonumber \\
 & \leq\tilde{U}(W_{i})\left\{ ||\psi-\psi_{0}||_{E}+||h_{\epsilon}-h_{\epsilon0}||_{2}+||h_{\nu}-h_{\nu0}||_{2}\right\} ^{\frac{2p}{2p+1}},\label{eq:cond3.8}
\end{align}
 where $\tilde{U}(W_{i})=\max\{1,(\sup_{\psi\in\Psi}||\psi||_{E}+||\psi_{0}||_{E})^{\frac{1}{2p}}\}\times U(W_{i})$.
Since the parameter space for $\psi$, $\Psi$, is compact under Assumption
\ref{as:compact_para}, $E[\tilde{U}(W_{i})^{2}]<\infty$. Thus, we
have 
\[
\sup_{\{\theta\in\Theta_{n}:||\theta-\theta_{0}||_{2}\leq\epsilon_{2}\}}\left|l(\theta,W_{i})-l(\theta_{0},W_{i})\right|\leq\epsilon_{2}^{\frac{2p}{2p+1}}\tilde{U}(W_{i})
\]
 with $E[\tilde{U}_{i}(W_{i})^{2}]<\infty$ and this implies that,
under Assumption \ref{as:para_sp}, Condition 3.8 in \citet{chen2007large}
is satisfied with $s=\frac{2p}{2p+1}\in(0,2)$ and $\gamma=2$. 

Let $\mathcal{L}_{n}\equiv\{l(\theta_{0},W_{i})-l(\theta,W_{i}):\theta\in\Theta_{n},||\theta-\theta_{0}||_{2}\leq\epsilon_{2}\}$.
For given $\omega>0$, let $N_{[]}(\omega,\mathcal{L}_{n},||\cdot||_{L^{2}})$
be the covering number with bracketing of $\mathcal{L}_{n}$ with
respect to the norm $||\cdot||_{L^{2}}$. We now need to calculate
$\kappa_{n}$ which is defined as 
\[
\kappa_{n}\equiv\inf\left\{ \kappa\in(0,1):\frac{1}{\sqrt{n}\kappa^{2}}\int_{b\kappa^{2}}^{\kappa}\sqrt{H_{[]}(\omega,\mathcal{L}_{n},||\cdot||_{L^{2}})}d\omega\leq const.\right\} ,
\]
 where, for $f\in\mathcal{L}_{n}$, $||f(\theta,W_{i})||_{L^{2}}^{2}\equiv E[f(\theta,W_{i})^{2}]$
is the $L^{2}$-norm on $\mathcal{L}_{n}$ and $H_{[]}(\omega,\mathcal{L}_{n},||\cdot||_{L^{2}})$
is the $L_{2}$-metric entropy with bracketing of the class $\mathcal{L}_{n}$
(see \citet{vvw96} or \citet{geer2000empirical} for the definition
of $L_{2}$-metric entropy with bracketing). Let $B_{0}=E[U(W_{i})^{2}]$,
where $U(W_{i})$ is the same to the one in \eqref{eq:loglikelihood-supnorm}.
By Theorem 2.7.11 in \citet{vvw96} and equation \eqref{eq:loglikelihood-supnorm},
we can show that 
\begin{align*}
N_{[]}\left(\omega,\mathcal{L}_{n},||\cdot||_{L^{2}}\right) & \leq N\left(\frac{\omega}{2B_{0}},\Theta_{n},d_{c}\right)\\
 & \leq N\left(\frac{\omega}{4B_{0}},\Psi,||\cdot||_{E}\right)\cdot N\left(\frac{\omega}{8B_{0}},\mathcal{H}_{\epsilon n},||\cdot||_{\infty}\right)\cdot N\left(\frac{\omega}{8B_{0}},\mathcal{H}_{\nu n},||\cdot||_{\infty}\right),
\end{align*}
 and this leads to 
\begin{align*}
H_{[]}\left(\omega,\mathcal{L}_{n},||\cdot||_{L^{2}}\right) & =\log\left(N_{[]}\left(\omega,\mathcal{L}_{n},||\cdot||_{L^{2}}\right)\right)\leq const.\times k_{n}\times\log(1+\frac{64B_{0}R}{\omega}).
\end{align*}
 In all, $\kappa_{n}$ solves 
\begin{align*}
\frac{1}{\sqrt{n}\kappa_{n}^{2}}\int_{b\kappa_{n}^{2}}^{\kappa_{n}}\sqrt{H_{[]}(\omega,\mathcal{L}_{n},||\cdot||_{L^{2}})}d\omega & \leq\frac{const.}{\sqrt{n}\kappa_{n}^{2}}\int_{b\kappa_{n}^{2}}^{\kappa_{n}}\sqrt{k_{n}\cdot\log(1+\frac{64B_{0}R}{\omega})}d\omega\\
 & \leq\frac{const.}{\sqrt{n}\kappa_{n}^{2}}\sqrt{k_{n}}\int_{b\kappa_{n}^{2}}^{\kappa_{n}}\sqrt{\frac{1}{\omega}}d\omega\leq const.\times\frac{1}{\sqrt{n}\kappa_{n}^{2}}\sqrt{k_{n}}\kappa_{n}\leq const.,
\end{align*}
and thus $\kappa_{n}\propto\sqrt{\frac{k_{n}}{n}}$. 

Lastly, since $||\theta_{0}-\pi_{n}\theta_{0}||_{2}\leq||\theta_{0}-\pi_{n}\theta_{0}||_{c}=O(k_{n}^{-p})$
by \citet{lorentz1966approximation}, we have 
\[
||\hat{\theta}_{n}-\theta_{0}||_{2}=O_{p}\left(\max\left\{ \sqrt{\frac{k_{n}}{n}},k_{n}^{-p}\right\} \right)
\]
 by Theorem 3.2 in \citet{chen2007large}. By choosing $k_{n}\propto n^{\frac{1}{2p+1}}$,
we have 
\[
||\hat{\theta}_{n}-\theta_{0}||_{2}=O_{p}\left(n^{-\frac{p}{2p+1}}\right).
\]

\subsection{Proof of Proposition \ref{prop:asym general}}

We first provide some technical assumptions for the asymptotic normality.
Let $\mu_{n}(g)=\frac{1}{n}\sum_{i=1}^{n}\{g(W_{i})-E[g(W_{i})]\}$
be the empirical process indexed by $g$. Let the convergence rate
of the sieve estimator be $\delta_{n}$ (i.e., $||\hat{\theta}_{n}-\theta_{0}||=O_{p}(\delta_{n})$). 

\begin{assumption}\label{as:secondorder} There exist $\xi_{1}>0$
and $\xi_{2}>0$ with $2\xi_{1}+\xi_{2}<1$ and a constant $K$, such
that $(\delta_{n})^{3-(2\xi_{1}+\xi_{2})}=o(n^{-1})$. In addition,
the following hold for all $\tilde{\theta}\in\Theta_{n}$ with $||\tilde{\theta}-\theta_{0}||\leq\delta_{n}$,
and all $v\in\mathbb{V}$ with $||v||\leq\delta_{n}$: 

(i) $\left|E\left[\frac{\partial^{2}l(\tilde{\theta},W)}{\partial\psi\partial\psi^{'}}-\frac{\partial^{2}l(\theta_{0},W)}{\partial\psi\partial\psi^{'}}\right]\right|<K\left\Vert \tilde{\theta}-\theta_{0}\right\Vert ^{1-\xi_{2}}$; 

(ii) $\left|E\left[\sum_{j\in\{\epsilon,\nu\}}\left\{ \frac{\partial^{2}l(\tilde{\theta},W)}{\partial\psi\partial h_{j}}[v_{j}]-\frac{\partial^{2}l(\theta_{0},W)}{\partial\psi\partial h_{j}}[v_{j}]\right\} \right]\right|\leq K\left\Vert v\right\Vert ^{1-\xi_{1}}\left\Vert \tilde{\theta}-\theta_{0}\right\Vert ^{1-\xi_{2}}$; 

(iii) $\left|E\left[\sum_{i,j\in\{\epsilon,\nu\}}\left\{ \frac{\partial^{2}l(\tilde{\theta},W)}{\partial h_{i}\partial h_{j}}[v,v]-\frac{\partial^{2}l(\theta_{0},W)}{\partial h_{i}\partial h_{j}}[v,v]\right\} \right]\right|\leq K||v||^{2(1-\xi_{1})}||\tilde{\theta}-\theta_{0}||^{1-\xi_{2}}$. 

\end{assumption}

\begin{assumption}\label{as:emp_process} The following hold:

(i) $\sup_{\theta\in\Theta_{n}:||\theta-\theta_{0}||=O(\delta_{n})}\mu_{n}\left(\frac{\partial l(\theta,W)}{\partial\psi^{'}}-\frac{\partial l(\theta_{0},W)}{\partial\psi^{'}}\right)=o_{p}\left(n^{-\frac{1}{2}}\right)$
; 

(ii) For all $j\in\{\epsilon,\nu\}$, $\sup_{\theta\in\Theta_{n}:||\theta-\theta_{0}||=O(\delta_{n})}\mu_{n}\left(\frac{\partial l(\theta,W)}{\partial h_{j}}[\pi_{n}v_{j}^{*}]-\frac{\partial l(\theta_{0},W)}{\partial h_{j}}[\pi_{n}v_{j}^{*}]\right)=o_{p}\left(n^{-\frac{1}{2}}\right)$. 

\end{assumption} 

Assumptions \ref{as:secondorder} and \ref{as:emp_process} are modifications
of Assumptions 5 and 6 in CFT06, which are needed to control for the
second-order expansion of the log-likelihood function $l(\theta,W)$.
Under Assumption \ref{as:copula_second}, these conditions require
that the unknown marginal density functions be sufficiently smooth.
For example, the sieve estimator needs to converge at a faster rate
than $1/(3-(2\xi_{1}+\xi_{2}))$ to satisfy $(\delta_{n})^{3-(2\xi_{1}+\xi_{2})}=o(n^{-1})$.
Usually, the convergence rate depends positively on the smoothness
parameter $p$ in Assumption \ref{as:para_sp} and thus the class
of models should be restricted to that in which the density functions
are sufficiently smooth. 

Note that since the sieve ML estimator $\hat{\theta}_{n}$ is consistent
with respect to the pseudo-metric $d_{c}$ by Theorem \ref{thm:consistency},
it is consistent with respect to the norm $||\cdot||_{2}$ and thus
with respect to the Fisher norm by equation \eqref{eq:l2-fisher}.
We also point out that $||\hat{\theta}_{n}-\theta_{0}||=O_{p}(n^{-\frac{p}{2p+1}})$
by equation \eqref{eq:l2-fisher} and Theorem \ref{thm:convergence_rate}
under the given set of Assumptions. We follow the proof of Theorem
1 in CFT06. Assumptions 1 and 2 in CFT06 are implied by Assumption
\ref{as:independence}-\ref{as_copula}, \ref{as:add_support}-\ref{as:WD},
and \ref{as:copula_second}. The first two parts in Assumption \ref{as:smooth_T}
correspond to Assumption 3 in CFT06. Since $p>1/2$ by Assumption
\ref{as:para_sp}, $||\hat{\theta}_{n}-\theta_{0}||=o_{p}(n^{-1/4})$
by Theorem \ref{thm:convergence_rate} and this implies that $||\hat{\theta}_{n}-\theta_{0}||\times||\pi_{n}v^{*}-v^{*}||=o(n^{-1/2})$
under Assumption \ref{as:smooth_Riesz}. In addition, since $w>1+\frac{1}{2p}$,
$\delta_{n}^{w}=o(n^{-1/2})$ by that $||\hat{\theta}_{n}-\theta_{0}||=O_{p}(n^{-\frac{p}{2p+1}})$.
Hence, Assumptions 3 and 4 in CFT06 are satisfied. 

Define $r[\theta,\theta_{0},W_{i}]\equiv l(\theta,W_{i})-l(\theta_{0},Z_{i})-\frac{\partial l(\theta_{0},W_{i})}{\partial\theta^{'}}[\theta-\theta_{0}]$
and $\xi_{0}=2\xi_{1}+\xi_{2}$. Let $\zeta_{n}$ be a positive sequence
with $\zeta_{n}=o(n^{-1/2})$ and $(\delta_{n})^{3-(2\xi_{1}+\xi_{2})}=\zeta_{n}o(n^{-1/2})$.
Then we have 
\begin{align}
0 & \leq\frac{1}{n}\sum_{i=1}^{n}l(\hat{\theta}_{n},W_{i})-l(\hat{\theta}_{n}\pm\zeta_{n}\pi_{n}v^{*},W_{i})\leq\mp\zeta_{n}\frac{1}{n}\sum_{i=1}^{n}\frac{\partial l(\theta_{0},W_{i})}{\partial\theta^{'}}[\pi_{n}v^{*}]\nonumber \\
 & +\mu_{n}(r[\hat{\theta}_{n},\theta_{0},W_{i}]-r[\hat{\theta}_{n}\pm\zeta_{n}\pi_{n}v^{*},\theta_{0},W_{i}])+E[r[\hat{\theta}_{n},\theta_{0},W_{i}]-r[\hat{\theta}_{n}\pm\zeta_{n}\pi_{n}v^{*},\theta_{0},W_{i}]].\label{eq:AN0}
\end{align}

We first note that, by Assumption \ref{as:smooth_Riesz}, 
\begin{align}
E\left[\frac{1}{n}\sum_{i=1}^{n}\frac{\partial l(\theta_{0},W_{i})}{\partial\theta^{'}}[\pi_{n}v^{*}-v^{*}]\right]^{2} & \leq\frac{1}{n}E\left[\left\{ \frac{\partial l(\theta_{0},W_{i})}{\partial\theta^{'}}[\pi_{n}v^{*}-v^{*}]\right\} ^{2}\right]\nonumber \\
 & =\frac{1}{n}||\pi_{n}v^{*}-v^{*}||^{2}=o(n^{-1}),\label{eq:AN1}
\end{align}
 and hence $\frac{1}{n}\sum_{i=1}^{n}\frac{\partial l(\theta_{0},W_{i})}{\partial\theta^{'}}[\pi_{n}v^{*}-v^{*}]=o_{p}(n^{-1/2})$. 

Observe that, by the mean value theorem, 
\begin{align}
E\left[r[\theta,\theta_{0},W_{i}]\right] & =E\left[l(\theta,W_{i})-l(\theta_{0},W_{i})-\frac{\partial l(\theta_{0},W_{i})}{\partial\theta^{'}}[\theta-\theta_{0}]\right]\nonumber \\
 & =E\left[\frac{1}{2}\frac{\partial^{2}l(\theta_{0},W_{i})}{\partial\theta\partial\theta^{'}}[\theta-\theta_{0},\theta-\theta_{0}]\right]\nonumber \\
 & +\frac{1}{2}E\left[\frac{\partial^{2}l(\tilde{\theta},W_{i})}{\partial\theta\partial\theta^{'}}[\theta-\theta_{0},\theta-\theta_{0}]-\frac{\partial^{2}l(\theta_{0},W_{i})}{\partial\theta\partial\theta^{'}}[\theta-\theta_{0},\theta-\theta_{0}]\right],\label{eq:AN1-1}
\end{align}
 where $\theta,\tilde{\theta}\in\Theta_{n}$ and $\tilde{\theta}$
is between $\theta$ and $\theta_{0}$. In addition, for any $v=(v_{\psi}^{'},v_{\epsilon},v_{\nu})^{'}\in\mathbb{V}$
and $\tilde{\theta}\in\Theta_{n}$ with $||\tilde{\theta}-\theta_{0}||=O(\delta_{n})$,
we have 
\begin{align*}
E\left[\frac{\partial^{2}l(\tilde{\theta},W_{i})}{\partial\theta\partial\theta^{'}}[v,v]-\frac{\partial^{2}l(\theta_{0},W_{i})}{\partial\theta\partial\theta^{'}}[v,v]\right] & =v_{\psi}^{'}E\left[\frac{\partial^{2}l(\tilde{\theta},W_{i})}{\partial\psi\partial\psi^{'}}-\frac{\partial^{2}l(\theta_{0},W_{i})}{\partial\psi\partial\psi^{'}}\right]v_{\psi}\\
 & +\sum_{j\in\{\epsilon,\nu\}}2v_{\theta}^{'}E\left[\frac{\partial^{2}l(\tilde{\theta},W_{i})}{\partial\psi\partial h_{j}}[v_{j}]-\frac{\partial^{2}l(\theta_{0},W_{i})}{\partial\psi\partial h_{j}}[v_{j}]\right]\\
 & +\sum_{k\in\{\epsilon,\nu\}}\sum_{j\in\{\epsilon,\nu\}}E\left[\frac{\partial^{2}l(\tilde{\theta},W_{i})}{\partial h_{k}\partial h_{j}}[v_{k},v_{j}]-\frac{\partial^{2}l(\theta_{0},W_{i})}{\partial h_{k}\partial h_{j}}[v_{k},v_{j}]\right],
\end{align*}
 and this term can be controlled under Assumption \ref{as:secondorder}
in the same way of CFT06. This leads us to that 
\begin{align}
E[r[\hat{\theta}_{n},\theta_{0},W_{i}]-r[\hat{\theta}_{n}\pm\zeta_{n}\pi_{n}v^{*},\theta_{0},W_{i}]] & =-\frac{1}{2}(||\hat{\theta}_{n}-\theta_{0}||^{2}-||\hat{\theta}_{n}\pm\zeta_{n}\pi_{n}v^{*}-\theta_{0}||)+\zeta_{n}o(n^{-1/2})\nonumber \\
 & =\pm\zeta_{n}\times<\hat{\theta}_{n}-\theta_{0},v^{*}>+\zeta_{n}o(n^{-1/2})\label{eq:AN2}
\end{align}
 because we have $<\hat{\theta}_{n}-\theta_{0},\pi_{n}v^{*}-v^{*}>=o_{p}(n^{-1/2})$
and $||\pi_{n}v^{*}||^{2}\rightarrow||v^{*}||^{2}<\infty$. 

We also have that 
\begin{align*}
 & \mu_{n}\left(r[\hat{\theta}_{n},\theta_{0},W_{i}]-r[\hat{\theta}_{n}\pm\zeta_{n}\pi_{n}v^{*},\theta_{0},W_{i}]\right)\\
= & \mu_{n}\left(l(\hat{\theta}_{n},W_{i})-l(\hat{\theta}_{n}\pm\zeta_{n}\pi_{n}v^{*},W_{i})-\frac{\partial l(\theta_{0},W_{i})}{\partial\theta^{'}}[\mp\zeta_{n}\pi_{n}v^{*}]\right)\\
= & \mp\zeta_{n}\cdot\mu_{n}\left(\frac{\partial l(\tilde{\theta},W_{i})}{\partial\theta^{'}}[\pi_{n}v^{*}]-\frac{\partial l(\theta_{0},W_{i})}{\partial\theta^{'}}[\pi_{n}v^{*}]\right),
\end{align*}
 where $\tilde{\theta}\in\Theta_{n}$ is between $\hat{\theta}_{n}$
and $\hat{\theta}_{n}\pm\zeta_{n}\pi_{n}v^{*}$. By Assumption \ref{as:emp_process},
we have 
\begin{equation}
\mu_{n}\left(r[\hat{\theta}_{n},\theta_{0},W_{i}]-r[\hat{\theta}_{n}\pm\zeta_{n}\pi_{n}v^{*},\theta_{0},W_{i}]\right)=o_{p}(\zeta_{n}n^{-1/2}).\label{eq:AN3}
\end{equation}

Combining equations \eqref{eq:AN0} through \eqref{eq:AN3} with the
fact that $E\left[\frac{\partial l(\theta_{0},W_{i})}{\partial\theta^{'}}[v^{*}]\right]=0$,
we have 
\begin{align*}
0 & \leq\frac{1}{n}\sum_{i=1}^{n}l(\hat{\theta}_{n},W_{i})-l(\hat{\theta}_{n}\pm\zeta_{n}\pi_{n}v^{*},W_{i})\\
 & =\mp\zeta_{n}\cdot\mu_{n}\left(\frac{\partial l(\theta_{0},W_{i})}{\partial\theta^{'}}[v^{*}]\right)\pm\zeta_{n}<\hat{\theta}_{n}-\theta_{0},v^{*}>+\zeta_{n}\cdot o_{p}(n^{-1/2}),
\end{align*}
 and this results in that 
\begin{align*}
\sqrt{n}<\hat{\theta}_{n}-\theta_{0},v^{*}> & =\sqrt{n}\mu_{n}\left(\frac{\partial l(\theta_{0},W_{i})}{\partial\theta^{'}}[v^{*}]\right)+o_{p}(1)\overset{d}{\rightarrow}\mathcal{N}\left(0,||v^{*}||^{2}\right).
\end{align*}
 By Assumption \ref{as:smooth_T}, we have 
\[
\sqrt{n}\left(T(\hat{\theta}_{n})-T(\theta_{0})\right)=\sqrt{n}<\hat{\theta}_{n}-\theta_{0},v^{*}>\overset{d}{\rightarrow}\mathcal{N}\left(0,||v^{*}||^{2}\right)
\]
 by the same way in CFT06. 

\subsection{Proof of Theorem \ref{thm:AN_psi}}

Define 
\begin{equation}
\mathcal{S}_{\psi_{0}}^{'}\equiv\frac{\partial l(\theta_{0},W)}{\partial\psi^{'}}-\left\{ \frac{\partial l(\theta_{0},W)}{\partial h_{\epsilon}}[b_{\epsilon}^{*}]+\frac{\partial l(\theta_{0},W)}{\partial h_{\nu}}[b_{\nu}^{*}]\right\} ,\label{eq:score-1}
\end{equation}
 where $b_{\epsilon}^{*}=(b_{\epsilon1}^{*},...,b_{\epsilon d_{\psi}}^{*})\in\Pi_{k=1}^{d_{\psi}}(\mathcal{H}_{\epsilon}-\{h_{\epsilon0}\})$
and $b_{\nu}^{*}=(b_{\nu1}^{*},...,b_{\nu d_{\psi}}^{*})\in\Pi_{k=1}^{d_{\psi}}(\mathcal{H_{\nu}}-\{h_{\nu0}\})$
are the solutions to the following optimization problems for $k=1,2,...,d_{\psi}$:
\[
\inf_{(b_{\epsilon k},b_{\nu k})\in\bar{\mathbb{V}}_{\epsilon}\times\bar{\mathbb{V}}_{\nu}}E\left[\left(\frac{\partial l(\theta_{0},W)}{\partial\theta_{k}}-\left\{ \frac{\partial l(\theta_{0},W)}{\partial h_{\epsilon}}[b_{\epsilon k}]+\frac{\partial l(\theta_{0},W)}{\partial h_{\nu}}[b_{\nu k}]\right\} \right)^{2}\right].
\]

We consider the following assumption to establish the asymptotic normality
for $\psi_{0}$. 

\begin{assumption}\label{as:nonsingular} $\mathcal{I}_{*}(\psi_{0})\equiv E[\mathcal{S}_{\psi_{0}}\mathcal{S}_{\psi_{0}}^{'}]$
is non-singular. \end{assumption}

To prove Theorem \ref{thm:AN_psi}, take any arbitrary $\lambda\in\mathbb{R}^{d_{\psi}}-\{0\}$
with $|\lambda|\in(0,\infty)$ and let $T:\Theta\rightarrow\mathbb{R}$
be a functional of the form $T(\theta)=\lambda^{'}\psi$. Then, for
any $v\in\mathbb{V}$, we have $\frac{\partial T(\theta_{0})}{\partial\theta}[v]=\lambda^{'}v_{\psi}$
and there exist a small $\eta>0$ such that $||v||\leq\eta$ and a
constant $\tilde{c}>0$ such that 
\begin{equation}
\left|T(\theta_{0}+v)-T(\theta_{0})-\frac{\partial T(\theta_{0})}{\partial\theta}\right|\leq\tilde{c}||v||^{w}\label{eq:smoothT-1}
\end{equation}
with $w=\infty$. Therefore, Assumption \ref{as:smooth_T}-(i) is
satisfied with $w=\infty$ in this case. In addition, we have 
\begin{align*}
\sup_{v\in\mathbb{V}:||v||>0}\frac{|\lambda^{'}v_{\psi}|^{2}}{||v||^{2}} & =\sup_{v\in\mathbb{V}:||v||>0}\frac{|\lambda^{'}v_{\psi}|^{2}}{E\left[(\frac{\partial l(\theta_{0},W)}{\partial\psi^{'}}v_{\psi}+\sum_{j\in\{\epsilon,\nu\}}\frac{\partial l(\theta_{0},W)}{\partial h_{j}}[v_{j}])^{2}\right]}\\
 & =\lambda^{'}E[\mathcal{S}_{\psi_{0}}\mathcal{S}_{\psi_{0}}^{'}]^{-1}\lambda=\lambda^{'}\mathcal{I}_{*}(\theta_{0})^{-1}\lambda.
\end{align*}
 Note that the Riesz representer $v^{*}$ exists if and only if $\lambda^{'}E[\mathcal{S}_{\psi_{0}}\mathcal{S}_{\psi_{0}}^{'}]^{-1}\lambda$
is finite. Since Assumption \ref{as:nonsingular} implies that $\lambda^{'}E[\mathcal{S}_{\psi_{0}}\mathcal{S}_{\psi_{0}}^{'}]^{-1}\lambda$
is finite, Assumption \ref{as:smooth_T}-(ii) holds. Hence, by Proposition
\ref{prop:asym general}, we have 
\[
\sqrt{n}\left(\lambda^{'}\hat{\psi}_{n}-\lambda^{'}\psi_{0}\right)\overset{d}{\rightarrow}\mathcal{N}\left(0,\lambda^{'}\mathcal{I}_{*}(\psi_{0})^{-1}\lambda\right).
\]
Since $\lambda$ was arbitrary, we obtain the result by Cram\'er-Wold
device. 

\subsection{Hölder ball }

Suppose that $h\in\Lambda_{R}^{p}([0,1])$, where $p=m+\zeta$, $m\geq0$
is an integer and $\zeta\in(0,1]$ is the Hölder exponent. We want
to show that $h^{2}\in\Lambda_{\tilde{R}}^{p}([0,1])$, where $\tilde{R}=R^{2}2^{m+1}$
. Recall that $\mathcal{D}$ is the differential operator. We note
that $||h||_{\infty}\leq R$ and thus $\sup_{x}|\mathcal{D}^{\omega}h(x)|\leq R$
for all $\omega\leq m$. By Leibniz's formula, we have 
\begin{align*}
\left|\mathcal{D}^{\omega}h^{2}(x)\right| & =\left|\sum_{\iota\leq\omega}\begin{pmatrix}\omega\\
\iota
\end{pmatrix}\mathcal{D}^{\iota}h\mathcal{D}^{\omega-\iota}h\right|\leq R^{2}\sum_{\iota\leq\omega}\begin{pmatrix}\omega\\
\iota
\end{pmatrix}=R^{2}2^{\omega}\leq K^{2}2^{m}<\infty
\end{align*}
 for all $\omega\leq m$. Observe that, by Leibniz's formula, for
any $x,y\in[0,1]$ with $x\neq y$, 
\begin{align*}
\left|\mathcal{D}^{m}h^{2}(x)-\mathcal{D}^{m}h^{2}(y)\right| & =\left|\sum_{\omega\leq m}\begin{pmatrix}m\\
\omega
\end{pmatrix}\mathcal{D}^{\omega}h(x)\mathcal{D}^{m-\omega}h(x)-\sum_{\omega\leq m}\begin{pmatrix}m\\
\omega
\end{pmatrix}\mathcal{D}^{\omega}h(y)\mathcal{D}^{m-\omega}h(y)\right|\\
 & \leq\left|\sum_{\omega\leq m}\begin{pmatrix}m\\
\omega
\end{pmatrix}\mathcal{D}^{\omega}h(x)\mathcal{D}^{m-\omega}h(x)-\sum_{\omega\leq m}\begin{pmatrix}m\\
\omega
\end{pmatrix}\mathcal{D}^{\omega}h(y)\mathcal{D}^{m-\omega}h(x)\right|\\
 & \ \ +\left|\sum_{\omega\leq m}\begin{pmatrix}m\\
\omega
\end{pmatrix}\mathcal{D}^{\omega}h(y)\mathcal{D}^{m-\omega}h(x)-\sum_{\omega\leq m}\begin{pmatrix}m\\
\omega
\end{pmatrix}\mathcal{D}^{\omega}h(y)\mathcal{D}^{m-\omega}h(y)\right|\\
 & \leq2\times\{\sup_{\omega\leq m}\sup_{x}\left|\mathcal{D}^{\omega}h(x)\right|\}\times\left|\sum_{\omega\leq m}\begin{pmatrix}m\\
\omega
\end{pmatrix}\{\mathcal{D}^{\omega}h(x)-\mathcal{D}^{\omega}h(y)\}\right|\\
 & \leq2R\sum_{\omega\leq m}\begin{pmatrix}m\\
\omega
\end{pmatrix}\left|\mathcal{D}^{\omega}h(x)-\mathcal{D}^{\omega}h(y)\right|.
\end{align*}
 We also have that, for all $\omega<m$, 
\begin{align*}
\frac{|\mathcal{D}^{\omega}h(x)-D^{\omega}h(y)|}{|x-y|^{\zeta}} & =\frac{|\mathcal{D}^{\omega}h(x)-\mathcal{D}^{\omega}h(y)|}{|x-y|}|x-y|^{1-\zeta}=|\mathcal{D}^{\omega+1}h(\tilde{x})||x-y|^{1-\zeta}\leq R,
\end{align*}
 where $\tilde{x}$ is between $x$ and $y$. Note that $\zeta\in(0,1]$
and thus $|x-y|^{1-\zeta}\leq1$ for all $x,y\in[0,1]$. Since $h\in\Lambda_{R}^{p}([0,1])$,
we have $\frac{|\mathcal{D}^{m}h(x)-\mathcal{D}^{m}h(y)|}{|x-y|^{\zeta}}\leq R$.
Hence, 
\begin{align*}
\frac{|\mathcal{D}^{m}h^{2}(x)-\mathcal{D}^{m}h^{2}(y)|}{|x-y|^{\zeta}} & \leq2R\sum_{\omega\leq m}\begin{pmatrix}m\\
\omega
\end{pmatrix}\frac{|\mathcal{D}^{\omega}h(x)-\mathcal{D}^{\omega}h(y)|}{|x-y|^{\zeta}}\leq2R^{2}\sum_{\omega\leq m}\begin{pmatrix}m\\
\omega
\end{pmatrix}=R^{2}2^{m+1}<\infty,
\end{align*}
 and this implies that $h^{2}\in\Lambda_{\tilde{R}}^{p}([0,1])$ with
$\tilde{R}=R^{2}2^{m+1}$.

\section{\label{sec:add_simulation}Additional Simulation Results }

\subsection{\label{subsec:larger_sample}A Larger Sample Size}

Tables \ref{tab:correct_1000} and \ref{tab:mar_1000} show the simulation
results with a larger sample size ($n=1000$). We can see that the
main findings in the main text remain the same even with this larger
sample size. 

\subsection{\label{subsec:copula_mis}Copula and Marginal Misspecification }

We consider the simulation results when both the copula and the marginal
distributions are misspecified, reported in Tables \ref{tab:cop1_mar_500}\textendash \ref{tab:cop4_mar_500}
and \ref{tab:cop1_mar_1000}\textendash \ref{tab:cop4_mar_1000}.
If both the copula and the marginal distributions are misspecified,
the performance of the parametric ML estimators are comparable to,
or slightly worse than that under marginal misspecification. Consider,
for example, the case where the true copula function is the Frank
copula and the sample size is 500. The estimators of $\psi$ under
both the copula and marginal misspecification (Table \ref{tab:cop2_mar_500})
have slightly larger root mean squared errors (RMSEs) than the corresponding
estimators under the marginal misspecification (Table \ref{tab:mar_500}).
On the other hand, the performance of the estimators of the ATE varies
across copula specifications. In particular, when the true data generating
process (DGP) is based on the Gumbel copula, the copula and marginal
misspecification has a significant effect on the performance of the
parametric estimators of the ATE. The RMSEs of the estimators of the
ATE under the copula and marginal misspecification (Table \ref{tab:cop4_mar_500})
are larger than those under the marginal misspecification (Table \ref{tab:mar_500}).
Specifically, the RMSE of the parametric estimator of the ATE under
the marginal misspecification is 0.1637 (Table \ref{tab:mar_500}),
whereas the RMSEs of the corresponding estimators under both the copula
and marginal misspecification are 0.1835, 0.2178, and 0.2732 when
the Gaussian, Frank, and Clayton copulas are used, respectively (Table
\ref{tab:cop4_mar_500}). On the other hand, there is no clear evidence
that the performance of the sieve ML estimators under both the copula
and marginal misspecification is worse than that under misspecification
of the marginal distributions. For example, when the true copula belongs
to the Frank family but the copula is specified as the Gaussian or
Gumbel copula, we can see that the RMSEs of the sieve ML estimators
of the finite-dimensional parameters other than $\gamma$ and the
ATE under the copula and marginal misspecification (Table \ref{tab:cop2_mar_500})
are lower than those under the marginal misspecification (Table \ref{tab:mar_500}).
In contrast, we can see from the same tables that the Clayton copula
specification draws the opposite conclusion when the true copula is
the Frank. In general, no matter whether the copula is misspecified,
we find that the sieve ML estimators outperform the parametric estimators
in terms of the RMSE when the marginal distributions are misspecified. 

\subsection{\label{subsec:fat_tail}Unknown Marginal Density Functions with Fat
Tails}

We examine the finite sample performance of the sieve ML estimator
of $\theta_{0}$ when the unknown marginal density functions $f_{\epsilon0}$
and $f_{\nu0}$ have fat tails. We consider the $t$ distribution
with 3 degree of freedom as the true marginal distributions. While
the marginal distributions in the parametric models are specified
by normal distributions, we consider two specifications for the semiparametric
models. These specifications differ in the choice of $G$: we choose
the standard normal distribution and the distribution function of
$t(3)$ for $G$ in the first and second specifications, respectively.
All simulation results are obtained with 500 observations and 2000
simulation iterations. 

Table \ref{tab:t3} presents simulation results. While the parametric
estimates have larger standard deviations, the biases of the semiparametric
estimates are larger than those of the parametric estimates. However,
the resulting RMSEs of the semiparametric estimates are slightly larger
than those of the parametric estimates. This is because the semiparametric
specification does not satisfy the assumptions required for the asymptotic
theory. 

Table \ref{tab:t3_G} shows simulation results where $G$ is the distribution
function of $t(3)$.  The performance of semiparametric estimator
is comparable to that of parametric estimator in terms of the RMSE.
The biases of semiparametric estimates in Table \ref{tab:t3_G} are
much smaller than those in Table \ref{tab:t3}, and the standard deviations
of semiparametric estimates are very similar to those of parametric
estimates. 

The simulation results in Tables \ref{tab:t3} and \ref{tab:t3_G}
suggest that if a researcher has a prior belief about the tail behavior
of the unknown marginal density functions, it should be reflected
in the choice of $G$ for semiparametric models. If it is believed
that the marginal density functions have fat tails, one may choose
a distribution function with fat tails for $G$, such as the distribution
function of $t(3)$. 

\subsection{\label{subsec:diff_dependece}Different Degrees of Dependence }

Tables \ref{tab:correct_02} through \ref{tab:mis_neg} provide simulation
results across various degrees of dependence between $\epsilon$ and
$\nu$. The dependence measure is unified into the Spearman's $\rho$,
and we consider cases of $\rho_{sp}\in\{-0.5,0.2,0.7\}$.\footnote{Note that we only consider the Gaussian and Frank copulas for $\rho_{sp}=-0.5$
as the Clayton or the Gumbel copula does not allow for negative dependence.} We find that regardless of degrees of dependence, the results in
our main paper remains the same: (i) the performance of the semiparametric
estimator is comparable to that of the parametric estimator under
correct specification, (ii) the semiparametric estimators outperform
the parametric estimators under misspecification of the marginals.

\subsection{\label{subsec:cp_boot}Coverage Probabilities of Bootstrap Confidence
Intervals}

We conduct simulations to investigate coverage probabilities of bootstrap
confidence intervals (CIs). We consider the following design: 
\begin{align*}
Y_{i} & =\mathbf{1}\{-X_{1i}+X_{2i}\beta+D_{i}\delta\geq\epsilon_{i}\},\quad D_{i}=\mathbf{1}\{-X_{1i}+X_{2i}\alpha+Z_{i}\gamma\geq\nu_{i}\},
\end{align*}
 where $(\alpha,\gamma,\beta,\delta)=(0.5,0.8,0.8,1.1)$ and $(\epsilon,\nu)$
are generated from the Gaussian copula and normal marginals with $\rho_{sp}=0.5$.
$(X_{1i},X_{2i},Z_{i})$ is drawn from a multivariate normal distribution.
Note that the coefficients on $X_{1i}$ are fixed for scale normalization.
The sample size, number of bootstrap iterations, and number of simulations
are 500, 200, and 200, respectively. We consider two types of CIs:
(i) CIs using the normal approximation, (ii) the percentile bootstrap
CIs. 

Table \ref{tab:cp_boot} presents the coverage probabilities of both
CIs. We find that the bootstrap percentile CIs performs better than
the CIs based on the normal approximation and that their coverage
probabilities are close to the nominal level (95\%).

\newpage{}

\begin{table}[H]
\caption{\label{tab:correct_1000}Correct Specification ($n=1,000$) (True
marginal: normal)}

\bigskip{}

\centering{}{\small{}}%

\end{table}
\end{appendix}
\end{document}